\newtheorem{theorem}{Theorem}[section]
\newtheorem{lemma}[theorem]{Lemma}
\newtheorem{definition}[theorem]{Definition}
\newtheorem{claim}[theorem]{Claim}
\newtheorem{corollary}[theorem]{Corollary}
\newtheorem{observation}[theorem]{Observation}
\newtheorem{remark}[theorem]{Remark}
\newcommand{\set}[1]{\left\{ #1 \right\}}
\newcommand{\mc}{Max-Cut\xspace}
\newcommand{\mdc}{Max-DiCut\xspace}
\newcommand{\mSAT}{Max-SAT\xspace}
\newcommand{\mtwoSAT}{Max-$2$SAT\xspace}
\newcommand{\mbisec}{Max-Bisection\xspace}
\newcommand{\cn}{Cut Norm\xspace}
\newcommand{\ma}{Max-Agreement\xspace}
\newcommand{\vc}{Vertex Cover\xspace}
\newcommand{\ftc}{$AFTcut$\xspace}
\newcommand{\aftv}{FT value\xspace}
\newcommand{\akftv}{$k$-FT value\xspace}
\newcommand{\oftv}{$k$-FT value\xspace}
\newcommand{\la}{\leftarrow}
\newcommand{\sm}{-}
\newcommand{\asmc}{\alpha_{SMC}}
\newcommand{\agw}{\alpha_{GW}}
\newcommand{\kftc}{$k$-$AFTcut$\xspace}
\newcommand{\smc}{Simultaneous Max-Cut\xspace}
\newcommand{\xftc}[1]{$#1$-$AFTcut$\xspace}
\newcommand{\oblftc}{$OFTcut$\xspace}
\newcommand{\koblftc}{$k$-$OFTcut$\xspace}
\newcommand{\mexp}[1]{\mathop{\mathbb{E}}\left[#1\right]}
\newcommand{\mexpl}[2]{\underset{#1}{\mathop{\mathbb{E}}}\left[#2\right]}
\newcommand{\mpr}[1]{\mathop{\mathbb{P}}\left[#1\right]}
\newcommand{\sv}{S\oplus v}
\newcommand{\md}{\ell} 
\newcommand{\sol}{\widetilde{S}}
\newcommand{\pstar}{{C_{S^*_{smc}-k\times H}}}
\newcommand{\ptilde}{{C_{\sol-k\times H}}}
\newcommand \X {\hat H}
\DeclareMathOperator*{\argmax}{arg\,max}
\title{Fault Tolerant Max-Cut\thanks{This project has received funding from the European Research Council (ERC), under the European Unions Horizon 2020 research and innovation programme under grant agreement No 755839, and from Israel Science Foundation (ISF), under grant No 1336/16.}}
\author{Keren Censor-Hillel\\
Technion\\
\texttt{ckeren@cs.technion.ac.il}
\and
Noa Marelly\\
Technion\\
\texttt{noa.marelly@gmail.com}
\and
Roy Schwartz\\
Technion\\
\texttt{schwartz@cs.technion.ac.il}
\and
Tigran Tonoyan\\
Technion\\
\texttt{ttonoyan@gmail.com}
}
\date{}
\begin{document}
\maketitle

\begin{abstract}

In this work, we initiate the study of fault tolerant \mc, where given an edge-weighted undirected graph $G=(V,E)$, the goal is to find a cut $S\subseteq V$ that maximizes the total weight of edges that cross $S$ even after an adversary removes $k$ vertices from $G$. 
We consider two types of adversaries: an adaptive adversary that sees the outcome of the random coin tosses used by the algorithm, and an oblivious adversary that does not.
For any constant number of failures $k$ we present an approximation of $(0.878-\epsilon)$ against an adaptive adversary and of $\agw\approx 0.8786$ against an oblivious adversary (here $\agw$ is the approximation achieved by the random hyperplane algorithm of [Goemans-Williamson J. ACM `95]).
Additionally, we present a hardness of approximation of $ \agw$ against both types of adversaries, rendering our results (virtually) tight.

The non-linear nature of the fault tolerant objective makes the design and analysis of algorithms harder when compared to the classic \mc.
Hence, we employ approaches ranging from multi-objective optimization to LP duality and the ellipsoid algorithm to obtain our results.

\end{abstract}

\section{Introduction}
In this work, we initiate the study of \emph{fault tolerant \mc}.
In the classic \mc problem, we are given an undirected graph $G=(V,E)$ equipped with non-negative edge weights $w:E\rightarrow \mathbb{R}_+$.
The goal is to find a cut $ S\subseteq V$ that maximizes the total weight of edges that cross $S$.
\mc is one of Karp’s 21 NP-complete problems \cite{Karp_np_complete} and has been for close to three decades a case study for the introduction of new approaches both in the theory of algorithms and the complexity theory.
Perhaps the two most prominent examples of the above are: $(1)$ the random hyperplane rounding method of Goemans and Williamson for semi-definite programs \cite{DBLP:journals/jacm/GoemansW95}, which yields an approximation of $\agw\approx 0.8786$ for \mc; and $(2)$ the Unique Games Conjecture of Khot \cite{DBLP:conf/stoc/Khot02a}.
The former has opened an entirely new area in the field of approximation algorithms with applications to a wide range of problems, {\em e.g.}, \mdc \cite{feige1995approximating,DBLP:conf/random/MatuuraM01, DBLP:conf/ipco/LewinLZ02}, \mbisec \cite{AustrinBG16,RaghavendraT12}, \ma \cite{CharikarGW05,Swamy04}, \mtwoSAT \cite{feige1995approximating,DBLP:conf/ipco/LewinLZ02}, \mSAT \cite{asano2002improved,DBLP:conf/waoa/AvidorBZ05}, and \cn \cite{AlonNaor}, to name a few.
The latter has been a dominant method for proving hardness of approximation results in the last two decades, {\em e.g.}, the celebrated tight hardness for \mc \cite{DBLP:journals/siamcomp/KhotKMO07,MOO}, and \vc \cite{DBLP:journals/jcss/KhotR08}.

Motivated by large scale real life systems, fault tolerant algorithms seek to find a solution to a given optimization problem that is resilient to failures of some parts of the input.
The above can be intuitively formulated as a two step process: $(1)$ the algorithm finds a solution to the problem at hand; and $(2)$ an adversary removes parts of the input.
The goal of the algorithm is that no matter which part of the input the adversary removes, the remaining solution after removal still retains some desired properties despite the removal.
Typically, the focus of fault tolerance has been network design problems, {\em e.g.}, 
BFS \cite{DBLP:journals/corr/abs-1302-5401, DBLP:conf/soda/ParterP14, DBLP:conf/podc/Parter15, DBLP:conf/spaa/ParterP15, DBLP:conf/icalp/GuptaK17} and spanners \cite{DBLP:journals/algorithmica/LevcopoulosNS02, DBLP:conf/podc/DinitzK11, DBLP:conf/stoc/Solomon14, DBLP:conf/wdag/Parter14,DBLP:conf/soda/BodwinDPW18, DBLP:conf/podc/BodwinP19, DBLP:conf/icalp/BodwinGPW17}.
Additional related algorithmic problems for which fault tolerant algorithms were studied include, {\em e.g.}, single source reachability \cite{DBLP:conf/wdag/BaswanaCR15, DBLP:conf/stoc/BaswanaCR16}, connected dominating set \cite{DBLP:journals/informs/BuchananSBP15, DBLP:journals/informs/ZhouZTHD18}, and facility location \cite{DBLP:journals/algorithmica/JainV03, DBLP:journals/jal/GuhaMM03, DBLP:journals/talg/SwamyS08, DBLP:journals/tcs/ChechikP14}.

In this work, we initiate the study of fault tolerant \mc, where the adversary can remove vertices from the graph (all edges touching the removed vertices are also deleted).
Intuitively, fault tolerant \mc can be seen as a two players game, in which one player (the algorithm) chooses a cut and the other player (the adversary) removes up to a prespecified number $k$ of vertices.
The algorithm desires to  maximize the total weight of edges crossing the cut, while the adversary aims to minimize the total weight of edges crossing the cut.

We study two types of adversaries.
The first is an \emph{adaptive adversary} that chooses which $k$ vertices to fail \emph{after} seeing the cut the algorithm produces.
Specifically, the adaptive adversary knows the input, how the algorithm operates, and if the algorithm is randomized, the adaptive adversary also knows the outcome of all random coin tosses used by the algorithm. The second type of adversary is an \emph{oblivious adversary}.
Similarly to the adaptive adversary, the oblivious adversary knows the input and how the algorithm operates.
However, in contrast to the adaptive adversary, the oblivious adversary does not know the outcome of the random coin tosses used by the algorithm, in case the latter is randomized (equivalently, the oblivious adversary only knows the distribution over cuts the algorithm produces).
Thus, the oblivious adversary is required to choose which $k$ vertices to fail without the knowledge of which cut was sampled.
To the best of our knowledge only adaptive adversaries were studied in the fault tolerance literature.

~\\\textbf{The Challenges.} 
The fault tolerant \mc problem differs considerably from classic \mc for several reasons.
First, the structure of the solutions may be different.
Specifically, there are instances for which an optimal solution to fault tolerant \mc is not an optimal solution to classic \mc, and vice versa.
Furthermore, it might be the case that the ratio between the values of the optimal solutions is large or even unbounded. 

Second, the application of known techniques (which can be successfully applied to  \mc) to fault tolerant \mc imposes some obstacles that arise from the non-linear nature of the fault tolerant objective.
For example, the random hyperplane rounding method of Goemans and Williamson cannot be analyzed in a straightforward manner as one is required to lower bound the expectation of the minimum value (over all possible actions of the adversary) of the cut the random hyperplane defines, as opposed to just the expected value of the cut the random hyperplane defines.
Moreover, even analyzing the simplest known algorithm for \mc, {\em i.e.}, choosing a uniform random cut, requires great care (refer to Section \ref{sec:randomcut} for further details).
Hence, the design and analysis of algorithms for fault tolerant \mc requires some new insights into the problem.

\subsection{Our Contributions}

\subparagraph*{Adaptive Adversary.}

When focusing on an adaptive adversary, our main result is an (almost) tight approximation of $0.878-\epsilon$, for any constant number $k$ of failures and unweighted graphs.
This is summarized in the following theorem (it is important to note that the constant in the  theorem is slightly smaller than the Goemans-Williamson approximation factor $\agw$).
\begin{restatable}{theorem}{thmkftc}\label{thm:kftc} For every  constant $k > 0$ and $\epsilon > 0$, there is a polynomial time $(0.878-\epsilon)$-approximation algorithm for fault tolerant \mc  on unweighted graphs against an adaptive adversary and $k$ faults.
\end{restatable}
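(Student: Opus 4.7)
The plan is to phrase fault tolerant \mc as a simultaneous \mc problem over the $O(n^k)$ subgraphs obtained by deleting $k$-subsets of vertices, and then to solve this via LP duality and the ellipsoid method with Goemans-Williamson rounding as an approximate separation oracle.

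For each failure set $H\subseteq V$ with $|H|=k$, let $G_H=G[V\setminus H]$ and $f_H(S)$ the number of edges in $G_H$ crossed by $S$; the fault-tolerant value of $S$ is $\min_H f_H(S)$. A key structural observation I would exploit is that $\text{MaxCut}(G_H)\geq OPT$ for every $H$: the restriction of the optimal fault tolerant cut $S^*$ to $V\setminus H$ gives a cut of $G_H$ of value $f_H(S^*)\geq\min_{H'}f_{H'}(S^*)=OPT$. Consequently it suffices to produce a single cut $S$ that is simultaneously $(\agw-\epsilon)$-approximate on every $G_H$, i.e.\ $f_H(S)\geq(\agw-\epsilon)\text{MaxCut}(G_H)$ for all $H$, since this will yield $\min_H f_H(S)\geq(\agw-\epsilon)OPT\geq(0.878-\epsilon)OPT$. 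Since $k$ is constant, there are polynomially many such $H$, so the reduction is to a polynomially-sized simultaneous \mc instance.

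To obtain such a simultaneously good cut I would work with the relaxed value $V_1=\max_\mu\min_H\mathbb{E}_{S\sim\mu}[f_H(S)]$ taken over distributions $\mu$ on cuts. By the minimax theorem this equals $\min_\lambda\max_S\sum_H\lambda_H f_H(S)$, where $\lambda$ ranges over distributions on the $\binom{n}{k}$ failure sets; and for fixed $\lambda$, the inner problem is the weighted \mc with edge weights $\mathbb{P}_{H\sim\lambda}[u,v\notin H]$, for which Goemans-Williamson is an $\agw$-approximate oracle. Running the ellipsoid method in $\lambda$-space with this GW-based oracle as its (approximate) separation routine would produce a distribution $\mu^*$ supported on polynomially many cuts with $\min_H\mathbb{E}_{S\sim\mu^*}[f_H(S)]\geq\agw\cdot OPT$.

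The main obstacle -- and the source of the $\epsilon$ loss -- is that an adaptive adversary sees the actual cut (and the algorithm's random bits), so a bound on $\min_H\mathbb{E}[f_H]$ is not enough: we need a single deterministic cut achieving $\min_H f_H(S)\geq(0.878-\epsilon)OPT$. In general $\mathbb{E}[\min_H f_H]\leq\min_H\mathbb{E}[f_H]$, so we cannot simply sample from $\mu^*$; moreover, random hyperplane cut values are insufficiently concentrated for a naive union bound over the $\binom{n}{k}$ sets $H$. To bridge this gap I would augment the ellipsoid framework with an additional selection step that uses unweightedness and constant $k$: the support of $\mu^*$ has polynomial size, and for each candidate cut we can exactly compute $f_H$ for every $H$; by iterating the ellipsoid with thresholded constraints of the form ``$f_H(S)\geq(\agw-\epsilon)OPT$'' and enumerating over the resulting polynomial family of cuts, one extracts a single deterministic cut meeting the simultaneous guarantee. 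This final derandomization is the most delicate step of the argument and is what distinguishes the adaptive case from the oblivious one.
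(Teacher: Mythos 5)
There is a genuine gap, and it sits exactly at the step you yourself flag as ``the most delicate.'' Your LP-duality/ellipsoid construction with Goemans--Williamson as an approximate separation oracle produces a distribution $\mu^*$ over polynomially many cuts satisfying $\min_H \mathbb{E}_{S\sim\mu^*}[f_H(S)]\ge \agw\cdot OPT$ --- but this is precisely the guarantee against an \emph{oblivious} adversary (it is, in fact, the paper's algorithm for that model in Section~\ref{sec:Oblivious}). Against an adaptive adversary you need a single cut $S$ with $f_H(S)\ge \alpha\cdot OPT$ for \emph{all} $\binom{n}{k}$ sets $H$ simultaneously, and a distribution that is good in expectation against each $H$ separately need not contain any such cut in its support: each cut in the support may be bad for its own particular $H$. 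Your proposed fix --- ``iterating the ellipsoid with thresholded constraints $f_H(S)\ge(\agw-\epsilon)OPT$ and enumerating the resulting polynomial family of cuts'' --- is not an algorithm; those thresholded constraints are not linear in any LP you have set up, and no argument is given for why the enumeration would succeed. Extracting one cut that is simultaneously good for many objectives is exactly the \smc problem, and the known techniques (\cite{DBLP:conf/soda/BhangaleKKST18}, based on SDP hierarchies) handle only a \emph{constant} number of objectives, with approximation factor $\asmc=0.878<\agw$; moreover \smc is provably harder than $\agw$ to approximate \cite{DBLP:conf/coco/BhangaleK20}, so a clean $(\agw-\epsilon)$ guarantee of the kind you claim would be surprising. (A secondary slip: demanding $f_H(S)\ge(\agw-\epsilon)\mathrm{MaxCut}(G_H)$ for every $H$ asks for per-instance approximation, which may be infeasible; the correct target is the Pareto version with the common threshold $OPT$, which is feasible because the optimal fault tolerant cut witnesses it.)

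The missing idea --- and the heart of the paper's proof --- is a reduction of the number of relevant objectives from $\binom{n}{k}$ to a constant. The paper first stabilizes cuts by local search (Corollary~\ref{lem:output_of_stable_cut}), guaranteeing every vertex contributes roughly half its degree to the cut; it then greedily extracts a constant-size set $H$ of heavy vertices such that, in the non-shallow case, the failure of any $k$ light vertices removes at most an $\epsilon$ fraction of the cut (Lemma~\ref{lem:good_cut_first_case}), so it suffices to run \textsc{SimultaneousMC} on the $O(1)$ subgraphs $G_{\neg F}$, $F\in\binom{H}{k}$. The shallow case is handled separately by identifying super-heavy vertices that must appear in every worst-case fault set and recursing on the remainder via brute force or Goemans--Williamson. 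This constant-objective reduction is what makes the adaptive case tractable and is entirely absent from your argument.
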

Our algorithm is based on viewing fault tolerant \mc against an adaptive adversary as a multi-objective optimization problem, where for every possible subset of $k$ vertices the adversary can fail, one can define a different objective.
The goal is to maximize the worst, {\em i.e.}, minimum, objective.
This approach does not suffice, since all known results for the multi-objective variant of \mc (formally known as \smc \cite{DBLP:conf/icalp/BhangaleKS15, DBLP:conf/soda/BhangaleKKST18}) can handle only a constant number of objectives.
In our case, even when a single failure is allowed, the number of objectives equals $n$.
Hence, to overcome this difficulty, we incorporate local search into the above multi-objective approach to obtain the claimed result in Theorem~\ref{thm:kftc}.

\subparagraph*{Oblivious Adversary.}
When focusing on an oblivious adversary, our main result is a tight approximation of $\agw$ for any constant number $k$ of failures.
However, in contrast to the adaptive adversary setting, this result holds for general weighted graphs and achieves the $\agw$-approximation guarantee exactly.
This is summarized in the following theorem.
\begin{restatable}{theorem}{thmoblagw}\label{thm:obl_agw}
For every constant $k>0$, there is a polynomial time $\agw$-approximation algorithm for fault tolerant \mc on general weighted graphs against an oblivious adversary and  $k$ faults.
\end{restatable}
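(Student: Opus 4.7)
The crucial feature that distinguishes the oblivious setting from the adaptive one is that the adversary sees only the algorithm's distribution $\mathcal{D}$ over cuts, so the algorithm's guaranteed value on $\mathcal{D}$ equals $\min_{F\in\binom{V}{k}} \mexpl{S\sim\mathcal{D}}{\text{cut}(S,G\setminus F)}$---the minimum over fault sets sits \emph{outside} the expectation. Since $k$ is a constant, there are only $\binom{n}{k}=O(n^k)$ many fault sets, so the plan is to handle all of them simultaneously inside a single SDP.

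Concretely, I would relax the problem to the following SDP, which has a polynomial number of constraints:
\begin{align*}
\max \quad & \tau \\
\text{s.t.} \quad & \sum_{(u,v)\in E(G\setminus F)} w_{uv}\,\frac{1-v_u\cdot v_v}{2} \;\geq\; \tau, \qquad \forall\, F\in\binom{V}{k},\\
& v_u\cdot v_u = 1, \qquad \forall\, u\in V.
\end{align*}
This is solvable in polynomial time, e.g.\ via the ellipsoid method with a separation oracle that enumerates all fault sets. To see that the SDP value $\tau^{*}$ upper bounds the fault-tolerant optimum, take the best distribution $\mathcal{D}^{*}$ over cuts, encode each cut $S$ as a $\pm 1$ sign vector $s^{(S)}$, and set $v_u\cdot v_v := \mexpl{S\sim\mathcal{D}^{*}}{s_u^{(S)} s_v^{(S)}}$. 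The resulting matrix is a convex combination of rank-one outer products and hence PSD, with unit diagonal. For every $F$, the left-hand side of the $F$-th constraint equals $\mexpl{\mathcal{D}^{*}}{\text{cut}(S,G\setminus F)}$, which is at least the fault-tolerant optimum, so $\tau^{*}$ is at least that optimum as well.

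For the rounding, I would apply the standard Goemans--Williamson random hyperplane to the SDP solution. The pointwise inequality $\arccos(x)/\pi\geq\agw(1-x)/2$ is the crux: it holds edge-by-edge, so the induced output distribution $\mathcal{D}_{\text{alg}}$ satisfies, for \emph{every} fault set $F$ separately,
\[
\mexpl{\mathcal{D}_{\text{alg}}}{\text{cut}(S,G\setminus F)} \;=\; \sum_{(u,v)\in E(G\setminus F)} w_{uv}\frac{\arccos(v_u\cdot v_v)}{\pi} \;\geq\; \agw\sum_{(u,v)\in E(G\setminus F)} w_{uv}\frac{1-v_u\cdot v_v}{2} \;\geq\; \agw\cdot\tau^{*}.
\]
Since this bound holds uniformly in $F$, the oblivious adversary's best response still leaves value at least $\agw\cdot\tau^{*}$, which is at least $\agw$ times the fault-tolerant optimum, giving the claimed approximation on general weighted graphs.

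The main insight---and really the only nontrivial step---is noticing that moving the min outside the expectation lets the GW guarantee pass through \emph{edge by edge}, which is exactly what makes a single SDP with a min-over-$F$ objective the right relaxation. Once that is in place, everything reduces to classical GW rounding with no loss in the constant; in particular, weighted edges cause no trouble since the GW inequality is pointwise in $v_u\cdot v_v$ and independent of the weights. The only remaining subtlety is the polynomial solvability of the SDP, which is routine because $k$ is constant, hence $\binom{n}{k}$ is polynomial.
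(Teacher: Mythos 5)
Your proof is correct, but it takes a genuinely different route from the paper. You write down a single SDP relaxation with one constraint per fault set $F\in\binom{V}{k}$ (polynomially many, since $k$ is constant), show that the optimal oblivious distribution yields a feasible Gram matrix via $v_u\cdot v_v=\mexpl{S\sim\mathcal{D}^*}{s^{(S)}_u s^{(S)}_v}$, and then exploit the fact that the Goemans--Williamson guarantee $\arccos(x)/\pi\ge \agw(1-x)/2$ holds \emph{per edge}, so a single hyperplane rounding simultaneously $\agw$-approximates all $O(n^k)$ objectives in expectation; since in the oblivious model the minimum over $F$ sits outside the expectation, this is exactly what is needed (and is precisely what fails in the adaptive/\smc setting, as you note). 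The paper instead formulates a configuration LP over distributions of cuts with exponentially many variables, passes to its dual (exponentially many constraints, polynomially many variables), solves the dual via the ellipsoid method with an \emph{approximate} separation oracle obtained by running derandomized Goemans--Williamson on a reweighted graph $G'$ with $w'_{\{u,v\}}=\bigl(1-\sum_{F:\{u,v\}\cap F\neq\emptyset}X_F\bigr)w_{\{u,v\}}$, and then solves the polynomial-size restricted primal supported on the queried cuts. Your approach buys directness: one SDP, one rounding, and the output distribution is simply the hyperplane distribution. The paper's approach buys modularity: it is an approximation-preserving black-box reduction from oblivious fault tolerant \mc to classic \mc, so it works with \emph{any} $\alpha$-approximate \mc oracle and does not rely on the per-edge structure of the GW analysis that your argument crucially uses; it also outputs an explicit distribution with polynomial support. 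Both arguments share the standard caveat that the SDP/ellipsoid step is solved only to additive precision, so each yields $\agw$ up to an arbitrarily small polynomial-precision loss.
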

The approach we adopt for approximating fault tolerant \mc against an oblivious adversary significantly differs from the approach taken against an adaptive adversary.
Surprisingly, our algorithm is based on an approximation-preserving reduction from fault tolerant \mc to the classic \mc problem.
This reduction uses LP duality alongside the ellipsoid algorithm and is achieved by presenting a suitable approximate dual separation oracle for a configuration LP that encodes the distribution over cuts that the algorithm produces.

\subparagraph*{Hardness of Approximation.}
We prove that fault tolerant \mc in unweighted graphs, against both adaptive and oblivious adversaries, cannot be approximated better than $\agw$ without breaking well-known hardness assumptions.
It is important to note that this settles the approximability of the oblivious adversary setting  (see Theorem \ref{thm:obl_agw} above), and almost settles the approximability of the adaptive adversary setting (see Theorem \ref{thm:kftc} above) as the constant in Theorem \ref{thm:kftc} is slightly smaller than $\agw$.

\begin{restatable}{theorem}{hardnessobl}\label{thm:hardness_obv}
Assuming the Unique Games Conjecture and $NP\nsubseteq BPP$, there is no polynomial time  
$(\agw+\epsilon)$-approximation algorithm for fault tolerant \mc in unweighted graphs, for any constant $\epsilon>0$. This holds for both adaptive and oblivious adversaries.
\end{restatable}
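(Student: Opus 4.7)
The plan is to exhibit an approximation-preserving reduction from classic \mc on unweighted graphs to fault tolerant \mc against each of the two adversary models, and then invoke the KKMO hardness of \mc~\cite{DBLP:journals/siamcomp/KhotKMO07,MOO}. Given a \mc instance $G=(V,E)$ with $n$ vertices, $m$ edges, and maximum degree $\Delta$, I would form the $N$-blow-up $G'=G[N]$ by replacing each $v\in V$ with a cluster $V_v$ of $N$ fresh copies and each edge $(u,v)\in E$ by the complete bipartite graph between $V_u$ and $V_v$; the parameter $N$ will be fixed as a sufficiently large constant depending only on $k$ and $\epsilon$.

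Two structural properties of the blow-up drive the reduction. The first is a \emph{fault tolerant completeness bound}: lifting the optimum cut $S^\ast$ of $G$ to $\widehat{S^\ast}=\bigcup_{v\in S^\ast}V_v$ gives a cut of $G'$ of value exactly $N^2\cdot \text{MC}(G)$, and since every vertex of $G'$ has degree at most $\Delta N$, removing $k$ vertices destroys at most $k\Delta N$ cut edges, whence the fault tolerant optimum of $G'$ is at least $N^2\cdot \text{MC}(G)-k\Delta N$. The second is a \emph{rounding lemma}: for any cut $S'$ of $G'$, placing each $v\in V$ in $S$ independently with probability $p_v=|S'\cap V_v|/N$ yields a cut of $G$ with $\mexp{\text{cut}_G(S)}=\text{cut}_{G'}(S')/N^2$, and the rounding is derandomized by the method of conditional expectations.

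Now suppose $\mathcal{A}$ is a polynomial time $(\agw+\epsilon)$-approximation for fault tolerant \mc. I would run $\mathcal{A}$ on $G'$ and round its output back to $G$. In the adaptive case, $\mathcal{A}$ outputs a single cut $S'$, and since $\text{cut}_{G'\setminus F}(S')\le\text{cut}_{G'}(S')$ for every $F$, the approximation guarantee on the fault tolerant value of $S'$ transfers to $\text{cut}_{G'}(S')$ itself. In the oblivious case, $\mathcal{A}$ outputs a distribution $\mathcal{D}$, and the same pointwise inequality inside the expectation gives $\mexpl{S'\sim\mathcal{D}}{\text{cut}_{G'}(S')}\ge(\agw+\epsilon)\cdot\text{FT-OPT}(G')$. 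Combining with the completeness bound and applying the rounding lemma yields a cut of $G$ of (expected) value at least $(\agw+\epsilon)(\text{MC}(G)-k\Delta/N)$; using $\Delta\le m$ and $\text{MC}(G)\ge m/2$, the choice $N=\Theta(k/\epsilon)$ absorbs the loss into $(\epsilon/2)\text{MC}(G)$, producing a $(\agw+\epsilon/2)$-approximation for classic \mc (deterministic in the adaptive case, randomized in the oblivious case), contradicting KKMO under UGC together with $NP\nsubseteq BPP$ to rule out randomized algorithms.

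The main obstacle I anticipate is the oblivious case: the approximation factor must be preserved through a chain of expectations taken over both the algorithm's randomness and the rounding, and this relies crucially on the observation that the oblivious fault tolerant value of $\mathcal{D}$ is at most $\mexpl{S'\sim\mathcal{D}}{\text{cut}_{G'}(S')}$, because any adversary choice of $F$ can only decrease the expected cut value. Once this inequality is in place, the rest of the argument parallels the adaptive case and the remaining calculations are routine.
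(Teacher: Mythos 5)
Your reduction is correct, but it goes in a genuinely different direction from the paper's. The paper attaches a gadget that \emph{forces the adversary's best response}: it takes the disjoint union of $G$ with an $n$-leaf star centered at a new vertex $u^*$ and adds one edge from $u^*$ to $G$. Since $d(u^*)=n+1$ dominates every other degree, $u^*$ is the critical vertex of any reasonable cut, the adversary's unique worst-case fault deletes the entire star, and the fault tolerant optimum of $G'$ equals $\mathrm{MC}(G)$ \emph{exactly}; restricting an $\alpha$-approximate solution to $V$ is then immediately an $\alpha$-approximate \mc, with no loss. Your blow-up instead makes \emph{every} vertex negligible: each of the $k$ faults costs at most $k\Delta N$ out of $N^2\,\mathrm{MC}(G)$, which you absorb into an $\epsilon/2$ loss by taking $N=\Theta(k/\epsilon)$, and you recover a cut of $G$ via the standard fractional rounding $p_v=|S'\cap V_v|/N$ with conditional expectations. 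What each buys: the paper's gadget is lossless and very short, but as written is tailored to a single fault; your construction is lossy but handles every constant $k$ uniformly and needs no analysis of which set the adversary actually fails. Both proofs use the same key inequality in the oblivious case, namely that the \oftv of $\mathcal{D}$ lower-bounds $\mexpl{S'\sim\mathcal{D}}{C_{S',G'}}$. The one step you wave at as ``routine'' that the paper spells out is the conversion of an expected approximation ratio above $\agw$ into a contradiction with the randomized hardness: since cut values in $G$ are bounded by $\mathrm{MC}(G)$, a Markov-type averaging argument shows the sampled cut beats $(\agw+\epsilon/4)\,\mathrm{MC}(G)$ with probability $\Omega(\epsilon)$, which after amplification contradicts $NP\nsubseteq BPP$ together with the Unique Games Conjecture; you should include that averaging step explicitly, but it is standard and does not constitute a gap.
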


\subparagraph*{Simple Purely Combinatorial Algorithms.}
While Theorem \ref{thm:kftc} provides an (almost) tight result against an adaptive adversary, and Theorem \ref{thm:obl_agw} provides a tight result against an oblivious adversary, the techniques we employ yield algorithms which are polynomial but not simple.
For example, the work of \cite{DBLP:conf/soda/BhangaleKKST18} for approximating \smc, an important ingredient in the design of our algorithm against an adaptive adversary, is based on SDP hierarchies and the running time is exponential in the number of objectives.
In contrast, the classic \mc problem admits some very simple and fast heuristics, {\em e.g.}, choosing a random uniform cut.
Thus, we also aim to study simple and purely combinatorial algorithms for fault tolerant \mc.

We prove that fault tolerant \mc does yield a simple purely combinatorial local search algorithm with a provable approximation guarantee against an adaptive adversary.
Unfortunately, the classic local search for \mc, that in each step moves a single vertex from one side of the cut to the other side, fails in the fault tolerant setting.
Nonetheless, we prove that a local search that allows for a slightly  richer family of local improvement steps suffices.
This is summarized in the following theorem (refer to Section \ref{sec:CombLS} for additional details).
\begin{restatable}{theorem}{greedyftc}\label{thm:Greedy_ftc}
 There 
 is a purely  combinatorial polynomial time $1/2$-approximation algorithm 
 for fault tolerant \mc 
 on unweighted input graphs against an adaptive adversary and a single fault.
\end{restatable}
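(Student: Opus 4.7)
The plan is to design a local-search algorithm whose allowed moves comprise all single-vertex flips together with all pair flips $S\mapsto S\oplus\{u,w\}$; this ``slightly richer'' family is strictly more powerful than the single-vertex flips that are standard for classical \Mc. The algorithm starts from an arbitrary cut and repeatedly performs any move that strictly increases the FT value $\Phi(S)=c(S)-\max_{v\in V}c_v(S)$, where $c(S)$ denotes the number of cut edges and $c_v(S)$ the number of cut edges incident to $v$. Since $\Phi$ is integer-valued on unweighted graphs and lies in $[0,|E|]$, the search terminates after at most $|E|$ improvement steps; each step scans $O(n^2)$ candidate moves, yielding a purely combinatorial polynomial-time algorithm.

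For the approximation guarantee, let $\sol$ be the output, let $v^{\star}\in\argmax_w c_w(\sol)$, and observe that $\Phi(\sol)=c(\sol)-c_{v^\star}(\sol)$ equals exactly the \Mc value of $\sol$ restricted to $G-v^\star$. The key lemma I would establish is that this restricted cut is a single-vertex local optimum of \Mc on $G-v^\star$. Granting this, the textbook local-search analysis for \Mc yields $\Phi(\sol)\geq\tfrac{1}{2}\,\mathrm{MaxCut}(G-v^\star)$. To convert this into the FT bound, let $u^{\star}\in\argmax_w c_w(S^*)$ be the adversary's choice against an optimal FT cut $S^*$; then $\Phi(S^*)=c(S^*)-c_{u^\star}(S^*)\leq c(S^*)-c_{v^\star}(S^*)\leq\mathrm{MaxCut}(G-v^\star)$, where the first inequality uses that $u^\star$ maximizes $c_w(S^*)$ and the second notes that $c(S^*)-c_{v^\star}(S^*)$ is the value of $S^*$ as a cut in $G-v^\star$. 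Chaining these inequalities gives $\Phi(\sol)\geq\Phi(S^*)/2$.

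The main obstacle is proving the key lemma, and this is precisely where the pair flips earn their keep. For any $u\neq v^\star$, a direct case analysis on the adjacency of $u$ and $v^\star$ shows that flipping $u$ in $G$ changes the quantity $c(S)-c_{v^\star}(S)$ by exactly $\delta_u:=d^{G-v^\star}(u)-2\,c_u^{G-v^\star}(\sol)$, which is positive precisely when flipping $u$ would improve the cut on $G-v^\star$. If the adversary's argmax remains $v^\star$ after the flip, the single-vertex move alone already increases $\Phi$ by $\delta_u$, contradicting local optimality; the genuinely delicate case is when the argmax shifts to some $v^{\star\star}\neq v^\star$. Because any single-vertex flip alters each $c_w$ by at most one and $c_{v^{\star\star}}(\sol)\leq c_{v^\star}(\sol)$, the shifted argmax satisfies $c_{v^{\star\star}}(\sol\oplus u)\leq c_{v^\star}(\sol)+1$, which already contradicts local optimality whenever $\delta_u\geq 3$. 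To dispose of the remaining cases $\delta_u\in\{1,2\}$, I would invoke the pair flip $\{u,v^{\star\star}\}$: a short calculation of its effect on $c$ and on each $c_w$ shows that it strictly increases $\Phi$ (the simultaneous flip of $v^{\star\star}$ essentially ``resets'' the adversary's argmax to a value no larger than $c_{v^\star}(\sol)$ while retaining the bulk of the $\delta_u$ gain in $c-c_{v^\star}$), again contradicting local optimality. Consequently $\delta_u\leq 0$ for every $u\neq v^\star$, which is exactly the single-vertex local optimality of the restricted cut on $G-v^\star$ asserted by the key lemma.
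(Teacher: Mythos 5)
Your overall reduction is sound and is a nice reframing of the paper's argument: writing $\Phi(\sol)=c(\sol)-c_{v^\star}(\sol)$ as the cut value of $\sol$ in $G-v^\star$ and bounding $\Phi(S^*)\le \mathrm{MaxCut}(G-v^\star)$ is correct, so everything reduces to your key lemma that the restriction of a local optimum to $G-v^\star$ is a one-flip local optimum of \mc on $G-v^\star$. The algorithm is also clearly polynomial. The gap is entirely in the key lemma, and it is a genuine one: this is exactly the step on which the paper spends almost all of its effort (Lemmas~\ref{lem:critical_xS_ineq}, \ref{lem:opluscut} and the long case analysis of Lemma~\ref{lem:has_greedy_step}), and your ``short calculation'' for the pair flip is false as stated. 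Concretely, suppose $v^\star$ and another vertex $q$ are \emph{tied} at the maximum crossing degree $D$, with all $D$ edges of $q$ crossing ($d(q)=c_q(\sol)=D$); let $u$ be a non-neighbor of $q$ with $\{u,v^\star\}\in\delta(\sol)$ and $d(u)-2c_u(\sol)=0$. Then $\delta_u=1>0$, but the single flip of $u$ leaves $c$ unchanged and leaves $q$ at the maximum $D$, so $\Phi$ does not increase; and the pair flip $\{u,q\}$ (here $v^{\star\star}=q$) \emph{decreases} $c$ by $D$ while the new maximum is still at least $c_{v^\star}(\sol)-2=D-2$, so $\Phi$ drops by about $D-2$. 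Neither of the moves your argument invokes improves $\Phi$, yet $\delta_u>0$, so you cannot conclude single-vertex local optimality on $G-v^\star$. The general phenomenon is that flipping $v^{\star\star}$ costs up to $2c_{v^{\star\star}}(\sol)-d(v^{\star\star})$, which can be as large as the maximum crossing degree itself and dwarfs the gain $\delta_u\in\{1,2\}$; moreover nothing forces the post-flip maximum below $c_{v^\star}(\sol)$ when there are ties or a third near-maximal vertex. Your write-up never addresses ties in $\argmax_w c_w(\sol)$ at all.

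For comparison, the paper avoids these traps by (i) proving that whenever $\varphi(S)<(m-\Delta)/2$ the critical vertex is \emph{unique} and satisfies a strong excess inequality (Lemma~\ref{lem:critical_xS_ineq}) --- this is what rules out the tie scenario above, but only below the threshold $(m-\Delta)/2$, a threshold your formulation does not have; (ii) restricting the ``neutral'' second move to a zero-excess neighbor of the critical vertex on its opposite side, chosen by a specific rule, so that the move provably never decreases $\varphi$ (Lemma~\ref{lem:opluscut}); and (iii) an exhaustive parity/structure case analysis (Lemma~\ref{lem:has_greedy_step}) to show such a move strictly helps. If you want to salvage your approach, you would need an analogue of these structural facts for your potential $\Phi$; as it stands, the possibility that your local search stalls at a cut with some $\delta_u>0$ (and hence with $\Phi$ possibly below $\Phi(S^*)/2$) is not excluded.
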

We further study how a \emph{uniform random cut} performs against both types of adversaries, and prove that this performance depends on the type of the adversary.
Specifically, for an oblivious adversary an approximation of $1/2$ is achieved, by a uniform random cut.
However, this is not the case when considering an adaptive adversary, since we prove that a uniform random cut cannot achieve an approximation better than $1/4$.

\subsection{Related Work}

The weighted version of \mc is one of Karp's NP-complete problems \cite{Karp_np_complete}, and the unweighted version is also known to be NP-complete \cite{DBLP:journals/tcs/GareyJS76}. In general graphs, one cannot obtain an approximation factor better than $16/17$ for the undirected version, or better than $12/13$ for the directed version, unless $P=NP$ \cite{DBLP:journals/siamcomp/TrevisanSSW00, DBLP:journals/jacm/Hastad01}. The best known approximation for \mc is the celebrated  random hyperplane algorithm of Goemans and Williamson that obtains an approximation factor of roughly  $0.8786$  by rounding the natural semi-definite programming relaxation \cite{DBLP:journals/jacm/GoemansW95}. This is the best approximation that one can achieve, assuming the Unique Games Conjecture of Khot \cite{DBLP:journals/siamcomp/KhotKMO07} and $P\neq NP$.

The problem of fault tolerant \mc against an adaptive adversary that we introduce in this paper can be viewed as a special case of \smc, in which the input is a collection of $\tau$ weighted graphs on the same vertex set and the goal is to partition the vertices into two parts, such that the size of the cut is large in every given graph. In a straightforward manner, our problem would imply $\tau=\binom{n}{k}$, which is unacceptable since the known approximations for \smc are for a constant number of instances only \cite{DBLP:journals/dam/AngelBG06, DBLP:conf/icalp/BhangaleKS15, DBLP:conf/soda/BhangaleKKST18}. Nonetheless, we do use the algorithm from \cite{DBLP:conf/soda/BhangaleKKST18} to obtain an algorithm that achieves an approximation of $0.878$ for fault tolerant \mc against an adaptive adversary. The state-of-the-art for \smc is a polynomial $0.878$-approximation for any  constant number of input graphs \cite{DBLP:conf/soda/BhangaleKKST18},  which is nearly optimal since assuming the Unique Games conjecture, \smc cannot be approximated better than $ (\agw -\delta)$ (where $\delta \geq 10^{-5}$) \cite{DBLP:conf/coco/BhangaleK20}.

One more notion of resilience is that of robust submodular maximization, see, {\em e.g.}, \cite{DBLP:conf/ipco/OrlinSU16, DBLP:conf/kdd/AvdiukhinMYZ19}. Given a submodular function $f$ and, {\em e.g.}, a cardinality constraint $k$, a set $A$ is robust against $\tau$ failures if $A = \argmax_{A\subseteq V, |A|\leq k}{\min_{Z\subseteq A, |Z|\leq \tau}{f(A\sm Z)}}$, i.e., a subset of size at most $k$ that achieves the maximal value after at most $\tau$ elements are removed from the solution. Note that this notion of robustness differs from fault tolerance.
The reason is that the failed elements are removed from the solution, as opposed to removed from the instance.
Specifically, when considering the cut function of an undirected graph (which is submodular) the removal of a vertex from $S$ (as in robust) differs from removing the same vertex from the graph (as in fault tolerant).

Due to the importance of coping with failures, the fault tolerance of many additional fundamental problems has been extensively studied. Prime examples are replacement paths \cite{DBLP:conf/icalp/AlonCC19, DBLP:conf/soda/ChechikC19, DBLP:conf/icalp/ChechikM20, DBLP:conf/focs/GrandoniW12, DBLP:journals/talg/RodittyZ12}, BFS trees \cite{DBLP:journals/corr/abs-1302-5401, DBLP:conf/soda/ParterP14, DBLP:conf/podc/Parter15, DBLP:conf/spaa/ParterP15, DBLP:conf/icalp/GuptaK17}, spanners \cite{DBLP:journals/algorithmica/LevcopoulosNS02, DBLP:conf/podc/DinitzK11, DBLP:conf/stoc/Solomon14, DBLP:conf/wdag/Parter14,DBLP:conf/soda/BodwinDPW18, DBLP:conf/podc/BodwinP19, DBLP:conf/icalp/BodwinGPW17}, connected dominating sets \cite{DBLP:journals/informs/BuchananSBP15, DBLP:journals/informs/ZhouZTHD18},  and more \cite{DBLP:journals/algorithmica/JainV03, DBLP:journals/jal/GuhaMM03, DBLP:journals/talg/SwamyS08, DBLP:journals/tcs/ChechikP14, DBLP:conf/stoc/BaswanaCR16, DBLP:journals/algorithmica/BiloGLP18, DBLP:conf/wdag/BaswanaCR15, DBLP:conf/soda/BaswanaCHR18}

Fault tolerance was also studied in the distributed setting, such as for BFS trees \cite{DBLP:conf/spaa/GhaffariP16}, MST \cite{DBLP:conf/spaa/GhaffariP16}, and spanners \cite{DBLP:conf/podc/DinitzK11, DBLP:conf/wdag/Parter20}.

\subparagraph*{Paper Organization.}
Section \ref{sec:prelim} contains all required formal definitions and preliminary lemmas used throughout the paper.
Section \ref{sec:Adaptive} deals with the adaptive adversary, whereas Section \ref{sec:Oblivious} deals with the oblivious adversary. Section~\ref{sec:hardness} consists of the hardness results, and Section~\ref{sec:randomcut} deals with a uniform random cut.

\section{Preliminaries}\label{sec:prelim}

\subparagraph*{Graph Notations.} We  consider only edge-weighted graphs $G=(V,E,w)$ with positive integer weights $w_e$ assigned to the edges $e\in E$. By \emph{unweighted} graphs we mean graphs with $w_e=1$, for all $e\in E$. A \emph{cut} $S$ in a graph $G=\left(V,E, w\right)$ is a subset of vertices $S\subseteq V$. 
We let $\delta(S, G)=\{e\in E : |e\cap S|=1\}$ denote the set of all \emph{crossing edges} of $S$ in the graph $G$.
The \emph{size} or \emph{weight} of a cut $S$, denoted by $C_{S,G}$, is the total weight of the crossing edges: $C_{S,G} = \sum_{e \in \delta(S,G)}{w_{e}}$. When $G$ is clear from the context, we use $C_S$ and $\delta(S)$.

For a set $F\subseteq V$ of vertices, the \emph{degree} $d(F)$  of $F$  is the total weight of edges adjacent to $F$: $d(F)=\sum_{e\in E: e\cap F\neq \emptyset}w_e$. For a subset $F\subseteq V$ and cut $S\subseteq V$, the \emph{crossing degree} $d_S(F)$ of $F$  is the total weight of edges adjacent to $F$ that cross $S$: $d_S(F)=\sum_{e\in \delta(S): e\cap F\neq \emptyset}w_e$. We use $d(v)$ and $d_S(v)$, if $F=\{v\}$. We also let $n = |V|$, $m = |E|$, and $\Delta = \max_{v\in V}{d(v)}$. Finally, we let $2^V$ and ${\binom{V}{k}}$ denote the collection of all and all size-$k$ subsets of $V$, respectively.

\subparagraph*{The Adaptive Adversary.}
We define the \emph{\akftv of a cut against an adaptive adversary} to be the minimal size of the cut, subsequent to a failure of any $k$ vertices. Formally, for a cut $S$ in a graph $G = (V,E, w)$ and a constant $k> 0$, the \akftv of $S$ is defined as $\varphi(S, k, G) = \min_{F\in \binom{V}{k}}{C_{S-F, G-F}}$.

\begin{definition}[\kftc]
Given an edge-weighted graph $G=(V,E,w)$ and a number $k\in \mathbb{N}$, a cut $S$ is a \emph{$k$-adaptive fault tolerant cut}, or \kftc for short, if $\varphi\left(S, k, G \right)=\underset{S'\subseteq V}{\max}\set{\varphi\left( S', k,G\right)}$. 
\end{definition}

We usually omit $G$ and/or $k$ from $\varphi(S,k,G)$ when $G$ is clear from the context and $k=1$. 
The \emph{\mc} problem, i.e., that of finding a cut with the largest size, corresponds to the special case $k=0$.

\subparagraph*{The Oblivious Adversary.}
We represent a randomized algorithm that finds a cut in a graph $G=(V,E,w)$ by a probability distribution $\mathcal{D}$ over all possible cuts $2^V$. 
For a distribution $\mathcal{D}$ over cuts, we define the \emph{\oftv of $\mathcal{D}$} to be the minimal expected size of the cut, subsequent to the failure of any $k$ vertices. Formally, for a graph $G = (V,E, w)$, a distribution $\mathcal{D}$ over cuts  and a constant $k> 0$, we define the \oftv of $\mathcal{D}$, denoted by $\mu(\mathcal{D}, k,G)$, as $\mu(\mathcal{D}, k, G) = \min_{F\in \binom{V}{k}}{\mexpl{S\sim \mathcal{D}}{C_{S-F, G-F}}}$.

\begin{definition}[\koblftc]
Given an edge-weighted graph $G=(V,E,w)$ and a number $k\in \mathbb{N}$, a distribution $\mathcal{D}$ over all cuts $2^V$ is a $k$-oblivious fault tolerant cut, or \koblftc for short, if $\mu(\mathcal{D}, k, G) = \underset{\mathcal{D'}}{\max}\set{\mu(\mathcal{D'}, k, G)}$. 
\end{definition}

Note that here we assume the adversary chooses the set $F$ of faults deterministically; it easily follows from the linearity of expectation that the adversary always has a deterministic best choice -- a subset that has the largest expected crossing degree.

\subparagraph{Dissimilarity of \ftc and \mc.} In the two following observations we show that the problem of finding an \ftc differs from finding a \mc, that is, there exists a solution for \mc which is not a solution for \ftc, and vice versa.

\begin{figure}[h]
\centering
\includegraphics[width=8.5cm, scale=0.25]{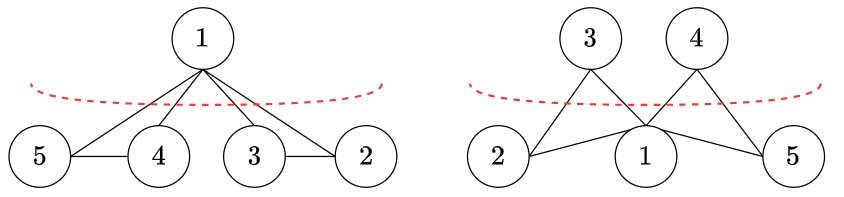}
\caption{A \mc that is not an \ftc}
\label{figure:mc_no_ftc}
\end{figure}

\begin{observation}{There exists a solution for \mc in $G$, which is not a solution for \ftc.}
\end{observation}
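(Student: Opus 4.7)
The plan is to exhibit a small concrete graph together with two cuts that both have maximum cut weight but different \aftv{}s. Since the figure \texttt{graphA-together.jpg} already suggests such a construction, I would rely on a small hand-chosen example and verify the relevant quantities directly. The driving intuition is that the fault tolerant objective penalizes any cut whose crossing edges are concentrated on a single vertex, while the classical \mc objective does not, so any graph containing a high-degree vertex whose removal destroys many crossing edges is a natural candidate.

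A minimal witness I would use is the graph on vertices $\{a,b,c,d\}$ with edge set $\{ab, ac, bc, cd\}$ (a triangle with a pendant). First, I would enumerate all cuts (up to complementation) and observe that the maximum cut size is $3$, attained for instance both by $S_1=\{c\}$ and by $S_2=\{a,c\}$; so both are valid \mc solutions. Next, I would compute $\varphi(S_1,1,G)$ and $\varphi(S_2,1,G)$. Removing $c$ from $S_1$ leaves an empty cut of value $0$, hence $\varphi(S_1,1,G)=0$. For $S_2$, I would check each of the four possible single vertex failures and verify that the worst case yields at least one surviving crossing edge, so $\varphi(S_2,1,G)\ge 1>\varphi(S_1,1,G)$. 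This immediately shows that $S_1$ is a \mc that is not a $1$-AFTcut, which is exactly the content of the observation.

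The proof is then one sentence: since $\varphi(S_2,1,G)>\varphi(S_1,1,G)$, the \mc $S_1$ does not maximize $\varphi(\cdot,1,G)$ and therefore fails to be a \ftc. I would close by noting that this pattern generalizes trivially: any graph containing a vertex $v$ whose removal is a separator, combined with at least one alternative max-cut that does not place all of $v$'s crossing edges in jeopardy, furnishes the same separation; and in particular the figure in the paper likely illustrates such a configuration.

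The main obstacle, such as it is, is purely presentational: one must make sure the tiny example is genuinely a \mc (so that the observation is not vacuous), which requires a brief but exhaustive case check of all $2^{|V|-1}$ cuts (trivial for $|V|=4$). No conceptual difficulty beyond that is expected, so most of the writing effort goes into a clean verification table or a short case analysis.
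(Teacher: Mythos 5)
Your proposal is correct and follows essentially the same strategy as the paper: exhibit a small graph in which some maximum cut concentrates all its crossing edges on a single vertex (so its FT value is $0$), and a second cut with strictly positive FT value witnessing that the first is not an \ftc. The paper uses two triangles sharing a vertex where the paper's cut $\set{1}$ plays the role of your $S_1=\set{c}$; your triangle-plus-pendant example and the accompanying verification are accurate and equally valid.
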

\begin{proof}
Consider a graph that consists two triangles that share a vertex, and label the shared vertex by $1$ (see Figure~\ref{figure:mc_no_ftc}).
It holds that $\set{1}$ is a \mc. In addition, $\varphi(\set{1}) = 0$ because when vertex $1$ fails there are no crossing edges. There is a better solution for \ftc, for example $\set{1, 2, 5}$. It holds that $\varphi(\set{1,2,5}) = 2$, thus $\set{1}$ is not an \ftc even tough it is a \mc.
Note that we can generalize the example by having $t$ triangles that share a vertex, labeled by $1$. It then holds that $\set{1}$ is a \mc, while $\varphi(\set{1}) = 0$. However, a cut that consists of $1$ and an single vertex from each triangle, is an \ftc whit \aftv $t$.
\end{proof}

\begin{figure}[h]
\centering
\includegraphics[width=8.5cm, scale=0.25]{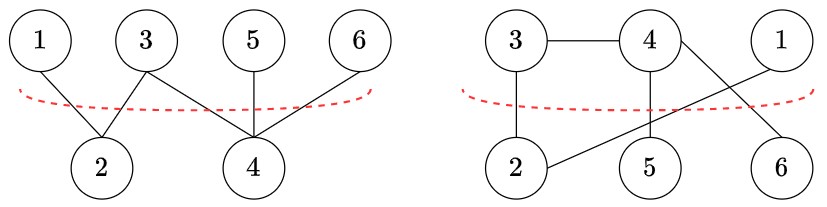}
\caption{An \ftc that is not a \mc}
\label{figure:ftc_no_mc}
\end{figure}

\begin{observation}{There exists a solution for \ftc, which is not a solution for \mc in $G$.}
\end{observation}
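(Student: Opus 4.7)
The plan is to exhibit a small unweighted graph $G$ together with a cut $S$ that realizes $\max_{S'}\varphi(S')$ over all cuts in $G$ but whose weight is strictly smaller than the maximum cut weight of $G$. The high-level strategy is to design $G$ so that every \mc concentrates its crossing weight on edges incident to a single ``fragile'' vertex $v^*$ whose removal is catastrophic; this forces the global FT ceiling down to a value that a less ambitious but more evenly spread cut can also match.

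Concretely, I would take $G$ to be the disjoint union of a star $K_{1,t}$ (for, say, $t\ge 3$) centered at $v^*$ with leaves $u_1,\ldots,u_t$, together with one extra edge $(a,b)$ on two fresh vertices. First I would upper bound the FT ceiling: since removing $v^*$ destroys every star edge, the only crossing edge that can survive after any single failure is $(a,b)$, so $\varphi(S)\le 1$ for every cut $S$. Next I would point to the non-maximum cut $S=\{u_1,a\}$, of weight $2$, and verify $\varphi(S)=1$ by a short case analysis over the six possible vertex deletions: the two crossing edges $(v^*,u_1)$ and $(a,b)$ are vertex-disjoint, so no single failure can kill both. Finally I would observe that $\{v^*,a\}$ is a cut of weight $t+1 \ge 4 > 2$, confirming that $S$ is an optimal $1$-\ftc yet not a \mc.

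The main obstacle is the upper bound $\max_S\varphi(S)\le 1$, not the construction of candidate cuts. The key insight is that any vertex whose removal eliminates a dominant fraction of the edge weight imposes a hard ceiling on the \aftv of every cut; once that ceiling is also attained by a strictly suboptimal cut, the separation from \mc is automatic. A secondary subtlety is the need for the disjoint edge $(a,b)$: without it the graph is a pure star in which every cut has \aftv zero and the statement would only hold vacuously.
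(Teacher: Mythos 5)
Your proposal is correct and takes essentially the same approach as the paper: both exhibit an explicit small graph together with a cut that attains the optimal \aftv (certified via the ceiling $m-\Delta$ imposed by failing the maximum-degree vertex) yet is strictly smaller than the maximum cut. The paper uses a $5$-path with a pendant leaf instead of your star plus a disjoint edge, but the structure of the argument is identical and your verification is sound.
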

\begin{proof}
Let $G$ be a $5$-path, $1,2,3,4,5$, with an additional leaf $6$ connected to vertex $4$ (Figure~\ref{figure:ftc_no_mc}).
It holds that $\set{1,3,4}$ is an \ftc. $C_{\set{1,3,4}} = 4$, while there exists larger cuts, for example $C_{\set{2,4}} = 5$. Thus, $\set{1, 3, 4}$ is not a \mc even tough it is an \ftc.
\end{proof}

\subparagraph*{Greedy moves and stable cuts.} We assume here that we are given an \emph{unweighted} graph $G=(V,E)$. A key observation in our algorithms against an adaptive adversary is that any solution can be transformed into another one where each vertex contributes \emph{many} of its edges to the cut. If a vertex contributes too little, we can just move it to the opposite side of the cut: while this could increase the crossing degree of some vertices (negative contribution to the \aftv), it increases the cut size by more, giving a positive net contribution to the \aftv. We prove this formally in Lemma~\ref{lem:k_greedy_step}, after some formal definitions.

For every $v\in V$ and $S\subseteq V$, let $\sv$ denote the cut obtained from $S$ by switching $v$ to its opposite side, that is, $\sv=S-v$, if $v\in S$, and $\sv=S\cup \{v\}$, otherwise. Given a subset $S\subseteq V$, a constant $k\in \mathbb{N}$, and a vertex $v\in V$, we say that replacing $S$ with $\sv$, i.e., moving $v$ to its opposite side w.r.t. $S$, is a \emph{$k$-greedy step} if $d_S(v) \le (d(v) - k)/2$. A cut $S$ is \emph{$k$-stable} if it has no $k$-greedy step, that is, for every $v\in V$, it holds that $d_S(v) > (d(v) - k)/2$.
For $k=1$, we use \emph{stable} instead of $1$-stable.

\begin{observation}\label{eq:oplusdiff}
For every cut $S$ and a vertex $v$, it holds that $C_{S\oplus v}-C_S=d_{S\oplus v}(v)-d_S(v)$.
\end{observation}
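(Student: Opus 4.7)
The plan is a case split on whether an edge is incident to $v$. First I would observe that any edge $\{x,y\}$ with $x,y\neq v$ is unaffected by the move, since the endpoints $x,y$ remain on exactly the same sides in $\sv$ as in $S$. Hence such an edge crosses $\sv$ if and only if it crosses $S$, and its weight contributes identically to $C_S$ and to $C_{\sv}$. Consequently, the difference $C_{\sv}-C_S$ depends only on edges incident to $v$.

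Next I would analyze edges incident to $v$. Every such edge $\{v,u\}$ flips its crossing status when $v$ is moved: $v$ and $u$ lie on opposite sides with respect to $S$ if and only if they lie on the same side with respect to $\sv$. Therefore the set of $v$-incident edges that cross $\sv$ is exactly the set of $v$-incident edges that do not cross $S$, which gives the identity $d_{\sv}(v)=d(v)-d_S(v)$. In particular, the total weight contribution of $v$-incident edges to the cut changes from $d_S(v)$ in $C_S$ to $d_{\sv}(v)$ in $C_{\sv}$.

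Combining the two observations, $C_{\sv}-C_S = d_{\sv}(v)-d_S(v)$, which is the claimed identity. There is no real obstacle; the only mild care is to handle the weighted case correctly, which is automatic because both $C_S$ and $d_S(v)$ are sums of the weights $w_e$ over the relevant edge sets, and the partition of $E$ into edges incident and not incident to $v$ is compatible with these sums.
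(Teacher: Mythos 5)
Your proof is correct and is exactly the standard argument the paper implicitly relies on (the observation is stated without proof): edges not incident to $v$ keep their crossing status, edges incident to $v$ flip it, so the difference in cut weight is precisely the change in $v$'s crossing degree. Nothing to add.
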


\begin{lemma}\label{lem:k_greedy_step}
Let $v\in V$ be a vertex,  $S\subseteq V$ be a cut, and  $k>0$ be an  integer, such that  $d_S(v) \le (d(v) - k)/2$; then $C_{\sv} \geq C_S + k$, and $\varphi(\sv, k)\geq\varphi(S, k)$.
\end{lemma}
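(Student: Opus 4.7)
The plan is to prove the two claims separately. The first inequality $C_{\sv} \geq C_S + k$ follows from a short direct calculation using Observation~\ref{eq:oplusdiff}; the harder part $\varphi(\sv, k) \geq \varphi(S, k)$ requires a case analysis on whether the adversary's optimal fault set contains $v$, together with a careful accounting of the $v$-to-$F$ edges.

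For the first part, Observation~\ref{eq:oplusdiff} gives $C_{\sv} - C_S = d_{\sv}(v) - d_S(v)$. Flipping the side of $v$ swaps the crossing status of every edge incident to $v$, so $d_{\sv}(v) + d_S(v) = d(v)$, hence $d_{\sv}(v) = d(v) - d_S(v)$. Combining, $C_{\sv} - C_S = d(v) - 2d_S(v) \geq k$, where the last inequality is just the hypothesis $d_S(v) \leq (d(v)-k)/2$.

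For the second part, it suffices to show $C_{\sv - F, G - F} \geq C_{S - F, G - F}$ for every $F\in\binom{V}{k}$: taking minima over $F$ on the left and using $C_{S - F, G - F}\geq \varphi(S,k)$ on the right then gives $\varphi(\sv, k) \geq \varphi(S, k)$. If $v\in F$, then $\sv - F = S - F$ as subsets of $V-F$, so the two crossing weights are literally equal. If $v\notin F$, then $(S-F)\oplus v = \sv - F$, so I can apply Observation~\ref{eq:oplusdiff} inside the graph $G-F$ to obtain
\[
C_{\sv - F, G - F} - C_{S - F, G - F} = d_{G-F}(v) - 2\,d_{S-F, G-F}(v).
\]
Let $t$ be the number of edges from $v$ to $F$ (so $d_{G-F}(v) = d(v) - t$), and split these $t$ edges into $a_F$ that cross $S$ and $b_F$ that do not, so $a_F + b_F = t$ and $d_{S-F, G-F}(v) = d_S(v) - a_F$. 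Substituting, the right-hand side equals $(d(v) - 2d_S(v)) - (b_F - a_F)$.

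The main obstacle, and the reason the argument closes cleanly, is precisely this final step: the hypothesis provides a gain of $d(v) - 2d_S(v) \geq k$ from flipping $v$, while the worst-case loss is $b_F - a_F \leq a_F + b_F = t \leq k$, where $t \leq k$ uses that $G$ is unweighted and simple (so $v$ has at most one edge to each vertex of $F$). These two bounds exactly cancel, giving $C_{\sv - F, G - F} \geq C_{S - F, G - F} \geq \varphi(S,k)$, which finishes the proof.
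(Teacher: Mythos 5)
Your proof is correct and follows essentially the same argument as the paper: a gain of at least $k$ from flipping $v$ is traded against a loss of at most $k$ coming from the at most $k$ edges between $v$ and $F$. The paper organizes this bookkeeping globally, via $C_{S-F,G-F}=C_S-d_S(F)$ together with $d_{\sv}(F)\le d_S(F)+k$, while you apply the flip identity inside $G-F$ directly; the two accountings are equivalent (your $b_F-a_F$ is exactly the paper's $d_{\sv}(F)-d_S(F)$).
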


\begin{proof}

Assume, without loss of generality, that $v\in S$ (otherwise, we swap $S$ and $V - S$).  Observation~\ref{eq:oplusdiff} implies that  $C_{\sv} \geq C_S + k$, since $d_S(v) + k \le d(v) - d_S(v) =d_{\sv}(v)$.

For the second claim, we show that for every $F\in {\binom{V}{k}}$, 
$C_{S\sm F, G\sm F} \leq C_{\sv \sm F, G\sm F}$. Assume that $v\notin F$, as otherwise $S-F=\sv-F$, and the claim holds trivially. Recall that $C_{\sv}\geq C_S + k$. In addition, $d_{\sv}(F) \leq d_{S}(F) + k$, since for every $u\in F$, at most one crossing edge is added to the cut (the edge $\set{u,v}$). Putting those together, we have that: $C_{S\sm F, G\sm F} = C_S - d_{S}(F) \leq C_{\sv} - d_{\sv}(F) =  C_{\sv \sm F, G\sm F}$. Since this holds for every $F$, 
we have that $\varphi(S\oplus v, k) \geq \varphi(S, k)$.
\end{proof}

By repeatedly applying a $k$-greedy step to a cut, we keep increasing the cut value, while not decreasing the {\akftv}; thus, after at most $m$ greedy steps, we have a $k$-stable cut with a {\akftv} at least as good as the original one. We let \textsc{StabilizeCut}($G$,$S$,$k$) denote this procedure, which takes as input a graph $G$, a cut $S$ in $G$, and a number $k$, then starting with $S$, repeatedly applies a (arbitrary) $k$-greedy step, while there is one, and returns the obtained $k$-stable cut. The following corollary follows from the reasoning above (the second claim follows by applying \textsc{StabilizeCut} to an optimal {\kftc}).

\begin{corollary}\label{lem:output_of_stable_cut}
Let $S$ be a cut in graph $G=(V,E)$, and $k$ be a positive integer. Let $S' = \textsc{StabilizeCut}(G,S, k)$. 
It holds that $S'$ is $k$-stable, $C_{S'} \ge C_S$ and $\varphi(S', k) \ge \varphi(S, k)$. In particular, every unweighted graph $G=(V,E)$ has a $k$-stable optimal \kftc.
\end{corollary}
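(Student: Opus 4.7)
The plan is to unfold the definition of \textsc{StabilizeCut} and then apply Lemma~\ref{lem:k_greedy_step} inductively along the sequence of greedy steps it performs. Let $S=S_0,S_1,\ldots,S_t=S'$ denote the sequence of cuts produced by \textsc{StabilizeCut}, where $S_{i+1}=S_i\oplus v_i$ for some vertex $v_i$ that admits a $k$-greedy step at $S_i$, and the procedure halts as soon as no $k$-greedy step exists.

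First, I would argue termination and the $k$-stability of $S'$. By Lemma~\ref{lem:k_greedy_step} applied at step $i$, we have $C_{S_{i+1}}\ge C_{S_i}+k\ge C_{S_i}+1$, so the cut value strictly increases at each step. Since the cut value is a nonnegative integer bounded above by $m$ (the graph is unweighted), the process terminates after at most $m$ iterations, so $t\le m$ and the procedure runs in polynomial time. By the stopping criterion of \textsc{StabilizeCut}, the returned cut $S'=S_t$ admits no $k$-greedy step, i.e., $d_{S'}(v)>(d(v)-k)/2$ for every $v\in V$, which is exactly the definition of $k$-stability.

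Next, a straightforward induction on $i$ using both conclusions of Lemma~\ref{lem:k_greedy_step} gives $C_{S_i}\ge C_{S_{i-1}}\ge\cdots\ge C_{S_0}=C_S$ and $\varphi(S_i,k)\ge\varphi(S_{i-1},k)\ge\cdots\ge\varphi(S_0,k)=\varphi(S,k)$, so in particular $C_{S'}\ge C_S$ and $\varphi(S',k)\ge\varphi(S,k)$. For the final ``in particular'' statement, let $S^\star$ be an optimal \kftc, i.e., a cut attaining $\varphi(S^\star,k)=\max_{S''\subseteq V}\varphi(S'',k)$, and set $\widehat S=\textsc{StabilizeCut}(G,S^\star,k)$. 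Then $\widehat S$ is $k$-stable and $\varphi(\widehat S,k)\ge\varphi(S^\star,k)$; by optimality of $S^\star$ this inequality must be an equality, so $\widehat S$ is itself an optimal \kftc that is also $k$-stable.

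There is no real obstacle: the nontrivial work was already done in Lemma~\ref{lem:k_greedy_step}, which supplies both the cut-size increase and the monotonicity of $\varphi$ under a single greedy step. The only mild point to be careful about is that termination in polynomial time relies on the graph being unweighted (so that the $+k$ increase in $C_{S_i}$ is a genuine integer progress bound against the upper bound $m$), which is consistent with the statement of the corollary.
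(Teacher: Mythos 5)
Your proposal is correct and follows exactly the paper's reasoning: iterate $k$-greedy steps, use Lemma~\ref{lem:k_greedy_step} to get the $+k$ increase in cut size (hence termination within $m$ steps on an unweighted graph) and the monotonicity of $\varphi(\cdot,k)$, and obtain the ``in particular'' claim by stabilizing an optimal \kftc. No differences worth noting.
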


\section{Fault Tolerance Against an Adaptive Adversary}\label{sec:Adaptive}

\subsection{A 0.878-Approximation for Multiple Faults}

In this section, we give a $(0.878-\epsilon)$-approximation algorithm for  \kftc on unweighted graphs, for constants $k,\epsilon > 0$. A core tool that we use in our algorithm is an algorithm for the \emph{\smc} problem, where given several graphs defined over the same vertex set, the goal is to find a cut that is large for all graphs simultaneously. A $0.878$-approximation algorithm for this problem with a \emph{constant} number of graphs has been given in~\cite{DBLP:conf/soda/BhangaleKKST18}. The algorithm is based on semidefinite programming techniques.

The main idea behind our algorithm is to separate a constant number of ``heavy'' (high-degree) vertices for which the following holds; given a cut which is large subsequent to any failure of $k$ \emph{heavy} vertices, the cut is large even if light (non-heavy) vertices fail as well. For such a heavy set, a good approximation for \smc on the instances obtained by removing each possibility of $k$ heavy vertices from $G$, should be a good approximation for \kftc on $G$. We give a greedy algorithm that selects the set of heavy vertices. We then consider two cases. We  show that if the heavy vertices do not cover most of the edges in the graph (the ``non-shallow'' case), then   
an approximate solution for \smc with respect to the heavy set gives an approximate solution for \kftc. 
Otherwise (the ``shallow'' case), we identify a set of ``super-heavy'' vertices, which is shown 
to fail in any near-optimal solution. Therefore, finding a near-optimal solution for the original graph reduces to finding a near-optimal solution on the graph remaining by removing the ``super-heavy'' vertices. We show that it can be solved via brutforce, or by finding a good solution to \mc (e.g., obtained via~\cite{DBLP:journals/jacm/GoemansW95}). We prove the following theorem.

\thmkftc*

Before proceeding to the algorithm, we introduce the \smc framework.\label{try32}

\begin{definition}[\smc]
Let $V$ be a vertex set. We are given $k$ edge-weighted graphs, $G_i=(V,E_i)$, $i=1,\dots,k$, on the vertex set $V$, where the weights are normalized, so that $\sum_{e\in E_i}w_e=1$, for each $i$. 
In the (Pareto) \emph{\smc} problem, given the graphs $G_i$ together with thresholds $c_i\in [0,1]$, the goal is to find a cut $S^*\subseteq V$ such that $ C_{S^*,G_i}\ge c_i$, for every $i$. We say that an algorithm is an $\alpha$-approximation algorithm for the problem if for every input $G_i,c_i$, $i=1,\dots,k$, where there exists a cut $S^*$ such that $C_{S^*,G_i}\ge c_i$ for every $i$, the algorithm returns a cut $\sol$ such that $C_{\sol,G_i}\ge \alpha c_i$, for every $i$. 
\end{definition}

\begin{theorem}\cite{DBLP:conf/soda/BhangaleKKST18}\label{thm:simultaneous} For every constant $k\ge 1$ and parameter $n\ge 1$, there is a polynomial-in-$n$ algorithm that computes an $\asmc$-approximate solution to any \smc instance with $k$ weighted graphs on a vertex set of size $n$, in which all non-zero edge-weights are lower-bounded by $\exp(n^{-c})$, for constants $k$ and $c$, and $\asmc = 0.878$
.\end{theorem}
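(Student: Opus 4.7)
The plan is to extend the Goemans--Williamson random-hyperplane paradigm to the simultaneous setting, where the rounding must deliver a single cut that meets all $k$ thresholds at once. Since $k$ is a constant, we can afford rounding procedures whose running time is exponential in $k$ but polynomial in $n$; the target loss factor is exactly the single-instance constant $\agw$.

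First I would write the natural vector relaxation of the \smc instance: unit vectors $\{x_v\}_{v\in V}$ with $\|x_v\|=1$ and the inequalities
$$\tfrac{1}{2}\sum_{e=(u,v)\in E_i} w^{(i)}_e (1 - x_u\cdot x_v) \ge c_i \qquad (i=1,\dots,k).$$
This is feasible whenever a genuine simultaneous cut exists, by taking $x_v=\pm e_1$ according to that cut. I would then lift the relaxation to a level $t=t(k)$ of the Sum-of-Squares/Lasserre hierarchy, where $t$ depends only on $k$. Under the hypothesis that every non-zero edge weight is bounded below (reading the statement as $\exp(-n^{c})$), this semidefinite program can be solved to within additive error much smaller than any single non-zero edge contribution, in time polynomial in $n$ for constant $t$.

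The rounding then proceeds in two stages, in the spirit of the conditional-rounding paradigm of Barak--Raghavendra--Steurer. In the first stage I would greedily select a constant-size set $T\subseteq V$ of ``influential'' vertices based on a covariance score aggregated across the $k$ instances, and enumerate over all $2^{|T|}$ $\pm 1$-assignments to $T$; for each such assignment, the pseudo-distribution is conditioned on that partial assignment, and a standard potential argument shows that a constant number of conditioning steps drives the \emph{global correlation} across the edges of each $G_i$ below any prescribed threshold $\eta=\eta(k)$. In the second stage I would apply Goemans--Williamson random-hyperplane rounding to the conditioned SDP vectors. For each fixed instance $i$, the classical $\arccos$-versus-$(1-x_u\cdot x_v)/2$ comparison gives $\mexpl{g}{C_{S_g,G_i}}\ge \agw\cdot c_i$, and the low-correlation property after conditioning implies that these rounded cut values concentrate about their expectations for all $k$ instances simultaneously. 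A union bound over the constant $k$ yields that, with constant probability, the output cut satisfies $C_{S,G_i}\ge \agw\cdot c_i$ for every $i$; polynomially many independent trials then succeed with high probability.

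The hard part of the argument is the \emph{simultaneous} concentration step: after constant-depth conditioning and hyperplane rounding, one must argue that the $k$ rounded cut values are jointly $\agw$-approximate with non-negligible probability. This splits into two delicate ingredients: (i) the global correlation of the conditioned pseudo-distribution must be bounded by a function of $k$ that is small enough to justify a union bound, which requires the potential-function analysis of conditioning to be quantitatively strong; and (ii) the $\arccos$-based per-instance analysis of Goemans--Williamson must be upgraded from an \emph{expectation} statement to a \emph{high-probability} statement, via Bernstein/Hoeffding-type concentration over the edges of each graph, exploiting that the conditioned vectors make the crossing indicators of distinct edges approximately independent. The lower bound $\exp(-n^{c})$ on non-zero edge weights is precisely what makes these concentration inequalities effective and keeps the SDP-solver error negligible; without it, a handful of edges of weight $\exp(-\mathrm{poly}(n))$ could dominate the rounding error and invalidate the simultaneous $\agw$ guarantee.
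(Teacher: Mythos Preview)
This theorem is not proved in the present paper at all: it is quoted verbatim from \cite{DBLP:conf/soda/BhangaleKKST18} and used as a black box (see the citation attached to the theorem statement and the surrounding text). So there is no ``paper's own proof'' to compare against here.

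That said, your high-level plan---a level-$t(k)$ Sum-of-Squares relaxation, followed by conditioning to kill global correlation, followed by random-hyperplane rounding with a concentration argument across the $k$ instances---is indeed the architecture of \cite{DBLP:conf/soda/BhangaleKKST18}. Where your sketch goes wrong is the target ratio. You repeatedly claim the rounding will deliver $C_{S,G_i}\ge \agw\cdot c_i$ for all $i$ simultaneously, but the theorem you are asked to prove gives only $\asmc=0.878<\agw$, and the paper itself cites the hardness result of \cite{DBLP:conf/coco/BhangaleK20} showing that \smc\ cannot be approximated beyond $\agw-\delta$ for some $\delta\ge 10^{-5}$ under the Unique Games Conjecture. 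So a proof that actually achieved $\agw$ simultaneously would refute that conjecture; your argument \emph{must} lose something, and your sketch does not identify where.

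The loss is hiding exactly in the step you flag as ``the hard part''. The per-instance expectation bound from the $\arccos$ analysis gives $\agw\cdot c_i$, but upgrading this to a high-probability statement that survives a union bound over $k$ instances costs a quantitative slack: the concentration you get from bounded global correlation after constant-depth conditioning is not tight enough to preserve the full $\agw$ factor, and the residual correlation (which cannot be driven to zero with only $O_k(1)$ conditioning rounds) forces the final guarantee down to $0.878$. Your write-up treats the concentration as essentially lossless (``with constant probability, the output cut satisfies $C_{S,G_i}\ge \agw\cdot c_i$''), which is the gap.
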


We apply the \smc framework for unweighted graphs $G_i$. We let \textsc{SimultaneousMC} denote the algorithm that gets as input a constant number of unweighted graphs $G_i$, $i=1,\dots,k$, and returns a cut $\sol$ with the following property: for every cut $S^*$ and number $c$ such that $C_{S^*,G_i}\ge c$, for all $i$, it holds that $C_{\sol,G_i}\ge \asmc\cdot c$, for all $i$. This can be achieved by combining the algorithm given in Theorem~\ref{thm:simultaneous} 
(by appropriately scaling  the edge-weights and the thresholds) with a binary search on $c$.

In addition to the \smc algorithm, we use the $\agw$-approximation for \mc due to Goemans and Williamson  \cite{DBLP:journals/jacm/GoemansW95}, for $\agw \approx 0.8786$. We use \textsc{Goemans-Williamson} (with input $G$) to denote this algorithm.
Note that the actual value of the approximation factor $\asmc$ is slightly larger than $0.878$ but is less than $\agw$.

\subparagraph*{The Main Algorithm.}
The inputs to the algorithm (see the pseudocode in Algorithm~\ref{alg:kftc}) are an unweighted graph $G$, and parameters $k$ (number of faults) and $\epsilon$ (precision). First, it computes the set $H$ of heavy vertices via the subroutine \textsc{HeavyVertices}, then applies \textsc{SimultaneousMC} on a collection $\{G_{\neg F} : F\in \binom{H}{k}\}$ of subgraphs containing one subgraph for every failure of $k$ heavy vertices. 
The following notation is used:
for a subset $F\subseteq V$ of vertices, we let $G_{\neg F} = (V, E_{\neg F})$, where $E_{\neg F}=\{e\in E : e\cap F=\emptyset\}$. 
Note that in $G_{\neg F}$, we do not remove the vertices of $F$ from the graph, as opposed to $G-F$, but only the edges adjacent to $F$. 

The pair $(H,\sol)$ is \emph{shallow} if all vertices in $V-H$ have degree at most $3k$, and there are $k$ vertices in $H$ whose removal reduces the weight of $\sol$ below $3k^2/\epsilon$. To state this formally, let us introduce a notation that will be useful later too. 
For a cut $S\subseteq V$, we use $C_{S-k\times H}$ to denote the smallest size of the cut after the failure of any $k$ vertices from $H$, i.e., $C_{S-k\times H}=\min\{C_{S-F,G-F}:F\in \binom{H}{k}\}$. 
Thus, $(H,\sol)$ is shallow if we have $\max_{v\in V-H}d(v)\le 3k$ and $C_{\sol-k\times H}<3k^2/\epsilon$. If $(H,\sol)$ is not shallow, the algorithm simply returns $\sol$. Otherwise, we recompute the cut via \textsc{ShallowFTCut}, using alternative methods.

\begin{algorithm}[t]
\DontPrintSemicolon
\caption{$(\asmc-\epsilon)$-approximation for \kftc}
\label{alg:kftc}

\textbf{Input:} $G=(V,E)$, $k$, $\epsilon$\;
\textbf{Output:} $(\asmc-\epsilon)$-approximation for \kftc\;
$H\la$ \hyperref[alg:heavy_light]{\textsc{HeavyVertices}}($G$, $k$, $\epsilon$)\;
$\sol\gets$ \textsc{SimultaneousMC}($\{G_{\neg F} : F\in \binom{H}{k}\}$)\;
\eIf{$(H,\sol)$ is shallow}
{
    \Return\hyperref[alg:find_remaining_cut]{\textsc{ShallowFTCut}}($G$, $H$, $\sol$, $k$, $\epsilon$)\;
}{
\Return $\sol$
}
\end{algorithm}

The proof of Theorem~\ref{thm:kftc} is split into two parts, addressing shallow and non-shallow cases separately. The running time is dominated by {\smc}.
Before specifying further details, let us mention how the proof follows from the main lemmas addressing those cases.

\begin{proof}[Proof of Theorem~\ref{thm:kftc}]
Let $G$ be a graph and let $S^*$ be an optimal \kftc on $G$. Let $\sol$ be the output of Algorithm~\ref{alg:kftc} on $G$, $k$, $\epsilon$.  We show that $\varphi(\sol, k) \geq (\asmc-\epsilon)\cdot \varphi(S^*, k)$. Lemma~\ref{lem:good_cut_first_case} provides this for the non-shallow case, while Lemma~\ref{lem:good_cut_second_case} provides it in the shallow case. The algorithm is indeed polynomial, since the sub-routines are such, and the input to \textsc{SimultaneousMC} consists of ${\binom{|H|}{k}}=O(k/\epsilon)^k=O(1)$ subsets, where $|H|=O(k^2/\epsilon)$ is proven in Lemma~\ref{lem:heavysmall}.
\end{proof}

\begin{algorithm}

\DontPrintSemicolon
\caption{\textsc{HeavyVertices}}
\label{alg:heavy_light}

\textbf{Input:} $G=(V,E)$, $k$, $\epsilon$\;
\textbf{Output:} $H\subseteq V$, the set of heavy vertices\;
Let $v_1, \dots, v_n$ be an ordering of vertices by non-increasing degree \;

$\sigma \la 0$, $i\la 1$, $H\la \set{v_1,\dots, v_k}$ \;
\While{$d(v_{k+i}) > (\epsilon\cdot \asmc/k)\cdot \sigma$ \textbf{\emph{and}} $d(v_{k +i}) > 3k$ \label{line:heavy_light_cond}}{
    $\sigma \la \sigma + (d(v_{k+i})-3k)/4$\;
    $H \la H \cup \set{v_{k+i}}$\;
    $i \la i + 1$\;
}
\Return $H$\;
\end{algorithm}

The selection of heavy vertices (Algorithm~\ref{alg:heavy_light}) is done by a simple greedy procedure, where we sequentially select vertices in the heavy set $H$ in a non-increasing order by degree. The selection stops either when the remaining vertices ($V- H$) have a small degree (at most $3k$) or when $H$ has sufficiently many incident edges (used in Lemma~\ref{lem:good_cut_first_case}).  By Corollary~\ref{lem:output_of_stable_cut}, any cut can be transformed into one with a similar {\akftv}, where every vertex $v$ has crossing degree at least $(d(v)-k)/2$, and at least $(d(v)-3k)/2$, after $k$ failures. Thus, heavy vertices are guaranteed to contribute $\sigma$ in the ``stable version'' of every cut. The degree constraint ensures that we do not select vertices that are unnecessary, according to this logic, which helps us keep the size of $H$ bounded. 

In the analysis below, we often use the notation $\sigma_i$ to denote the value of $\sigma$ after the $i$-th iteration.

\begin{restatable}{lemma}{lemheavysmall}\label{lem:heavysmall}
Algorithm~\ref{alg:heavy_light} terminates within $t=4(3k^2 + k)/(\epsilon \cdot \asmc)$ iterations. In particular, $|H|\le t + k$.
\end{restatable}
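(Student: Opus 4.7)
The plan is to show that the running sum $\sigma$ grows fast enough as a function of the iteration count that the first continuation condition of the while loop must fail within $t$ iterations. Write $\gamma = \epsilon\asmc/k$, so that $t = 4(3k+1)/\gamma$. Let $\sigma_i$ denote the value of $\sigma$ after the $i$-th iteration; then $\sigma_i = \sum_{j=1}^{i}(d(v_{k+j})-3k)/4$. Since $v_1,\dots,v_n$ is ordered by non-increasing degree, for every $j \le i$ we have $d(v_{k+j}) \ge d(v_{k+i+1})$, which yields the key bound
\[
\sigma_i \;\ge\; \frac{i\,(d(v_{k+i+1}) - 3k)}{4}.
\]

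Now suppose the loop enters its $(i+1)$-st iteration. Both continuation conditions must then hold for $v_{k+i+1}$: in particular $d(v_{k+i+1}) > \gamma\sigma_i$ and $d(v_{k+i+1}) > 3k$. Substituting the lower bound on $\sigma_i$ into the first inequality (and using $d(v_{k+i+1}) - 3k > 0$), we rearrange to
\[
i \;<\; \frac{4\,d(v_{k+i+1})}{\gamma\,(d(v_{k+i+1}) - 3k)} \;=\; \frac{4}{\gamma}\,\Bigl(1 + \frac{3k}{d(v_{k+i+1}) - 3k}\Bigr).
\]
Because $G$ is unweighted, $d(v_{k+i+1})$ is a positive integer, and the second continuation condition forces $d(v_{k+i+1}) - 3k \ge 1$. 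Hence the parenthesised quantity is at most $1 + 3k$, and we conclude $i < 4(3k+1)/\gamma = 4(3k^2+k)/(\epsilon\asmc) = t$. Consequently the loop enters at most $t$ iterations, and since $H$ starts with $k$ vertices and grows by one per iteration, $|H| \le t + k$ as claimed.

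The only delicate point—and the only place a structural property of the input is actually used—is the final estimate: the ratio $d/(d-3k)$ is unbounded as $d \to 3k^+$, so bounding the number of iterations by a quantity depending only on $k$ and $\epsilon$ really does need the integrality of the degrees (i.e., the unweighted assumption). Beyond that, the argument is a short algebraic manipulation combining the sorted-degree ordering with the two continuation conditions, and no deeper structural claim is needed.
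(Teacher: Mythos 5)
Your proof is correct and takes essentially the same route as the paper's: both rest on the sorted-degree bound $\sigma_i \ge i\,(d(v_{k+i+1})-3k)/4$ combined with integrality of degrees (so that $d(v_{k+i+1})\ge 3k+1$, which is exactly the inequality $\frac{d}{d-3k}\le 3k+1$ you need), the only difference being that you argue the contrapositive (``if iteration $i+1$ is entered then $i<t$'') where the paper shows directly that the condition in Line~\ref{line:heavy_light_cond} fails after $t$ iterations. Your explicit remark that the unweighted assumption is what makes the ratio $d/(d-3k)$ bounded is a correct and worthwhile observation that the paper leaves implicit.
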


\begin{proof}
If $d(v_{k+t}) \leq 3k$, then by the condition in Line~\ref{line:heavy_light_cond}, the algorithm terminates before the $t$-th iteration; therefore, assume $d(v_{k+t})> 3k$. 
For every $i\leq t$, after the $i$-th iteration, it holds that $\sigma_i = \sum_{j=1}^i {(d(v_{k+j}) - 3k)/4}$;
thus, after $t$ iterations, 
\begin{align*}
\sigma_{t} &= \sum_{j=1}^{t} {\frac{d(v_{k+j}) - 3k}{4}} \geq t \cdot \frac {d(v_{k+t}) - 3k}{4} =   \frac{3k^2 + k}{\epsilon \cdot \asmc} \cdot \left(d(v_{k+t}) - 3k\right)\\
&=\frac{k}{\epsilon \cdot \asmc}\cdot d(v_{k+t}) + \frac{3k^2}{\epsilon \cdot \asmc} \cdot d(v_{k+t}) - \frac{3k^2(3k+1)}{\epsilon \cdot \asmc}\geq \frac{k}{\epsilon \cdot \asmc} d(v_{k+t})
\ ,
\end{align*}
where in the first inequality, we use the fact that the vertices are processed in a non-increasing order of degrees, and in the last inequality, we use the assumption that $d(v_{k+t})\ge 3k+1$.
It follows that $d(v_{k+t}) \leq (\epsilon \asmc/k)\cdot \sigma_{t}$, and using $d(v_{k+t+1}) \le d(v_{k+t})$, we get that the algorithm terminates within the first $t$ iterations, by the condition in Line~\ref{line:heavy_light_cond}.
\end{proof}

\subsubsection{The Non-shallow Case}\label{append:kftc_non_shallow}

Recall that in the non-shallow case, the cut $\sol$ and the set $H$ of heavy vertices are such that \emph{either $d_{max}=\max_{v\in V- H}d(v)> 3k$ or $C_{\sol-k\times H}\ge 3k^2/\epsilon$ holds}. 
Let $S^*_{smc}$ be an  optimal solution of Simultaneous \mc for the instances $\set{G_{\neg F}:F\in \binom{H}{k}}$. Let
$S^*_{ft}$ be an optimal solution for \kftc on $G$.

Let us begin with an observation connecting the three cuts $\sol,S^*_{smc}$, and $S^*_{ft}$.\label{tryB1}   
\begin{observation}\label{obs:p_star_greater_than_p}

It holds that $\ptilde\ge \asmc\cdot \pstar\ge \asmc\varphi(S^*_{ft},k)$.
\end{observation}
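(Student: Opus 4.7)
\bigskip
\noindent\textbf{Proof proposal.} The plan is to unpack the three quantities involved and chain two inequalities: an ``optimality'' step for the second part, $\pstar \ge \varphi(S^*_{ft},k)$, and an ``approximation'' step for the first part, $\ptilde \ge \asmc \cdot \pstar$. The single technical observation that I will use throughout is that for any cut $S\subseteq V$ and any subset $F\subseteq V$,
\[
C_{S-F,\, G-F} \;=\; C_{S,\, G_{\neg F}},
\]
because the edges incident to $F$ that were crossing $S$ disappear in both constructions (either by deleting the vertices in $F$ together with those edges, or by deleting those edges directly while keeping the endpoints); conversely, every edge that does not touch $F$ crosses $S$ in $G$ if and only if it crosses $S-F$ in $G-F$. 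This lets me rewrite
\[
\pstar \;=\; \min_{F\in \binom{H}{k}} C_{S^*_{smc},\,G_{\neg F}}, \qquad \ptilde \;=\; \min_{F\in \binom{H}{k}} C_{\sol,\,G_{\neg F}}.
\]

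For the second inequality, I would argue as follows. Since $S^*_{smc}$ is optimal for the Pareto \smc instance on $\{G_{\neg F}:F\in\binom{H}{k}\}$, it maximizes the common threshold $\min_F C_{\,\cdot\,,G_{\neg F}}$; in particular, applying this to the candidate cut $S^*_{ft}$ gives
\[
\pstar \;\ge\; \min_{F\in \binom{H}{k}} C_{S^*_{ft},\,G_{\neg F}} \;=\; \min_{F\in \binom{H}{k}} C_{S^*_{ft}-F,\,G-F}.
\]
Because $\binom{H}{k}\subseteq \binom{V}{k}$, taking the minimum over a subfamily can only increase the value, so the right-hand side is at least $\min_{F\in\binom{V}{k}} C_{S^*_{ft}-F,G-F} = \varphi(S^*_{ft},k)$, yielding $\pstar \ge \varphi(S^*_{ft},k)$.

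For the first inequality, I would invoke the approximation guarantee of \textsc{SimultaneousMC} directly. Setting $c=\pstar$, the cut $S^*_{smc}$ witnesses that $C_{S^*_{smc},G_{\neg F}}\ge c$ for every $F\in \binom{H}{k}$; hence the algorithm returns $\sol$ satisfying $C_{\sol,G_{\neg F}}\ge \asmc\cdot c=\asmc\cdot \pstar$ for each such $F$. Taking the minimum over $F\in \binom{H}{k}$ gives $\ptilde\ge \asmc\cdot \pstar$. Chaining the two inequalities yields the claim.

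I do not anticipate a real obstacle here; this is essentially a bookkeeping observation. The only mild subtlety is being careful that $C_{S-F,G-F}$ and $C_{S,G_{\neg F}}$ coincide (so that the notation $\pstar,\ptilde$, which is phrased in the $G-F$ style, matches the \smc algorithm's guarantee, which is phrased in the $G_{\neg F}$ style), and that the minimum in $\pstar$ is over heavy failures only, while $\varphi(S^*_{ft},k)$ is over all failures — the inclusion $\binom{H}{k}\subseteq \binom{V}{k}$ goes in the right direction.
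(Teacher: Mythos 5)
Your proposal is correct and follows essentially the same route as the paper: the first inequality is the $\asmc$-approximation guarantee of \textsc{SimultaneousMC} applied with threshold $c=\pstar$, and the second follows because $S^*_{smc}$ maximizes $C_{S-k\times H}$ while $\varphi(S^*_{ft},k)$ is a minimum over the larger family $\binom{V}{k}\supseteq\binom{H}{k}$. The identification $C_{S-F,G-F}=C_{S,G_{\neg F}}$ that you spell out is the only bookkeeping point, and you handle it correctly.
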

\begin{proof}
The first inequality holds because $\sol$ is an $\asmc$-approximation to the \smc problem as described in Theorem~\ref{thm:simultaneous}, while the second one holds since by definition, $S^*_{smc}$ is the cut $S$ optimizing $C_{S-k\times H}$, and $S^*_{ft}$ is the one optimizing $C_{S-k\times V}$.
\end{proof}

 We  also use the following lower bound on $\pstar$ in terms of node degrees, in order to show that the degree of light (non-heavy) vertices is small in comparison with the cut size even after a heavy failure, implying that light vertex faults can be tolerated. 
\begin{lemma}\label{lem:p_star_is_large}
$\pstar \geq \sum_{i=1}^{n-k}{(d(v_{k+i}) - 3k)_+/4}$ where $v_1,\dots, v_n$ are sorted by  degree, in descending order, and $(x)_+ = \max\{x, 0\}$ for any argument $x$.
\end{lemma}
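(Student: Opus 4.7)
The plan is to produce an explicit cut whose worst-case size under failures of $k$ heavy vertices already achieves the claimed bound, and then transfer the bound to $S^*_{smc}$ via its optimality. Since $S^*_{smc}$ is an optimal \smc solution for the family $\{G_{\neg F}:F\in\binom{H}{k}\}$ and $C_{S,G_{\neg F}}=C_{S-F,G-F}$, we have $\pstar=\max_S \min_{F\in\binom{H}{k}} C_{S-F,G-F}$, so any particular cut $S$ supplies a lower bound on $\pstar$ through its worst-case value under heavy failures.

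For the witness cut I would take $S^{stb}=\textsc{StabilizeCut}(G,\emptyset,k)$, which is $k$-stable by Corollary~\ref{lem:output_of_stable_cut}. The $k$-stability of $S^{stb}$ gives $d_{S^{stb}}(v)>(d(v)-k)/2$ for every $v\in V$. Fix any $F\in\binom{V}{k}$ and $v\in V-F$: since $G$ is simple and unweighted, removing $F$ destroys at most $k$ edges incident to $v$, so
\[
d_{S^{stb}-F,G-F}(v)\geq d_{S^{stb}}(v)-k > \frac{d(v)-3k}{2}.
\]
Combined with nonnegativity of crossing degrees, this yields $d_{S^{stb}-F,G-F}(v)\geq (d(v)-3k)_+/2$. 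Using $2\cdot C_{S^{stb}-F,G-F}=\sum_{v\in V-F}d_{S^{stb}-F,G-F}(v)$ then gives
\[
C_{S^{stb}-F,G-F}\geq \frac{1}{4}\sum_{v\in V-F}(d(v)-3k)_+.
\]

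Now I would minimize the right-hand side over $F\in\binom{H}{k}$: the minimum is attained by deleting the $k$ vertices of $H$ with the largest values of $(d(v)-3k)_+$. Because Algorithm~\ref{alg:heavy_light} initializes $H$ with $\{v_1,\dots,v_k\}$, and the degree ordering makes these exactly the $k$ maximizers of $(d(v)-3k)_+$ inside $H$, the worst $F$ is $\{v_1,\dots,v_k\}$, yielding
\[
\pstar\geq \min_{F\in\binom{H}{k}} C_{S^{stb}-F,G-F}\geq \frac{1}{4}\sum_{i=1}^{n-k}(d(v_{k+i})-3k)_+.
\]

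There is no real obstacle here; the key idea is simply to combine the per-vertex guarantee of a $k$-stable cut with the fact that deleting $k$ vertices removes at most $k$ edges per remaining vertex. The two easy side-checks are (i) that $G$ being simple and unweighted indeed caps the edge loss at $k$ per vertex, and (ii) that $\{v_1,\dots,v_k\}\subseteq H$ so the adversary truly has access to this worst choice, which is guaranteed by the initialization step of \textsc{HeavyVertices}.
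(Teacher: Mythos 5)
Your proof is correct and follows essentially the same route as the paper's: both take a $k$-stable cut as witness (the paper uses a $k$-stable optimal \kftc, you use \textsc{StabilizeCut}$(G,\emptyset,k)$, which is equally valid), derive the per-vertex bound $d_{S-F,G-F}(v)\geq (d(v)-3k)_+/2$ from stability plus the fact that at most $k$ incident edges are lost, sum over the surviving vertices, and bound below by the $n-k$ smallest values of $(d(v)-3k)_+$ before transferring to $\pstar$ via the optimality of $S^*_{smc}$. The only cosmetic difference is that you identify an explicit worst-case $F=\{v_1,\dots,v_k\}$ for the lower bound, whereas the paper simply notes that for \emph{every} $F$ the sum over $V-F$ dominates the sum of the $n-k$ smallest terms; both arguments are sound.
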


\begin{proof}
By Corollary~\ref{lem:output_of_stable_cut}, there is an optimal solution $S\subseteq V$ for \kftc that satisfies $d_S(v) > (d(v)-k)/2$, for every $v\in V$. Let $F\in \binom{H}{k}$. Since every vertex $v\notin F$ has at most $k$ neighbors in $F$, we have, in $G-F$, that $d_{S-F,G-F}(v)> (d(v)-k)/2-k = (d(v)-3k)/2$, and also $d_{S-F,G-F}(v) \ge 0$. Hence, we have
\[
C_{S\sm F, G\sm F}=\sum_{v\in V-F} d_{S-F,G-F}(v)/2\ge \sum_{v\in V-F} (d(v)-3k)_+/4 \ge \sum_{i = 1}^{n-k}{(d(v_{k+i}) - 3k)_+/4}\ ,
\]
where the last inequality holds by the assumption on the ordering of vertices (and since $|V-F|=n-k$). Since $F$ is an arbitrary $k$-subset of $H$, we conclude that $C_{S-k\times H}\ge \sum_{i=1}^{n-k} (d(v_{k+i})-3k)_+/4$, and the claim now follows from  $\pstar\ge C_{S-k\times H}$ (by the definition of $S^*_{smc}$).
\end{proof}

We are now ready to prove that in the non-shallow case, $\sol$ is a $(\asmc-\epsilon)$-approximation for \kftc.

\begin{restatable}{lemma}{lemgoodcutfirstcase} \label{lem:good_cut_first_case}
If $(H,\sol)$ is not shallow, then it holds that
$\varphi(\sol, k)\geq (\asmc-\epsilon)  \varphi(S^*_{ft},k)$, for an optimal \kftc $S^*_{ft}$.
\end{restatable}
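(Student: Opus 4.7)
The plan is to transfer the guarantee from the ``heavy-only'' adversary (which picks $k$ vertices from $H$, and whose cut value $\ptilde=C_{\sol-k\times H}$ is already controlled by Observation~\ref{obs:p_star_greater_than_p} via $\ptilde \ge \asmc\pstar\ge \asmc\cdot \varphi(S^*_{ft},k)$) to the general adversary, who may fail up to $k$ light vertices as well. Concretely, I will establish the degree-loss inequality
\[
\varphi(\sol, k) \;\ge\; \ptilde - k\cdot d_{max},
\]
where $d_{max}=\max_{v\in V-H}d(v)$, and then show that the non-shallow hypothesis forces $k\cdot d_{max}\le \epsilon\cdot \ptilde$. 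The claim then follows from $\varphi(\sol,k)\ge (1-\epsilon)\ptilde\ge (1-\epsilon)\asmc\cdot \varphi(S^*_{ft},k)\ge (\asmc-\epsilon)\varphi(S^*_{ft},k)$, using $\asmc\le 1$ in the last step.

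For the degree-loss inequality, I would fix any $F\in \binom{V}{k}$ and split it as $F=F_H\cup F_L$ with $F_H=F\cap H$ and $F_L=F\setminus H$. Since $|H|\ge k$ by the initialization of Algorithm~\ref{alg:heavy_light}, I can extend $F_H$ to some $F'\in \binom{H}{k}$; monotonicity of the crossing-degree function in its argument gives $d_\sol(F_H)\le d_\sol(F')$. The light part contributes at most $d_\sol(F_L)\le \sum_{v\in F_L} d(v)\le k\cdot d_{max}$ by subadditivity and the definition of $d_{max}$. Combining these with $d_\sol(F)\le d_\sol(F_H)+d_\sol(F_L)$ gives $C_{\sol-F,G-F}=C_\sol - d_\sol(F)\ge C_\sol-d_\sol(F')-k\cdot d_{max}\ge \ptilde-k\cdot d_{max}$, and minimizing over $F$ proves the inequality.

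To bound $k\cdot d_{max}$, I case-split on the two branches of the non-shallow disjunction. If $d_{max}>3k$, then the ``$>3k$'' clause on Line~\ref{line:heavy_light_cond} of Algorithm~\ref{alg:heavy_light} remained true at the first vertex not added to $H$, so termination must have been triggered by the other clause, yielding $d_{max}\le (\epsilon\asmc/k)\cdot \sigma$. A direct comparison of the $\sigma$-updates with Lemma~\ref{lem:p_star_is_large} gives $\sigma\le \pstar$, and combined with $\asmc\pstar\le \ptilde$ from Observation~\ref{obs:p_star_greater_than_p}, this yields $k\cdot d_{max}\le \epsilon\asmc\pstar\le \epsilon\cdot \ptilde$. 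Otherwise $d_{max}\le 3k$ and (by non-shallowness) $\ptilde\ge 3k^2/\epsilon$, so $k\cdot d_{max}\le 3k^2\le \epsilon\cdot \ptilde$ directly. The main subtlety is the extend-and-pay step that replaces a mixed $F$ by a purely heavy $F'$ of size exactly $k$ while losing only $k\cdot d_{max}$; everything else is calibration, checking that the thresholds hard-wired into Algorithm~\ref{alg:heavy_light} are tuned so that $k\cdot d_{max}$ is already an $\epsilon$-fraction of $\ptilde$.
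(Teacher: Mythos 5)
Your proposal is correct and follows essentially the same route as the paper's proof: split the fault set into its heavy and light parts, charge the light part at most $k\cdot d_{max}$, reduce the heavy part to the $C_{\sol-k\times H}$ guarantee from Observation~\ref{obs:p_star_greater_than_p}, and then use the same two-branch case analysis (termination clause of Algorithm~\ref{alg:heavy_light} plus Lemma~\ref{lem:p_star_is_large} when $d_{max}>3k$, and the $\ptilde\ge 3k^2/\epsilon$ bound otherwise) to show $k\cdot d_{max}\le\epsilon\cdot\ptilde$. The only differences are cosmetic (you normalize against $\ptilde$ where the paper uses $C_{\sol-H',G-H'}$), so nothing further is needed.
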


\begin{proof} Note that in this case, we have either $d_{max}=\max_{v\in V-H} d(v) > 3k$ or $\ptilde \geq 3k^2/\epsilon$.
Consider an arbitrary $F\in {\binom{V}{k}}$. It suffices to show that
$C_{\sol\sm F, G\sm F} \geq (1-\epsilon)\asmc\cdot \varphi(S^*_{ft}, k)$. 
Let $H'=F\cap H$ and $L'=F\cap (V-H)$ be the heavy and light (non-heavy) vertices in $F$, respectively.  Observe that $C_{\sol-F,G-F}$ is obtained from $C_{\sol-H',G-H'}$ by removing the set $L'$ of at most $k$ light vertices. Since each vertex in $L'$ has degree at most $d_{max}$,
\begin{equation}\label{eq:stilde0}
C_{\sol-F,G-F}\ge C_{\sol-H',G-H'}-kd_{max}\ .
\end{equation}
On the other hand, we have, from Observation~\ref{obs:p_star_greater_than_p} (and using $H'\subseteq H$, $|H'|\le k$) that
\begin{equation}\label{eq:stilde01}
C_{\sol-H',G-H'}\ge C_{\sol-k\times H}\ge \asmc\cdot \pstar\ge \asmc\varphi(S^*_{ft}, k)\ .
\end{equation}
In the following, we show that $d_{max}\le (\epsilon/k)C_{\sol-H',G-H'}$, which implies the claim by combining (\ref{eq:stilde0}) and (\ref{eq:stilde01}).

If $d_{max}\le 3k$ and $\ptilde\ge 3k^2/\epsilon$, then the claim holds, since $C_{\sol-H',G-H'}\ge \ptilde\ge (k/\epsilon)d_{max}$. Thus, we may henceforth focus on the case $d_{max}>3k$.
Using (\ref{eq:stilde01}) and Lemma~\ref{lem:p_star_is_large}, we have
\begin{equation}\label{eq:stilde1}
C_{\sol-H',G-H'}\ge  \asmc\cdot \pstar \geq \asmc \cdot \sum_{i=1}^{n-k}{(d(v_{k+i}) - 3k)_+/4}\ .
\end{equation}
Let $t$ be the number of iterations after which Algorithm~\ref{alg:heavy_light} terminates. Since $d_{max} > 3k$, it follows from the algorithm description that the vertex $v_{k+t+1}$ satisfies $d(v_{k+t+1}) \le (\epsilon\asmc/k)\sigma_{t}$ and that $d_{max}=d(v_{k+t+1})$. Hence $d_{max}\le (\epsilon\asmc/k)\sigma_t$. On the other hand, we have
\begin{equation}\label{eq:stilde2}
\sigma_t=\sum_{i=1}^{t}{(d(v_{k+i}) - 3k)/4}\le \sum_{i=1}^{n-k}{(d(v_{k+i}) - 3k)_+/4}\le (1/\asmc)C_{\sol-H',G-H'}\ ,
\end{equation}
where we used (\ref{eq:stilde1}). This gives us the bound $d_{max}\le (\epsilon\asmc/k)\sigma_t\le (\epsilon/k)C_{\sol-H',G-H'}$, as claimed, which completes the proof. 
\end{proof}

\subsubsection{The Shallow Case}\label{append:kftc_shallow}

Recall that $\X=\set{v\in V | \space (d(v)-3k)/2 > \ptilde/\asmc}$, $G_R=G-\X$ is the graph obtained by removing the super-heavy vertices. We let $m_R$ be the number of edges in $G_R$,  $n_R$ be the number of non-isolated vertices in $G_R$, and  $\md=6k^2/(\asmc \epsilon)+3k$ be a parameter. We show that the algorithm \textsc{ShallowFTCut}, as described in Algorithm~\ref{alg:find_remaining_cut}, returns a cut that in the shallow case is a $(\asmc-\epsilon)$-approximation.\label{tryB3}

\begin{algorithm}
\DontPrintSemicolon
\caption{\textsc{ShallowFTCut}}
\label{alg:find_remaining_cut}

\textbf{Input:} $G=(V,E)$, $H$, $\sol$, $k$, $\epsilon$\;
\textbf{Output:} Cut $\hat S \subseteq V$\;
\eIf{$m_R < 2k\md/(\asmc\epsilon)$}{
\For{every $S'\subseteq V_R$ ($V_R$ is of constant size)} {
 Compute $\varphi(S', k-|\X|)$ \;
}
$\hat S\gets \arg\max_{S'\subseteq V_R }\varphi(S', k-|\X|)$\;
}{
$\hat S\gets $ \textsc{Goemans-Williamson}$(G_R)$\;
}
\While{$\exists v\in \X$ such that $d_{\hat S}(v) \le (d(v) - k)/2$ }{
    $\hat S\la \hat S \oplus v$\;
}
\Return $\hat S$\;
\end{algorithm}

Let us begin with two observations, which show that $\X$ is indeed small  and is contained in every worst-case fault set in stable cuts, and that the vertices outside $\X$ have small degree, bounded by $\ell$.
\begin{lemma}\label{lem:X_failure}
Let $S\subseteq V$ be a cut such that $d_S(v) \geq (d(v)-k)/2$, for every $v\in \X$. If $F\in {\binom{V}{k}}$ is such that $C_{S\sm F, G\sm F} = \varphi(S, k)$, then $\X\subseteq F$. In particular, $|\X|\le k$.
\end{lemma}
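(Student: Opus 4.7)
The plan is to derive a contradiction from assuming that some vertex $v\in\X$ is not in $F$, by showing that the contribution of $v$ alone to the cut $C_{S-F,G-F}$ would then exceed the global bound on $\varphi(S,k)$ that follows from Observation~\ref{obs:p_star_greater_than_p}.

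First I would assemble the key global inequality. Since $S^*_{ft}$ maximizes $\varphi$ over all cuts, $\varphi(S,k)\le \varphi(S^*_{ft},k)$, and Observation~\ref{obs:p_star_greater_than_p} gives $\ptilde\ge \asmc\cdot \pstar\ge \asmc\cdot \varphi(S^*_{ft},k)$. Combining these yields
\[
\varphi(S,k)\ \le\ \ptilde/\asmc .
\]
Next, I would estimate the local contribution of any $v\in\X\setminus F$ to $C_{S-F,G-F}$. Because $v\notin F$, removing the $k$ vertices in $F$ can destroy at most $k$ of the crossing edges incident to $v$, so $d_{S-F,G-F}(v)\ge d_S(v)-k$. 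Using the hypothesis $d_S(v)\ge (d(v)-k)/2$ gives $d_{S-F,G-F}(v)\ge (d(v)-3k)/2$. By the very definition of $\X$, this quantity is strictly larger than $\ptilde/\asmc$.

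Putting the two pieces together: $C_{S-F,G-F}\ge d_{S-F,G-F}(v) > \ptilde/\asmc \ge \varphi(S,k)$, which contradicts $C_{S-F,G-F}=\varphi(S,k)$. Hence no such $v$ exists, i.e., $\X\subseteq F$. For the ``in particular'' part, it suffices to exhibit one cut $S$ satisfying the hypothesis; by Corollary~\ref{lem:output_of_stable_cut} any $k$-stable cut satisfies $d_S(v)>(d(v)-k)/2$ for every $v$, and such a cut exists (e.g.\ apply \textsc{StabilizeCut} to any cut). Since $|F|=k$ and $\X\subseteq F$, we conclude $|\X|\le k$.

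The only subtlety worth flagging is the ``$d_{S-F,G-F}(v)\ge d_S(v)-k$'' step, which must use that $v\notin F$ (so $v$ is still present) and that each vertex in $F$ can kill at most one crossing edge at $v$. Beyond this, the proof is a direct chaining of the definition of $\X$ with the Observation~\ref{obs:p_star_greater_than_p} bound; I do not anticipate a real obstacle.
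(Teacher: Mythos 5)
Your proof is correct and follows essentially the same route as the paper's: assume $v\in\X\setminus F$, lower-bound $d_{S-F,G-F}(v)\ge (d(v)-3k)/2$ using the hypothesis on $d_S(v)$ and $|F|\le k$, and chain this through the definition of $\X$ and Observation~\ref{obs:p_star_greater_than_p} to contradict optimality. Your explicit handling of the ``in particular'' clause via a stable cut is a welcome addition the paper leaves implicit.
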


\begin{proof}
Let $v\in \X$, and assume, towards a contradiction, that $v\notin F$. We have $d_S(v)\ge (d(v)-k)/2$, and since $|F|\le k$, 
 $d_{S-F,G-F}(v)\ge (d(v)-k)/2-k=(d(v)-3k)/2$. Since $v\notin F$, $C_{S-F,G-F}\ge d_{S-F,G-F}(v)$; hence, 
\[
C_{S-F,G-F}\ge(d(v)-3k)/2 > \ptilde/\asmc\ge \pstar\ge \varphi(S^*_{ft}, k)\ .
\]
where we use $v\in \X$ in the second inequality, and Observation~\ref{obs:p_star_greater_than_p} in the last two inequalities. This implies that $\varphi(S,k)=C_{S-F,G-F}> \varphi(S^*_{ft},k)$, which is a contradiction to the definition of $S^*_{ft}$.
\end{proof}

\begin{observation}\label{obs:degreeboundell}
$d(v) \leq \md$ holds for all $v\in V- \X$.
\end{observation}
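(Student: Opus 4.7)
The plan is to unfold the definitions and use the shallow-case hypothesis directly. Recall that we are in the shallow case, meaning that $(H,\sol)$ is shallow: in particular, $C_{\sol-k\times H}<3k^2/\epsilon$, i.e., $\ptilde<3k^2/\epsilon$. Also recall that $\X=\set{v\in V \mid (d(v)-3k)/2 > \ptilde/\asmc}$ and $\md=6k^2/(\asmc\epsilon)+3k$.

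The argument is a one-line contrapositive. Take any $v\in V-\X$. By the definition of $\X$, this means $(d(v)-3k)/2\le \ptilde/\asmc$. Plugging in the shallow-case bound $\ptilde<3k^2/\epsilon$, we get
\[
\frac{d(v)-3k}{2} \;\le\; \frac{\ptilde}{\asmc} \;<\; \frac{3k^2}{\asmc\,\epsilon}\ ,
\]
so $d(v) < 6k^2/(\asmc\epsilon)+3k=\md$, as required.

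There is no real obstacle here: the bound is true by construction of $\md$ as the threshold that separates $\X$ from $V-\X$ under the shallow-case hypothesis on $\ptilde$. The only thing to double-check when writing this out carefully is the convention on strict vs.\ non-strict inequalities (the definition of $\X$ uses strict inequality, and shallowness gives a strict upper bound on $\ptilde$, so the chain goes through with $d(v)<\md$, which of course implies $d(v)\le \md$ as stated).
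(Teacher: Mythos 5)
Your proof is correct and is essentially identical to the paper's: both unfold the definition of $\X$ to get $(d(v)-3k)/2\le \ptilde/\asmc$, plug in the shallow-case bound $\ptilde<3k^2/\epsilon$, and rearrange to obtain $d(v)<\md$. The remark about strict versus non-strict inequalities is a fine point but matches the paper's chain of inequalities exactly.
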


\begin{proof}
For every $v\notin \X$,  it holds that $(d(v)-3k)/2 \leq \ptilde/\asmc$, which implies that $d(v) \leq 2\ptilde/\asmc + 3k<6k^2/(\asmc \epsilon)+3k=\md$, since, by our assumption, $\ptilde < 3k^2/\epsilon$.
\end{proof}

Let $\hat S$ be the output of Algorithm~\ref{alg:find_remaining_cut}. First, we show that $\hat S-\X$ is a good solution in $G_R$. Then, we prove the main claim of this subsection, that is, that $\hat S$ is a $(\asmc-\epsilon)$-approximation for \kftc in the shallow case.\label{tryB5}

\begin{lemma}\label{lem:largercut}
If $d_{max} \leq 3k$ and $\ptilde < 3k^2/\epsilon$, then $\hat S-\X$ is a $(1-\epsilon)\asmc$-approximation for \xftc{(k-|\X|)} in $G_R$.
\end{lemma}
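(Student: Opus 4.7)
The plan is to split the analysis according to the two branches of \textsc{ShallowFTCut}, exploiting the fact that $\hat S-\X$ is unaffected by the trailing while loop (which only toggles vertices of $\X$). The key structural fact throughout is Observation~\ref{obs:degreeboundell}: under the shallow hypothesis $d_{max}\le 3k$ and $\ptilde<3k^2/\epsilon$, every vertex of $G_R$ has degree at most $\md$.

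In the first branch ($m_R<2k\md/(\asmc\epsilon)$), the degree bound implies that $G_R$ has at most $2m_R=O(1)$ non-isolated vertices, so $V_R$ effectively reduces to a constant-size set (isolated vertices contribute nothing to any cut or to its $(k-|\X|)$-FT value in $G_R$). Enumerating all subsets and computing their FT values in $G_R$ therefore returns an exact optimal $(k-|\X|)$-FTcut in $G_R$; since $\hat S-\X$ coincides with this subset, we obtain a $1$-approximation, strictly stronger than $(1-\epsilon)\asmc$.

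In the second branch ($m_R\ge 2k\md/(\asmc\epsilon)$), $\hat S-\X$ is a $\agw$-approximate max-cut in $G_R$ by Goemans--Williamson. Let $C^*_{MC}$ denote the max-cut size in $G_R$ and let $S^*_R$ be an optimal $(k-|\X|)$-FTcut in $G_R$. Any max-cut trivially upper bounds any FT value, so $\varphi(S^*_R,k-|\X|,G_R)\le C^*_{MC}$, and since $G_R$ has $m_R$ edges, $C^*_{MC}\ge m_R/2$. The main obstacle is to relate the max-cut guarantee to an FT guarantee: using the degree bound, any fault set of size at most $k$ removes at most $k\md$ weight from $C_{\hat S-\X,G_R}$, hence
\[
\varphi(\hat S-\X,k-|\X|,G_R)\ge C_{\hat S-\X,G_R}-k\md\ge \agw\cdot C^*_{MC}-k\md.
\]
The threshold on $m_R$ gives $k\md\le \asmc\epsilon\cdot m_R/2\le \asmc\epsilon\cdot C^*_{MC}$, so this is at least $(\agw-\asmc\epsilon)C^*_{MC}\ge (\agw-\asmc\epsilon)\varphi(S^*_R,k-|\X|,G_R)$. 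Since $\agw>\asmc$, we get $\agw-\asmc\epsilon\ge \asmc-\asmc\epsilon=(1-\epsilon)\asmc$, which closes the argument.

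The thresholds in \textsc{ShallowFTCut} (both $2k\md/(\asmc\epsilon)$ and the definition of $\md$ itself) are calibrated precisely so that the additive $k\md$ slack from faults becomes an $\epsilon$-fraction of the max-cut in the GW branch, and so that the non-isolated vertex count remains constant in the brute-force branch. The only delicate point is the uniform degree bound $d_{G_R}(v)\le \md$, which is immediate from Observation~\ref{obs:degreeboundell} under the shallow hypothesis; the rest is arithmetic.
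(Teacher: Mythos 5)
Your proof is correct and follows essentially the same route as the paper: the brute-force branch returns an exact optimum, and in the Goemans--Williamson branch the degree bound $d(v)\le\md$ from Observation~\ref{obs:degreeboundell} together with the threshold $m_R\ge 2k\md/(\asmc\epsilon)$ makes the additive $k\md$ loss from faults an $\epsilon$-fraction of the cut, after which $\agw\ge\asmc$ and the fact that the optimal FT value is at most the max-cut value finish the argument. The only difference is cosmetic bookkeeping (you absorb $k\md$ into $\asmc\epsilon\cdot C^*_{MC}$ where the paper absorbs it into $\epsilon\cdot C_{\hat S-\X,G-\X}$), plus two helpful points the paper leaves implicit: that the trailing while loop does not alter $\hat S-\X$, and that $|V_R|\le 2m_R=O(1)$ justifies the brute-force enumeration.
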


\begin{proof}
Indeed, in the case $m_R<2k\md/(\asmc\epsilon)$, $\hat S - \X$ is an optimal solution for \xftc{(k-|\X|)} by the description of the algorithm, and the claim follows. If $m_R\ge 2k\md/(\asmc\epsilon)$, note that $C_{\hat S-\X,G-\X}\ge \agw (m_R/2)\ge k\md/\epsilon$, since $\hat S-\X$ is an $\agw$-approximation for \mc in $G_R$. By Observation~\ref{obs:degreeboundell}, $d(v)\le \md$ holds for each vertex in $V- \X$. If $k - |\X|$ vertices fail, the cut size is still at least $C_{\hat S-\X,G-\X}-k\md\ge (1-\epsilon)C_{\hat S-\X,G-\X}$. The claim then follows from the fact that the optimal \aftv is bounded by the optimal \mc size, $\hat S-\X$ is an $\agw$-approximation for \mc in $G_R$, and $\agw\ge\asmc$
\end{proof} 

\begin{restatable}{lemma}{lemgoodcutsecondcase}\label{lem:good_cut_second_case}
If $(H,\sol)$ is shallow, then it holds that $\varphi(\hat S, k) \geq (\asmc-\epsilon)  \cdot \varphi(S^*_{ft}, k)$, for an optimal \kftc $S^*_{ft}$.
\end{restatable}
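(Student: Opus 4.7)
The plan is to combine the near-optimality of $\hat S-\X$ as a $(k-|\X|)$-fault tolerant cut in the reduced graph $G_R$ (Lemma~\ref{lem:largercut}) with the structural fact that every worst-case fault set of a cut that is stable on $\X$ must contain $\X$ (Lemma~\ref{lem:X_failure}). First I would check that the stabilization loop at the end of \textsc{ShallowFTCut} behaves correctly: it only flips vertices $v\in\X$, each flip satisfies the hypothesis $d_{\hat S}(v)\le (d(v)-k)/2$ of Lemma~\ref{lem:k_greedy_step} applied with parameter $k$, so each iteration strictly increases $C_{\hat S}$ (by at least $k\ge 1$) and does not decrease $\varphi(\hat S,k)$. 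Therefore the loop terminates in at most $m$ steps, and the final cut $\hat S$ satisfies $d_{\hat S}(v)>(d(v)-k)/2$ for every $v\in\X$. Because only elements of $\X$ are flipped, the set $\hat S\cap V_R$ is identical to its pre-stabilization version, so Lemma~\ref{lem:largercut} still yields
\[
\varphi(\hat S - \X,\, k-|\X|,\, G_R) \;\ge\; (1-\epsilon)\asmc\cdot \varphi(S_R^*,\, k-|\X|,\, G_R),
\]
where $S_R^*$ denotes an optimal $(k-|\X|)$-FT cut in $G_R$.

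Next I would invoke Lemma~\ref{lem:X_failure} with $S=\hat S$: since stabilization guarantees $d_{\hat S}(v)\ge (d(v)-k)/2$ for every $v\in\X$, any adversary-optimal fault set $F^*$ attaining $C_{\hat S-F^*,G-F^*}=\varphi(\hat S,k)$ must contain $\X$. Writing $F^*=\X\cup L$ with $L\subseteq V_R$ and $|L|=k-|\X|$, one has
\[
\varphi(\hat S,k,G) \;=\; C_{\hat S-F^*,\,G-F^*} \;=\; C_{(\hat S-\X)-L,\,G_R-L} \;\ge\; \varphi(\hat S-\X,\,k-|\X|,\,G_R).
\]

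Finally, to relate this to $\varphi(S^*_{ft},k,G)$, I would observe that for every cut $S\subseteq V$,
\[
\varphi(S-\X,\, k-|\X|,\, G_R) \;=\; \min_{L\in\binom{V_R}{k-|\X|}} C_{S-(\X\cup L),\, G-(\X\cup L)} \;\ge\; \varphi(S,k,G),
\]
because the minimum in the middle is over a sub-family of the $k$-subsets of $V$. Applying this with $S=S^*_{ft}$ and combining with the maximality of $S_R^*$ gives $\varphi(S_R^*,k-|\X|,G_R)\ge \varphi(S^*_{ft}-\X,k-|\X|,G_R)\ge \varphi(S^*_{ft},k,G)$. Chaining the three bounds and using $(1-\epsilon)\asmc\ge\asmc-\epsilon$ (since $\asmc\le 1$) yields the claim. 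The one delicate point is confirming that the stabilization loop \emph{simultaneously} preserves the hypothesis of Lemma~\ref{lem:largercut} (which holds because $\hat S\cap V_R$ is untouched) and establishes the hypothesis of Lemma~\ref{lem:X_failure} (which is exactly the loop's exit condition); once this is verified, the rest is a short chain of inequalities.
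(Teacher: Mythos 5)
Your proof is correct and follows essentially the same route as the paper: stabilize $\hat S$ on $\X$ so that Lemma~\ref{lem:X_failure} forces $\X$ into every worst-case fault set, reduce to the $(k-|\X|)$-fault problem on $G_R$, and apply Lemma~\ref{lem:largercut}. The only (minor, and if anything cleaner) deviation is on the optimal side: the paper invokes Corollary~\ref{lem:output_of_stable_cut} to get a \emph{stable} optimal cut $S^*$ and applies Lemma~\ref{lem:X_failure} to it to obtain $\varphi(S^*-\X,k-|\X|,G_R)=\varphi(S^*,k)$, whereas you only need the one-line monotonicity observation that restricting the adversary to fault sets containing $\X$ can only increase the minimum, i.e.\ $\varphi(S^*_{ft}-\X,k-|\X|,G_R)\ge\varphi(S^*_{ft},k,G)$.
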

\begin{proof}
Note that in this case, $d_{max} \leq 3k$, and $\ptilde < 3k^2/\epsilon$.
Let $S^*\subseteq V$ be an optimal solution for \kftc on $G$ such that for every $v\in V$, $d_S(v) \geq (d(v)-k)/2$; such cut exists, by Corollary~\ref{lem:output_of_stable_cut}. Note that $\hat S$ satisfies $d_{\hat S}(v)\ge (d(v)-k)/2$, for every $v\in \X$.  
By Lemma~\ref{lem:X_failure}, it holds that $\X$ belongs to \emph{every} worst-case failure set for those cuts, that is, if $F^*\in {\binom{V}{k}}$ is such that $C_{S^*\sm F^*, G\sm F^*} = \varphi(S^*, k)$, then $\X\subseteq F^*$, and similarly, if $\hat F\in {\binom{V}{k}}$ is such that $C_{\hat S\sm \hat F, G\sm \hat F} = \varphi(\hat S, k)$, then $\X\subseteq \hat F$. We can conclude that $\varphi(S^*-\X,k-|\X|,G_R)=\varphi(S^*,k)$, and similarly, $\varphi(\hat S-\X,k-|\X|,G_R)=\varphi(\hat S,k)$. Lemma~\ref{lem:largercut} shows that $\varphi(\hat S-\X,k-|\X|,G_R)\ge (1-\epsilon)\asmc\cdot \varphi(S^*-\X,k-|\X|,G_R)$. Combining these together, we see that $\varphi(\hat S,k)\ge (1-\epsilon)\asmc\cdot  \varphi(S^*,k)=(1-\epsilon)\asmc\cdot \varphi(S^*_{ft},k)\ge (\asmc-\epsilon)  \cdot \varphi(S^*_{ft}, k)$.
\end{proof}

\subsection{A Combinatorial 1/2-Approximation for a Single Fault}\label{sec:CombLS}

In the case of a single fault, we have the following result, that is, a simple and efficient $1/2$-approximation for the case of a single fault. Moreover, we show that an \aftv of $(m-\Delta)/2$ can be achieved, for $\Delta\ge  3$, while $m-\Delta$ is an (easy) upper bound. 
\greedyftc*

\subparagraph{The Challenge.} In the discussion below, we call a vertex $v$ \emph{critical} for a cut $S$ if $C_{S-v,G-v}=\varphi(S)$.

It is well-known (and easy to show) that every stable cut is a $1/2$-approximate \mc. This even holds for \ftc, with $\Delta=2$ (see Lemma~\ref{lem:specific_case_half}). However, in general, while we know that greedy steps (moving a vertex $v$ with $d(v) < d_S(v)/2$) never decrease the \aftv (Lemma~\ref{lem:k_greedy_step}), a stable cut can be a poor approximation for \ftc.
Consider, for example, a graph that consists of $t$ triangles with a single common vertex $u$. Note that $d(u) = \Delta= 2t$, $d(v) = 2$, for every $v\neq u$, and $m = 3t$. The cut $S'=\set{u}$ is a stable  cut, with  $\varphi(S') = 0$. In order to transform $S'$ into a $1/2$-approximation, we have to decrease the crossing degree of the critical vertex $u$ without decreasing the size of the cut. This can be done by moving a neighbor $v$ of $u$ from the opposite side of the cut, since $d_{S'}(v) = d(v)/2$.
 
In general, moving such vertex $v$ (which we call a \emph{neutral} move below) does not change the size of the cut, and  decreases the crossing degree of $u$. Nevertheless, it does not always imply that the \aftv increases, as there can be an additional critical vertex $u'$ in $S$ that is not affected, or that moving $v$ creates a new critical vertex $u''$ with the same crossing degree as $u$. 
 
Our algorithm is based on some key structural properties of stable cuts that we prove.
Essentially, we show that any given cut $S$ with {\aftv} less than $(m-\Delta)/2$ either admits a \emph{greedy move}, or a \emph{neutral move followed by a greedy move}, or a \emph{neutral move that increases the {\aftv}} (see Lemma~\ref{lem:has_greedy_step}). Our algorithm is then a repeated application of such steps until the cut has the desired {\aftv}; thus, it can be seen as a local search over two-move combinations, for maximizing the \emph{sum} of the cut size and {\aftv}.

Our key technical observation is that in a balanced cut $S$ with an {\aftv} less than $(m-\Delta)/2$, the critical vertex is unique. Moreover, letting $x_S(v)=d_S(v)-d(v)/2$ denote the \emph{excess} contribution of  a vertex $v$ to the cut, it holds for the critical vertex $u$ that $x_S(u)>\sum_{v\neq u} x_S(v) + \Delta-d(u)$ (see Lemma~\ref{lem:critical_xS_ineq}). Note that in a stable cut $S$, $x_S(v)$ is a non-negative multiple of $1/2$, for all $v$. In most typical cases (e.g., when $d(u)<\Delta$, or when there are not too few nodes $v$ with $x_S(v)>0$), the inequality above quickly gives us the properties we claimed. However, covering all cases turns out to be quite tedious (see Lemma~\ref{lem:has_greedy_step}).

\subparagraph*{Outline of the Algorithm.} We give the pseudocode of the algorithm in Algorithm~\ref{alg:Greedy_ftc}. If $\Delta \le 2$ then the algorithm returns an arbitrary stable cut. For $\Delta > 2$, the algorithm initializes a solution $\sol$ to be the empty set, and then updates it in iterations, until $\varphi(\sol) \geq (m-\Delta)/2$. In each iteration, the algorithm chooses a vertex $v$ and moves it to the other side of the cut, as follows.
First, if there is a vertex $v$ such that $\varphi(\sol\oplus v) \geq (m-\Delta)/2$ then the algorithm moves $v$. We call this a \emph{type-$0$} step. Note that after applying a type-$0$ step, the algorithm terminates. Otherwise, if there is a vertex $v$ with $d_{\sol}(v) < d(v)/2$, then the algorithm moves it to the other side of the cut. This step is called a \emph{type-$1$ step}. Otherwise, if there is a vertex $v$ with $d_{\sol}(v) = d(v)/2$ such that $\varphi(\sol\oplus v)\ge\varphi(\sol)$ and a type-$1$ step can be applied to $\sol\oplus v$, then the algorithm moves $v$ to the other side of the cut. This is called a \emph{build-up step}. Finally, if none of the above conditions hold, the algorithm takes a vertex $v$ with $d_{\sol}(v) = d(v)/2$ that satisfies $\varphi(\sol\oplus v) > \varphi(\sol)$, and moves $v$ to the other side of $\sol$. We prove that in this case, such a vertex exists, and hence this covers all possibilities. The latter step is called a \emph{type-$2$ step}. 

\begin{algorithm}[H]
\DontPrintSemicolon
\caption{Combinatorial $1/2$-approximation for \ftc}
\label{alg:Greedy_ftc}

\textbf{Input:} $G=(V,E)$ \;
\lIf{$\Delta\le 2$}{\Return {\textsc{StabilizeCut}}$(G,\emptyset, 1)$}
$\sol\la \emptyset$\;
\While{$\varphi(\sol) < (m-\Delta)/2$ }{
    
    \If{$\exists v, \varphi(\sol\oplus v) \geq (m-\Delta)/2$}{
        $\sol\la\sol\oplus v$ \tcp*[f]
        {type-$0$ step}}
    \ElseIf{$\exists v,d_{\sol}(v)<d(v)/2$}{
    $\sol \la \sol\oplus v$ \tcp*[f]{type-$1$ step}}
    \ElseIf{$\exists v,w,\left(d_{\sol}(v)=d(v)/2 \text{ \textbf{\emph{and}} } \varphi(\sol\oplus v)\ge\varphi(\sol) \text{ \textbf{\emph{and}} }  d_{\sol\oplus v}(w)<d(w)/2\right)$}{
    $\sol \la \sol\oplus v$ \tcp*[f]{build-up for another type-$1$ step}}
    \Else{
    $v\gets $ a vertex such that $d_{\sol}(v)=d(v)/2$ and $\varphi(\sol\oplus v)>\varphi(\sol)$\;
    $\sol\gets \sol\oplus v$ \tcp*[f]{type-$2$ step}}
    }
\Return $\sol$\;
\end{algorithm}

\subparagraph*{Outline of the Proof.} The approximation is based on the a simple observation, that the optimal \aftv is bounded by $m-\Delta$, which holds since after failing a degree-$\Delta$ vertex, only $m-\Delta$ edges remain in the graph. Thus, in the proof of Theorem~\ref{thm:Greedy_ftc}, our aim is to get a cut with \aftv $(m-\Delta)/2$.   
If $\Delta \leq 2$, this is not always achievable (consider, e.g., a triangle). We show that nevertheless, any stable cut is a $1/2$-approximation (see Lemma~\ref{lem:specific_case_half}). 
If $\Delta > 2$, we show that in every two consecutive iterations, either the size of the cut increases or the \aftv of the cut increases, while both never decrease. Since $\varphi(\sol)$ and $C_{\sol}$ are bounded, we get that the algorithm terminates. By the pseudocode of the algorithm it follows that the algorithm terminates only when $\varphi(\sol)\geq (m-\Delta)/2$, i.e., $\sol$ is a $1/2$-approximation.

The approximation is based on the following simple observation, which holds since after failing a degree-$\Delta$ vertex, only $m-\Delta$ edges remain in the graph.\label{tryC1}
\begin{observation}\label{obs:greedy:max_ft_size}
Let $S^*$ be an optimal  \ftc in a graph $G$. It holds that $\varphi(S^*) \leq m-\Delta$.
\end{observation}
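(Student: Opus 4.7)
The plan is to exhibit a specific adversarial choice of fault that forces the value of $\varphi(\cdot)$ on \emph{any} cut down to at most $m-\Delta$. The adversary has the freedom to pick any single vertex to fail, so by choosing a vertex of maximum degree, the number of remaining edges in the graph is exactly $m-\Delta$, and the crossing edges of any cut are a subset of these.

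First I would fix $v^*\in V$ to be a vertex attaining the maximum degree, i.e., $d(v^*)=\Delta$. Since in the $k=1$ setting the adversary's fault set $F$ ranges over all singletons, the definition of $\varphi$ gives, for any cut $S\subseteq V$,
\[
\varphi(S,1,G) \;=\; \min_{F\in \binom{V}{1}} C_{S-F,\,G-F} \;\le\; C_{S-v^*,\,G-v^*}\ .
\]
Next, I would observe that $|E(G-v^*)| = m - d(v^*) = m - \Delta$, because removing $v^*$ deletes exactly the $\Delta$ edges incident to $v^*$. Since a cut's weight is bounded above by the total number of edges in the underlying graph (in the unweighted setting of this section), $C_{S-v^*,\,G-v^*} \le m-\Delta$. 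Combining these two inequalities gives $\varphi(S,1,G)\le m-\Delta$ for every cut $S$, and in particular for the optimal \ftc $S^*$.

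There is no real obstacle here; the statement follows directly from the definition of $\varphi$ by instantiating it at the single fault $F=\{v^*\}$ and observing that a cut size cannot exceed the number of edges. The only thing to be slightly careful about is to make the argument explicit that the adversary's minimum is taken over all singletons, hence in particular over $\{v^*\}$.
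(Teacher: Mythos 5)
Your proposal is correct and is exactly the paper's argument (the paper only states it in one sentence: after failing a degree-$\Delta$ vertex, only $m-\Delta$ edges remain, so the remaining cut is at most $m-\Delta$). You have merely spelled out the same reasoning in more detail, instantiating the minimum in the definition of $\varphi$ at the singleton $\{v^*\}$.
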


We use the following notation.
\begin{definition}[excess]
The \emph{excess} of a vertex $v$ in a cut $S\subseteq V$ is $x_S(v) = d_S(v) - d(v)/2$.
\end{definition}
Note that for a stable cut $S$ (see Section~\ref{sec:prelim}), $x_{S}(v) \geq 0$ for every $v\in V$. In addition, if $d(v)$ is even, then $x_{S}(v)$ is an integer. Otherwise, $x_{S}(v) = a + 1/2$ for some integer $a$.
Now, we prove the properties that are required for showing the correctness of our algorithm.

\begin{lemma}\label{lem:critical_xS_ineq}
Let $S$ be a stable cut in a graph $G=(V,E)$ such that $\varphi(S)<(m-\Delta)/2$. Then $S$ has a unique critical vertex vertex $u$, and $u$ satisfies
\begin{equation}\label{eq:critical_xS_ineq}
    d_{S}(u) > \underset{v \neq u}{\sum} x_{S}(v) + \Delta-d(u)/2\ .
\end{equation}
Moreover, $u$ has a neighbor $w$ in its opposite side of the cut, which satisfies $x_S(w)=0$. \label{tryC3}
\end{lemma}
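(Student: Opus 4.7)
The plan is to first observe that removing a vertex $v$ from a cut $S$ removes exactly $d_S(v)$ crossing edges, so $C_{S-v,G-v} = C_S - d_S(v)$ and hence $\varphi(S) = C_S - \max_v d_S(v)$. Let $u$ attain this maximum. Using the identity $\sum_v d_S(v) = 2 C_S$ (each crossing edge contributes twice) together with $\sum_v d(v)= 2m$, we get $\sum_v x_S(v) = 2C_S - m$. The hypothesis $\varphi(S) < (m-\Delta)/2$ rewrites as $d_S(u) > C_S - (m-\Delta)/2$, and substituting $C_S = (m + \sum_v x_S(v))/2$ yields
\[
d_S(u) > \frac{\sum_v x_S(v) + \Delta}{2}.
\]
Separating the $u$ term and using $d_S(u) = x_S(u) + d(u)/2$ gives exactly the claimed inequality \eqref{eq:critical_xS_ineq}, and symmetrically $x_S(u) > \sum_{v\neq u} x_S(v) + \Delta - d(u)$.

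For uniqueness, I would fix any $v\neq u$ and note that, since every other excess is non-negative in a stable cut, the last inequality implies $x_S(u) - x_S(v) > \Delta - d(u)$. Because $d(v) \le \Delta \le 2\Delta - d(u)$, we also have $\Delta - d(u) \ge (d(v)-d(u))/2$, and so $x_S(u) - x_S(v) > (d(v)-d(u))/2$, i.e.\ $d_S(u) > d_S(v)$. Thus the maximizer of $d_S$ is unique, so $u$ is the unique critical vertex.

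For the final claim I would argue by contradiction. In a stable cut each $x_S(w)$ is a non-negative half-integer, so if every neighbor $w$ of $u$ on its opposite side had $x_S(w) > 0$, then $x_S(w) \ge 1/2$ for each such $w$. Since the number of these neighbors is exactly $d_S(u)$, summing their excesses yields
\[
\sum_{v\neq u} x_S(v) \;\ge\; \sum_{w\in N(u)\cap (V\setminus S\text{-side of }u)} x_S(w) \;\ge\; \frac{d_S(u)}{2}.
\]
Plugging this bound into \eqref{eq:critical_xS_ineq} gives $d_S(u) > d_S(u)/2 + \Delta - d(u)/2$, i.e.\ $d_S(u) > 2\Delta - d(u)$, which combined with the trivial bound $d_S(u) \le d(u)$ forces $d(u) > \Delta$, contradicting the definition of $\Delta$. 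Hence some neighbor $w$ of $u$ on the opposite side has $x_S(w) = 0$.

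The main obstacle is the last step: the cleanest proof relies on the half-integrality of $x_S$ in a stable cut (so that $x_S(w) > 0$ upgrades to $x_S(w) \ge 1/2$), together with correctly bounding the sum over \emph{only} the opposite-side neighbors of $u$ by the total $\sum_{v\neq u} x_S(v)$ appearing in \eqref{eq:critical_xS_ineq}. Once this bookkeeping is done, the contradiction $d(u)>\Delta$ pops out from a single substitution.
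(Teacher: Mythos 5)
Your proof is correct and follows essentially the same route as the paper's: the identity $C_S=m/2+\sum_{v}x_S(v)/2$, rearrangement of the hypothesis $\varphi(S)<(m-\Delta)/2$ into \eqref{eq:critical_xS_ineq}, uniqueness of the critical vertex from non-negativity of excesses in a stable cut, and the half-integrality bound $\sum x_S(w)\ge d_S(u)/2$ summed over the $d_S(u)$ opposite-side neighbors of $u$. The only cosmetic difference is the endpoint of the final contradiction ($d(u)>\Delta$ for you versus $\varphi(S)\ge(m-\Delta)/2$ in the paper), both of which follow from the same bound.
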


\begin{proof}
First, we show that $S$ has a unique critical vertex.
Since for every vertex $v$, $d_{S}(v) = d(v)/2 + x_{S}(v)$ and $\sum_{v\in V}{d(v)/2} = m$, we get that 
\begin{equation}\label{eq:cutviaexcess}
C_{S} = \frac{1}{2}\underset{v\in V}{\sum}{d_{S}(v)} = \frac{1}{2}\underset{v\in V}{\sum}{\left(\frac{d(v)}{2} + x_{S}(v)\right)} = \frac{m}{2} + \underset{v\in V}{\sum}{\frac{x_{S}(v)}{2}}\ .
\end{equation}
Let $u$ be a critical vertex, and assume, without loss of generality,  that $u\in S$ (otherwise we swap $S$ and $V- S$). On one hand, we have  $\varphi(S) = C_{S} - d_{S}(u)=m/2+\sum_{v\in V}x_S(v)/2 - d_S(u)$, and on the other hand, we have $\varphi(S) < (m-\Delta)/2$, which together imply: 
\[
m/2 + \underset{v\in V}{\sum} x_{S}(v)/2 - d_{S}(u) < (m-\Delta)/2\ .
\]
After a rearrangement, the latter implies \eqref{eq:critical_xS_ineq}.
Using $d_S(u)=d(u)/2+x_S(u)$ in \eqref{eq:critical_xS_ineq} and simplifying, we get $x_{S}(u) > \underset{v \neq u}{\sum} x_{S}(v) + \Delta-d(u)\ge \underset{v \neq u}{\sum} x_{S}(v)$.  Since $u$ is an arbitrary critical vertex, this implies that  $u$ is the only critical vertex of $S$. 

Next, let us show that there is a neighbor $w\in V- S$ of $u$ (recall that $u\in S$)  with $x_{S}(w) = 0$. Assume to the contrary that for every $v\notin S$  such that $\{u,v\}\in E$, it holds that $x_{S}(v) \geq 1/2$ (recall that $S$ is stable, and hence $x_{S}(v)$ is a non-negative integer multiple of $1/2$). Using (\ref{eq:cutviaexcess}), this implies:
\begin{align*}C_{S} = \frac{m}{2} + \underset{v\in V}{\sum}\frac{x_{S}(v)}{2} &\geq \frac{m}{2} + \frac{1}{2}\left(x_{S}(u)+\underset{v: \{u, v\}\in \delta(S)}{\sum}x_{S}(v)\right)\\
&\ge \frac{m}{2}+\frac{1}{2}\left(x_S(u)+\frac{d_S(u)}{2}\right)
\ge \frac{m}{2} + x_S(u)\ ,
\end{align*}
where we use $d_S(u)=|\set{ v: \set{u,v}\in \delta(S)}|$ in the second inequality, and $d_S(u)=d(u)/2+x_S(u)\ge 2x_S(u)$, in the third one. Since $u$ is the critical vertex of $S$, this gives that\[\varphi(S) = C_S - d_{S}(u) \geq \frac{m}{2} + x_{S}(u) - d_{S}(u) = \frac{m}{2} - \frac{d(u)}{2} \geq (m-\Delta)/2\ ,\] in contradiction to $\varphi(S) < (m-\Delta)/2$. This completes the proof. 
\end{proof}

\subparagraph*{}We can now show that in the case of $\Delta \leq 2$, any stable cut is a $1/2$-approximation.\label{tryC4}

\begin{lemma}\label{lem:specific_case_half}
Every stable cut $S$ in a graph $G=(V,E)$ with $\Delta\le 2$ is a $1/2$-approximation for \ftc. 
\end{lemma}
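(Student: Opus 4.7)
The plan is a case analysis on $\Delta$. The cases $\Delta \le 1$ are easy: if $\Delta = 0$, then $\varphi(S) = \varphi(S^*) = 0$; if $\Delta = 1$, the graph is a matching plus isolated vertices, stability forces every edge to cross, so $C_S = m$ and $\varphi(S) = m - 1$, which matches the upper bound $m - \Delta$ from Observation~\ref{obs:greedy:max_ft_size}. For the remaining case $\Delta = 2$, I first split by whether $\varphi(S) \ge (m-2)/2$. In that first subcase, $\varphi(S) \ge (m-2)/2 \ge \varphi(S^*)/2$ follows directly from the same observation. The interesting subcase is $\varphi(S) < (m-2)/2$, where Lemma~\ref{lem:critical_xS_ineq} applies.

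In that subcase, the lemma yields a unique critical vertex $u$ satisfying $d_S(u) > \sum_{v \ne u} x_S(v) + 2 - d(u)/2$. Combined with $d_S(u) \le d(u) \le 2$ and integrality of $d_S$, this forces $d(u) = d_S(u) = 2$ and $\sum_{v \ne u} x_S(v) < 1$. Since stability makes every degree-$1$ vertex $v$ satisfy $d_S(v) = 1$, hence $x_S(v) = 1/2$, and degree-$1$ vertices appear only in pairs as endpoints of path components in a graph with $\Delta \le 2$, the sum bound rules out any path component. Thus $G$ consists only of isolated vertices and cycles. On every cycle not containing $u$, all vertices have $d_S = 1$, so the crossing edges form a perfect matching on the cycle, forcing its length to be even, with $\alpha_i = \ell_i/2$ and $\beta_i = 1$. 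On $u$'s cycle, the conditions $d_S(u) = 2$ and $d_S(v) = 1$ for $v \ne u$ force a period-$4$ pattern of sides $\in, \notin, \notin, \in, \in, \notin, \notin, \in, \ldots$ going around from $u$, and consistency at the edge closing the cycle forces $\ell_u \equiv 3 \pmod 4$, giving $\alpha_u = (\ell_u + 1)/2$ and $\beta_u = 2$.

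Adding up, $\sum_i \alpha_i = (n+1)/2$ where $n = \sum_i \ell_i$, and $\max_i \beta_i = 2$, so $\varphi(S) = (n-3)/2$. The optimum $S^*$ can do no better than the max cut on each cycle, yielding $\sum_i \alpha_i^* \le n - 1$ (the odd cycle $C_{\ell_u}$ loses one edge from its max cut), while $\max_i \beta_i^* = 2$ is forced by the odd cycle, so $\varphi(S^*) \le n - 3$. Therefore $\varphi(S) = (n-3)/2 \ge \varphi(S^*)/2$ (in fact with equality). The main obstacle is the structural analysis in the Lemma~\ref{lem:critical_xS_ineq} subcase: formally verifying the forced side pattern around $u$'s cycle and the modular constraint $\ell_u \equiv 3 \pmod 4$ by walking around the cycle and carefully checking consistency at the closing edge, as well as ruling out path components using the contribution of degree-$1$ endpoints to $\sum x_S$.
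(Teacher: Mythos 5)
Your proof follows the same route as the paper's: dispose of the easy cases via Observation~\ref{obs:greedy:max_ft_size}, then in the case $\varphi(S)<(m-\Delta)/2$ invoke Lemma~\ref{lem:critical_xS_ineq} to pin down a unique critical vertex $u$ with $d_S(u)=2$ and $x_S(v)=0$ for all $v\neq u$, conclude that the graph is essentially an odd cycle (plus even cycles and isolated vertices) on which $\varphi(S)=(m-3)/2$, and compare with the upper bound $m-3$ on the optimum. You are in fact more careful than the paper in two places: you treat the disconnected case explicitly (the paper's ``w.l.o.g.\ $G$ is connected'' deserves justification, since only one vertex fails globally and the {\aftv} of a disjoint union is not the sum of the components' values), and you extract the sharper constraint $\ell_u\equiv 3\pmod 4$, of which only the oddness is needed.

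One assertion is false as stated: that the presence of the odd cycle forces $\max_v d_{S^*}(v)=2$ for the optimal cut $S^*$. It does not. On $C_5$ with vertices $w_0,\dots,w_4$, the cut $\{w_0,w_1,w_2\}$ has only the crossing edges $\{w_2,w_3\}$ and $\{w_4,w_0\}$, so every crossing degree is at most $1$. The bound you actually need, $\varphi(S^*)\le n-3$, still holds, but it requires a short case split rather than multiplying two independent bounds: if some vertex has crossing degree at least $2$ under $S^*$, combine this with $C_{S^*}\le n-1$ (the odd cycle cannot have all its edges crossing); otherwise all crossing degrees are at most $1$, whence $C_{S^*}\le \lfloor n/2\rfloor$ and $\varphi(S^*)\le \lfloor n/2\rfloor-1\le n-3$ for $n\ge 3$ (and $\varphi(S^*)=0$ if the maximum crossing degree is $0$). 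The paper's own one-line justification at the corresponding point ($\varphi(S')\le m-\Delta-1$ for any cut of an odd cycle) is comparably terse and needs the same split, so this is a local patch rather than a structural flaw, but the sentence as you wrote it is incorrect.
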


\begin{proof}
We assume w.l.o.g. that $G$ is connected, i.e., it is either a path-graph or a cycle-graph.
If $\varphi(S)\ge (m-\Delta)/2$, then Observation~\ref{obs:greedy:max_ft_size} implies that  $S$ is a $1/2$-approximation. Otherwise, the conditions of Lemma~\ref{lem:critical_xS_ineq} apply, and hence there is a unique critical vertex vertex $u$, which satisfies $x_S(u) > \sum_{v\neq u}{x_S(v)}$ (where we use that $d_S(u)=x_S(u)+d(u)/2$). 
Since $\Delta = 2$, we have  $x_S(v) \le 1$, for every vertex $v$, and hence $x_S(v) \leq 1/2$, for every $v\neq u$. It cannot be that there is only one vertex $v$ with $x_S(v) = 1/2$, as this would imply that there is only one (odd number) vertex with an odd degree. In addition, if there are two vertices $v,v'\in V - u$ such that $x_S(v)=x_S(v')=1/2$, we get a contradiction to $x_S(u) > \sum_{v\neq u}{x_S(v)}$. Therefore $x_S(u) = 1$, and $x_S(v) = 0$, for every $v\neq u$, which implies that all degrees are even, i.e.,
 $G$ is a cycle graph. The uniqueness of the critical vertex also implies that $d_S(u)=2$ and $d_S(v)=1$, for all $v\neq u$; therefore, we have $C_S=(2+n-1)/2=(n+1)/2$,  $\varphi(S)=C_S - d_S(u)=(n-3)/2$, and $n$ is odd. Note that for every cut $S'$ of an odd cycle, there is at least one edge that does not cross the cut, which means that $\varphi(S')\le m-\Delta-1$. Since  $\varphi(S)=(n-3)/2=(m-3)/2=(m-\Delta-1)/2$, we have that $S$ is a $1/2$-approximation.
\end{proof}

In the following lemma, we show that given a cut $S$ with a unique critical vertex $u$, it holds that $\varphi(S\oplus w) \geq \varphi(S)$ for every vertex $w$ with $x_S(w) = 0$ which is not in the same side of the cut as $u$. Therefore, if the algorithm cannot apply a build-up step, it means that for every $w$ as described above, $S\oplus w$ is a stable cut.

\begin{lemma}\label{lem:opluscut}
Let $S$ be a stable cut in a graph $G=(V,E)$ with a unique critical vertex $u\in S$. Let $w\notin S$ with $x_S(w)=0$. Then, $\varphi(S\oplus w)\ge \varphi(S)$. 
\end{lemma}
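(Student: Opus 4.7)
The plan is to compare $\varphi(S\oplus w)$ with $\varphi(S)$ by tracking, edge by edge, how moving $w$ from $V-S$ into $S$ changes (i) the total cut size, and (ii) each vertex's crossing degree. The reformulation $\varphi(S')=C_{S'}-\max_{v}d_{S'}(v)$ turns the lemma into the claim $\max_v d_{S\oplus w}(v)\le d_S(u)$ together with $C_{S\oplus w}\ge C_S$.

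First I would observe that since $x_S(w)=0$ we have $d_S(w)=d(w)/2$, so the two sides of $w$ contribute equally. Switching $w$ therefore leaves the cut size unchanged: by Observation~\ref{eq:oplusdiff}, $C_{S\oplus w}-C_S=d_{S\oplus w}(w)-d_S(w)=(d(w)-d(w)/2)-d(w)/2=0$. Hence it remains only to bound $\max_v d_{S\oplus w}(v)$ by $d_S(u)$, and the lemma will follow from the assumption that $u$ is the unique critical vertex of $S$.

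Next I would split into cases depending on the relation of $v$ to $w$. The effect of the move on crossing degrees is local: $d_{S\oplus w}(v)=d_S(v)+1$ if $\{v,w\}\in E$ and $v,w$ lie on the same side of $S$; $d_{S\oplus w}(v)=d_S(v)-1$ if $\{v,w\}\in E$ and they lie on opposite sides; and $d_{S\oplus w}(v)=d_S(v)$ otherwise. For $v=u$ the degree can only stay the same or decrease (since $u\in S$ and $w\notin S$, if they are adjacent then the crossing edge $\{u,w\}$ is lost). For $v=w$ we have $d_{S\oplus w}(w)=d(w)/2=d_S(w)\le d_S(u)$, trivially. For a vertex $v\ne u,w$ whose crossing degree decreases or stays the same, the bound is immediate.

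The one delicate case is $v\ne u$ with $\{v,w\}\in E$ and $v\notin S$: here $d_{S\oplus w}(v)=d_S(v)+1$, and I need this to be at most $d_S(u)$. I expect this to be the main obstacle because the inequality is tight to within one unit; it is resolved using integrality of crossing degrees in unweighted graphs combined with the \emph{uniqueness} of the critical vertex $u$. Uniqueness gives $d_S(v)<d_S(u)$ for every $v\ne u$, and integrality then sharpens this to $d_S(v)\le d_S(u)-1$, so $d_{S\oplus w}(v)\le d_S(u)$, as required. Combining all cases yields $\max_v d_{S\oplus w}(v)\le d_S(u)$, and together with $C_{S\oplus w}=C_S$ this gives $\varphi(S\oplus w)\ge \varphi(S)$.
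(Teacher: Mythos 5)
Your proposal is correct and follows essentially the same route as the paper's proof: use $x_S(w)=0$ to show the cut size is unchanged, then bound $\max_v d_{S\oplus w}(v)$ by $d_S(u)$ via the observation that any other vertex's crossing degree increases by at most $1$ while uniqueness of the critical vertex gives $d_S(v)<d_S(u)$, hence $d_S(v)+1\le d_S(u)$ by integrality. Your write-up merely makes the integrality step and the case split explicit where the paper leaves them implicit.
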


\begin{proof}
By definition, it holds that
\begin{equation}\label{eq:opluscut}
\varphi(S\oplus w) = C_{S\oplus w} - \max_{v\in V}d_{S\oplus w}(v)=C_{S} - \max_{v\in V}d_{S\oplus w}(v)\ ,\end{equation}
where we use $C_{S\oplus w} = C_{S} - d_{S}(w) + d_{S\oplus w}(w)$ (Observation~\ref{eq:oplusdiff}), and $d_{S}(w) = d_{S\oplus w}(w)+2x_S(w)=d_{S\oplus w}$. We have that $d_{S\oplus w}(u) \leq d_{S}(u)$, since if $\{u,w\}\in E$ then $d_{S\oplus w}(u) = d_S(u) - 1$, and otherwise $d_{S\oplus w}(u) = d_S(u)$. For every other vertex $v\neq u$, we have that $d_{S\oplus w}(v)\le d_S(v)+1\le d_S(u)$, where the last inequality holds since $u$ is the unique critical vertex of $S$, i.e., $d_S(v)<d_S(u)$. 
Altogether, we have that $\max_{v\in V}d_{S\oplus w}(v)\le d_S(u)$, which, together with \eqref{eq:opluscut} implies that $\varphi(S\oplus w)\ge C_S-d_S(u)=\varphi(S)$, as claimed.
\end{proof}
\label{tryC5}

Now, we show that if the conditions for the type-$0$, type-$1$ and build-up steps do not hold, then Algorithm \ref{alg:Greedy_ftc} can apply the type-$2$ step. It holds that $\varphi(\sol) < (m-\Delta)/2$ because otherwise the algorithm terminates. Since the algorithm does not apply a type-$1$ step, we get that $\sol$ is a stable cut and Lemma~\ref{lem:critical_xS_ineq} implies that $\sol$ has a unique critical vertex $u$. In addition, for every vertex $w$ which is not in the same side of the cut as $u$ with $x_{\sol}(w) = 0$, it holds that $\sol\oplus w$ is a stable cut that satisfies $\varphi(\sol\oplus w) < (m-\Delta)/2$ (otherwise the algorithm can apply a type-$0$ or a build-up step). We show that there is a neighbor $v$ of $u$ with $x_{\sol}(v) = 0$, such that moving $v$ to the other side increases the \aftv of $\sol$, therefore Algorithm \ref{alg:Greedy_ftc} can apply a type-$2$ step by choosing $v$.

\begin{lemma}\label{lem:has_greedy_step}
Let $S$ be a stable cut in a graph $G=(V,E)$ with $\Delta>2$, such that  $\varphi(S)<(m-\Delta)/2$. Denote the unique critical vertex of $S$ by $u$, assume w.l.o.g. that $u\in S$ and assume that for every vertex $w\notin S$ with $x_S(w)=0$, $S\oplus w$ is a stable cut and satisfies $\varphi(S\oplus w)<(m-\Delta)/2$. Then there is a vertex $v$ with $x_S(v)=0$ such that $\varphi(S\oplus v)>\varphi(S)$. 
\end{lemma}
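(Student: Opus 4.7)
The plan is to first pin down the form that any valid $v$ must take, then characterize when a specific candidate $v$ succeeds, and finally argue by contradiction that one exists.

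First, any $v$ with $x_S(v)=0$ satisfies $d_{S\oplus v}(v)=d(v)-d_S(v)=d(v)/2=d_S(v)$, hence $C_{S\oplus v}=C_S$ by Observation~\ref{eq:oplusdiff}. Thus $\varphi(S\oplus v)>\varphi(S)$ is equivalent to $\max_{v''}d_{S\oplus v}(v'')<d_S(u)$, which in particular demands $d_{S\oplus v}(u)<d_S(u)$; this is possible only when $v\notin S$ and $v\sim u$. Consequently any valid $v$ lies in $W_0:=\{w\in V-S:w\sim u,\,x_S(w)=0\}$, and Lemma~\ref{lem:critical_xS_ineq} guarantees $W_0\neq\emptyset$.

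Second, I characterize which $v\in W_0$ succeed by computing $d_{S\oplus v}(v'')$ for each $v''$: for $v''\in S$ we have $d_{S\oplus v}(v'')\le d_S(u)-1$ (equality when $v''=u$ since $uv\in E$; for $v''\in S\setminus\{u\}$, use $d_{S\oplus v}(v'')\le d_S(v'')\le d_S(u)-1$ by uniqueness of the critical vertex); for $v''\in V-S$ with $v''\neq v$, $d_{S\oplus v}(v'')=d_S(v'')+[vv''\in E]\le d_S(u)$, with equality iff $d_S(v'')=d_S(u)-1$ and $v''\sim v$; and for $v''=v$, $d_{S\oplus v}(v)=d(v)/2\le\Delta/2<d_S(u)$, using the bound $d_S(u)>\Delta-d(u)/2\ge\Delta/2$ from Lemma~\ref{lem:critical_xS_ineq}. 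Hence $v\in W_0$ succeeds iff $v$ has no neighbor in $T:=\{v''\in V-S:v''\neq u,\,d_S(v'')=d_S(u)-1\}$.

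Now assume for contradiction that every $v\in W_0$ has some neighbor in $T$. The hypothesis that $S\oplus v$ is stable with $\varphi(S\oplus v)<(m-\Delta)/2$ lets me apply Lemma~\ref{lem:critical_xS_ineq} to $S\oplus v$. Since every $v''\in T\cap N(v)$ attains $d_{S\oplus v}(v'')=d_S(u)$, uniqueness of the critical vertex of $S\oplus v$ forces $|T\cap N(v)|=1$, yielding a well-defined map $v\mapsto v''(v)$ from $W_0$ to $T$. Furthermore, stability of $S\oplus v$ implies $x_S(v')\ge 1$ for every $v'\in S\cap N(v)$, a strong constraint on the $S$-side neighborhood of each $v\in W_0$.

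The main obstacle is turning these structural constraints into a quantitative contradiction. From the critical-vertex inequality for $S$, combined with $x_S(u)\le\Delta-d(u)/2$, one extracts $Y:=\sum_{v'\neq u}x_S(v')<d(u)/2$, while each $v''\in T$ contributes $x_S(v'')=d_S(u)-1-d(v'')/2\ge d_S(u)-1-\Delta/2$ to $Y$. Combining this lower bound on $Y$ with the adjacency required by the map $v\mapsto v''(v)$, the excess-$\ge 1$ condition on $S$-neighbors of $W_0$-vertices, and the degree bound $\Delta$, a case split on $d(u)=\Delta$ versus $d(u)<\Delta$ (and on $|T|$) should over-count either contributions to $Y$ or crossing edges at some specific vertex, contradicting the critical-vertex bound. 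I expect this combinatorial bookkeeping, with several subcases, to be the most delicate step of the argument.
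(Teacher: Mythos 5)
Your setup is correct and, in substance, matches the paper's reformulation of the problem. The reduction to ``find $v\in W_0=\{w\in V-S: w\sim u,\ x_S(w)=0\}$ with no neighbor in $T=\{v''\in V-S: d_S(v'')=d_S(u)-1\}$'' is exactly equivalent to the paper's claim that for a suitably chosen $v\in N(u)\setminus S$ with $x_S(v)=0$, the unique critical vertex of $S\oplus v$ is still $u$ (a critical vertex $u'\neq u$ of $S\oplus v$ is precisely an element of $T\cap N(v)$). Your auxiliary facts are also correct: $W_0\neq\emptyset$ by Lemma~\ref{lem:critical_xS_ineq}, $|T\cap N(v)|=1$ for each $v\in W_0$ by uniqueness of the critical vertex of $S\oplus v$, $x_S(v')\ge 1$ for every $v'\in S\cap N(v)$ by stability of $S\oplus v$, and $\sum_{v'\neq u}x_S(v')<d(u)/2$.

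The gap is that the contradiction itself is never derived: you end with ``should over-count\dots I expect this combinatorial bookkeeping\dots to be the most delicate step.'' That bookkeeping is not routine; it is the bulk of the paper's proof. Concretely, the paper splits on the value of $X=\sum_{w\neq u}x_S(w)$ relative to $x_S(u')$ for the putative new critical vertex $u'\in T$: the cases $X\ge x_S(u')+1$ and $X=x_S(u')+1/2$ are dispatched by inequality (\ref{eq:critical_xS_ineq}) plus a parity argument on the number of odd-degree vertices, while the hard case $X=x_S(u')$ requires first proving that every crossing edge is adjacent to $u$ or $u'$ (using stability of $S\oplus w$ for \emph{all} eligible $w$, not just the chosen $v$), and then a further analysis of whether $u'$ lies on the same side as $u$. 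In the final sub-case ($u,u'$ on opposite sides, both with positive excess) the argument genuinely depends on choosing $v\in (N(u)\setminus N(u'))\setminus S$ when such a vertex exists — a degree of freedom your ``every $v\in W_0$ has a neighbor in $T$'' framing does implicitly retain, but which you never exploit — and closes with a parity argument showing $d_S(u')=\Delta/2$, contradicting $x_S(u')>0$. None of the quantities you propose to over-count ($Y$, or crossing edges at a specific vertex) is shown to actually yield a contradiction, and your lower bound $x_S(v'')\ge d_S(u)-1-\Delta/2$ for $v''\in T$ can be as weak as $x_S(v'')\ge 0$, so it carries no force on its own. As written, the proposal is an incomplete proof: the framework is sound, but the lemma's essential content remains unproven.
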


\begin{proof}
Let us choose a vertex $v\in N(u)\setminus S$ with $x_S(v)=0$ (where $N(u)$ is the neighborhood of $u$), as follows: if there is a vertex $w\in V$ such that $x_S(u),x_S(w)>0$, while $x_S(w')=0$, for all $w'\notin \{u,w\}$, and there is a vertex $w'\in (N(u)\setminus N(w))\setminus S$ ($w'$ is a neighbor of $u$ on its opposite side, and a non-neighbor of $w$), then let $v=w'$ (note that $x_S(v)=0$). Otherwise, pick an arbitrary $v\in N(u)\setminus S$ with $x_S(v)=0$ (as provided by Lemma~\ref{lem:critical_xS_ineq}). Let $u'$ be the unique critical vertex in $S\oplus v$ (by Lemma~\ref{lem:critical_xS_ineq}). We claim that $u'=u$. Before proving the claim, let us see how it implies the lemma. By Observation~\ref{eq:oplusdiff}, $C_{S\oplus v}\ge C_S$, and $d_{S\oplus v}(u) = d_S(u)-1$, thus $u' = u$ implies that $\varphi(S\oplus v)=C_{S\oplus v}-d_{S\oplus v}(u)>C_{S}-d_{S}(u)=\varphi(S)$, and the lemma follows.

Assume henceforth, towards a contradiction, that $u'\neq u$.
By Lemma~\ref{lem:critical_xS_ineq}, we have 
\begin{equation}\label{eq:localbottlebeckineq}
d_{S}(u) > \underset{w \neq u}{\sum} x_{S}(w) +\Delta - \frac{d(u)}{2}\ .
\end{equation}
Note that since the cut $S$ is stable, $x_S(w)\ge 0$ holds for all vertices, and if $x_S(w)>0$, then $x_S(w)\ge 1/2$. Let us study the value of $X=\underset{w \neq u}{\sum} x_{S}(w)$. 

\textbf{Case 1.} If $X\ge x_S(u')+1$, then we have, via \eqref{eq:localbottlebeckineq}, that 
\begin{align*}
d_S(u)&> x_S(u') + 1 + \Delta-d(u)/2\\
&\ge x_S(u') + 1 + d(u')/2=d_S(u')+1\ge d_{S\oplus v}(u')\ ,
\end{align*}
where we use $\Delta-d(u)/2\ge \Delta/2\ge d(u')/2$, and $d_S(u')=x_S(u')+d(u')/2$. This, however, implies that $d_{S\oplus v}(u)= d_S(u)-1\ge d_{S\oplus v}(u')$, which contradicts that $u'$ is the unique critical vertex of $S\oplus v$; therefore, $X\le x_S(u')+1/2$.

\textbf{Case 2.} If $X=x_S(u')+1/2$, then there is a vertex $u''\notin \{u,u'\}$ such that $x_S(u'')=1/2$, and for every $w\notin \{u,u',u''\}$, $x_S(w)=0$.  
Using this in \eqref{eq:localbottlebeckineq}, we have 
\[
d_S(u)>x_S(u')+1/2+\Delta-d(u)/2\ .
\]
Let us show that $\Delta-d(u)/2\ge d(u')/2+1/2$, which would imply that the right-hand side above is at least $d_S(u')+1$. Assume, towards a contradiction, that $\Delta< (d(u')+d(u) + 1)/2$, which implies that $d(u)=d(u')=\Delta$. Recall that $x_S(u'')=1/2$, implying that $d(u'')$ is odd. For every vertex $w\notin\{u,u',u''\}$, we have $x_S(w)=0$, hence $d(w)$ is even. This implies that we have an odd number of odd-degree vertices (since $d(u)=d(u')$), which is impossible. Thus, we again have $d_S(u)>d_S(u')+1\ge d_{S\oplus v}(u')$, which contradicts the fact that $u'$ is the unique critical vertex of $S\oplus v$.

\textbf{Case 3.} We conclude that $X=x_S(u')$. It immediately follows that all vertices $w\notin \{u,u'\}$ have $x_S(w)=0$. We also have the following property.
\begin{claim}\label{claim:long1}
Every crossing edge is adjacent to either $u$ or $u'$, the latter only when $x_S(u')>0$.
\end{claim}
\begin{proof}
Let $w\notin S\cup \{u,u'\}$. 
We have that $x_S(w)=0$ and $w\notin S$, so by our assumption, $S\oplus w$ is a stable cut. The latter implies that there cannot be an edge $\{w,w'\}\in E$ for $w\notin S$, $w'\in S$, such that $x_S(w')=0$, since we would then have $d_{S\oplus w}(w')<d(w')/2$, contradicting the assumption that $S\oplus w$ is stable. However, each vertex $w$ with $x_S(w)=0$ has exactly half of its neighbors in the opposite side of the cut. This can only happen if $w$ is isolated or its only neighbors in the opposite side are $u$ and, possibly, $u'$. The latter can only happen when $x_S(u')>0$.
\end{proof}
To proceed, we consider two cases for $x_S(u')$.

\textbf{Case 3.1.} Consider the case $x_S(u')=0$; then, the only vertex $w$ with $x_S(w)>0$ is $u$. The argument above implies that there is no vertex $w$ on the same side of the cut as $u$, since otherwise $w$ would be an isolated vertex. Thus, we have $S=\{u\}$, and each vertex in $V- S$ is adjacent to $u$ and has exactly one neighbor in $V- S$, i.e., for every $w\neq u$, $d(w) = 2$. Since $\Delta> 2$, we have that $d_S(u) = d(u) > 2$, therefore $d_{S\oplus v}(u) \ge 2$ and it follows that $d_{S\oplus v}(u)=\max_{w\in V}d_{S\oplus v}(w)$. This, however, contradicts  the assumption that $u'$ is the unique critical vertex in $S\oplus v$, and that $u'\neq u$.

\textbf{Case 3.2.} If $x_S(u')>0$, then there are two possibilities: either $u'$ is on the same side of the cut as $u$ or it is on the other side. 

\textbf{Case 3.2.1.} If $u,u'$ are on the same side of the cut, then by a similar reasoning as above, $S=\{u,u'\}$. In $S\oplus v$, we have $d_{S\oplus v}(u)=d_S(u)-1$, and $d_{S\oplus v}(u')\le d_{S}(u')\le d_S(u)-1$, where the last inequality holds since $u$ is the unique critical vertex of $S$. We conclude that $d_{S\oplus v}(u')\le d_{S\oplus v}(u)$, which contradicts the assumption that $u'$ is the unique critical vertex of $S\oplus v$, and $u'\neq u$.

\textbf{Case 3.2.2.} It remains to consider the case when $x_S(u')>0$, and $u$ and $u'$ are on opposite sides of the cut $S$. Recall that in this case, $u,u'$ are the only vertices with positive excess, hence, by the choice of  $v$, if $Q=(N(u)\setminus N(u'))\setminus S\neq \emptyset$, then $v\in Q$. We show that $Q=\emptyset$. Assume the opposite, i.e., $Q\neq \emptyset$, and $v\in Q$. Since $d(u)$ and $d(u')$ have the same parity (all other degrees in the graph are even), we have that $x_S(u)\ge x_S(u')+1$. Since $\{v,u\}\in E$ and $\{v,u'\}\notin E$, we have $x_{\sv}(u')=x_{S}(u')$, and $x_{\sv}(u)=x_{S}(u)-1$, implying that $x_{\sv}(u)\ge x_{\sv}(u')$, which contradicts to the assumption that $u\neq u'$ and $u'$ is the unique critical vertex in $\sv$. Thus, we have $Q=\emptyset$: every neighbor of $u$ in $V\setminus S$ is also a neighbor of $u'$. 
Since for every node $w\notin\{u,u'\}$, $x_S(w)=0$, it follows from Claim~\ref{claim:long1}, that $w$ has a single crossing edge: $\set{w,u}$, if $w\notin S$ or $\set{w,u'}$, if $w\in S$. Thus, every vertex in $S$ is adjacent to $u'$, and every vertex in $V\setminus S$ is adjacent to $u$. Since $Q=\emptyset$, we see that every vertex in $V\setminus S$ is also adjacent to $u'$ (and to no other vertex in $V\setminus S$); hence $\Delta=d(u')\ge d(u)$. 

\begin{claim}
$d(u)=d(u')=\Delta$, and for every $w\notin\{u,u'\}$, $N(w)=\{u,u'\}$.
\end{claim}
\begin{proof}
We only need to show that $d(u)=\Delta$: then, both $u$ and $u'$ are adjacent to all other vertices, and By Claim~\ref{claim:long1}, every such vertex has degree 2. Assume, towards a contradiction, that $d(u) < \Delta$. By \eqref{eq:localbottlebeckineq}, we have that $x_S(u) > x_S(u') + \Delta - d(u)$, hence $x_S(u) > x_S(u') + 1$. Since $d(u),d(u')$ have the same parity (all other degrees are even)  $x_S(u) - x_S(u')$ is an integer, and hence  $x_S(u) - x_S(u') \ge 2$. It holds that $x_{S\oplus v}(u) \ge x_S(u) - 1$, and $x_{S\oplus v}(u') \le x_S(u') + 1$, i.e., $x_{S\oplus v}(u) \ge x_{S\oplus v}(u')$, which contradicts the assumption that $u'$ is the unique critical vertex of $S\oplus v$, and that $u'\neq u$.
\end{proof}

Recall that for every $w\notin \set{u,u'}$, exactly one of $\set{w,u}$ and $\set{w,u'}$ crosses $S$. Since $u$ is the single critical vertex in $S$, we have $d_S(u)>d_S(u')$. If $u$ and $u'$ are not adjacent, then $d_S(u)+d_S(u')=\Delta$. The latter implies that $d_S(u')<\Delta/2=d(u')/2$, contradicting to the assumption that $S$ is a stable cut. Thus, $\{u,u'\}\in E$, and $d_S(u)+d_S(u')=\Delta+1$, since the edge $\set{u,u'}$ is counted in both $d_S(u)$ and $d_S(u')$. By the reasoning above, $d_S(u')\ge \Delta/2$, and hence $d_S(u)\le \Delta/2+1$. Hence, $\Delta$ is even, and $d_S(u')=\Delta/2$, implying that $x_S(u')=0$, which is a contradiction.

Since we got a contradiction in all cases, we conclude that $u=u'$, as claimed at the beginning. This completes the proof.
\end{proof}

We are now ready to prove the main statement about our algorithm.

\begin{proof}[Proof of Theorem~\ref{thm:Greedy_ftc}] By Lemma~\ref{lem:specific_case_half}, \textsc{StabilizeCut} gives a $1/2$-approximation when $\Delta\le 2$, so we focus on $\Delta\ge 3$. First, let us show correctness of the algorithm, that is, if neither of type-$0$, type-$1$ or build-up steps applies, then a type-$2$ step can be applied. Let $\sol_{i}$ denote the cut in the beginning of  iteration $i$. Let $i$ be an iteration where none of the first three steps applies. Assume w.l.o.g. that the critical vertex of $\sol_{i}$ is in $\sol_{i}$. Then, $\sol_i$ is stable, for every $w\notin \sol_{i}$ with $x_{\sol_i}(w) = 0$, $\varphi(\sol\oplus w) < (m-\Delta)/2$, and $\sol\oplus w$ is stable (the latter holds by Lemma~\ref{lem:opluscut}, as no build-up step applies). 
Thus, Lemma~\ref{lem:has_greedy_step} holds, so there is a type-$2$ step.

We show that the algorithm terminates within $4m+2$ iterations, giving a cut $\sol$ with 
$\varphi(\sol) \geq (m-\Delta)/2$, which by Observation~\ref{obs:greedy:max_ft_size} is a $1/2$-approximation. 
Assume, towards a contradiction, that the algorithm does not terminate within $4m+2$ iterations, i.e., no  type-$0$ is applied. 
 Since we always have $\sol_{i+1} = \sol_i\oplus v$ with $d_{\sol_i}(v) \leq d(v)/2$, Observation~\ref{eq:oplusdiff} implies that $C_{\sol_i} \le C_{\sol_{i+1}}$. We also have $\varphi(\sol_i)\leq \varphi(\sol_{i+1})$, which in type-$2$ and build-up steps holds by definition, and in type-$1$ steps holds by Lemma~\ref{lem:k_greedy_step}.

Next, we show that in every consecutive pair of iterations, either the cut size or the \aftv  strictly increases. 
Formally, for every $i$, either $C_{\sol_{i}} < C_{\sol_{i+2}}$ or $\varphi(\sol_{i}) < \varphi(\sol_{i+2})$ holds. If either one of the two iterations is a type-$1$ step then $C_{\sol_{i}} < C_{\sol_{i+2}}$, since the cut size never decreases, while it increases in a type-$1$ step, by Lemma~\ref{lem:k_greedy_step}. If neither of the two steps is a type-$1$ step, then the first one, $i$, must be a type-$2$ step: otherwise, it would be a build-up step, which has to be followed by a type-$1$ step. By the definition of a type-$2$ step,  $\varphi(\sol_{i}) < \varphi(\sol_{i+1})\le \varphi(\sol_{i+2})$.

It follows that after $4m+2$ iterations, $\max\{C_{\sol},\varphi(\sol)\} > m$, which gives a contradiction.
\end{proof}

\section{Fault Tolerance Against an Oblivious Adversary}\label{sec:Oblivious}

We give an algorithm that approximates the fault tolerant \mc against the oblivious adversary with (constant) $k$ faults within an $\agw$-approximation factor. The main idea is to frame the problem as a linear program (LP) with  an exponential number of variables, then reduce the number of variables using a solution of its dual (with an exponential number of constraints but a polynomial number of variables). The dual is approximately solved by the ellipsoid algorithm together with an approximate separation oracle that is given by  a \mc algorithm. A similar approach has been used,  e.g. in \cite{DBLP:conf/soda/JainMS03}, for an unrelated problem. 
\thmoblagw*

For simplicity, we present the algorithm for a single fault, and then show how to extend it to any constant number $k$ of faults.
The \oblftc problem can be formulated as the following LP, \eqref{primal_lp}, with an exponential number of variables.

\newcommand{\mathbox}[3][\mathop]{
  #1{\eqmakebox[#2]{$\displaystyle#3$}}
}

\begin{align}
\max\quad  &\mathbox{B}{\sum\limits_{S\subseteq V}} P_S\cdot \sum\limits_{ e\in \delta(S)}w_e - Z&& \label{primal_lp}\tag{$Primal_1$}\\
\text{s.t. } \quad & \mathbox{B}{\sum\limits_{\mathclap{S\subseteq V}}} P_S\cdot {\sum\limits_{v: \set{u,v}\in\delta(S)}}w_{\set{u,v}} \leq Z  \qquad &&\forall u \in V \label{primal_cut_c}\\
    &\mathbox{B}{\sum\limits_{S\subseteq V}} P_S \leq 1&& \label{primal_prob_c}\\
    &~0 \leq P_S \qquad&&\forall S\subseteq V \label{primal_basic_c}
\end{align}

The variable $P_S$ represents the probability assigned to the cut $S\subseteq V$. 
The variable $Z$ represents the expected weight that the adversary removes from the graph. Constraints (\ref{primal_prob_c}-\ref{primal_basic_c}) make $P_S$ a probability distribution. In (\ref{primal_cut_c}), for each vertex $u$, we bound by $Z$ the expected weight that is removed from the cut when $u$ fails. 
To see that the left hand side is indeed the expected removed weight, note that  it equals $\sum_{S\subseteq V} P_S\cdot d_S(u)$.

Consider the dual problem of the LP above, \eqref{dual_lp}: 

\begin{align}
\min ~~~ & Y \label{dual_lp}\tag{$Dual_1$} & \\
\text{s.t.} ~~~~ & \mathbox{D}{\displaystyle\sum\limits_{\substack{\set{u,v}\in \delta(S)}}}w_{\set{u,v}} - \displaystyle\sum\limits_{u\in V} X_u \displaystyle\sum\limits_{\substack{v:\set{u,v}\in \delta(S)}}w_{\set{u,v}} \leq Y  &  \forall S\subseteq V\label{dual_cut_c}\\
&\mathbox{D}{\sum_{u \in V}} X_u \leq 1 \label{dual_sum_c} & \\
 &~~~~ 0 \leq X_u & \forall u \in V\label{dual_basic_c}
\end{align}

The dual LP captures the following problem: The adversary picks a distribution over the vertices, and the algorithm picks a cut (depending on the choice of the adversary). 
The goal of the adversary is to choose its distribution (without knowing the cut choice of the algorithm) so as to minimize the expected cut size after a random failure from its distribution. 

The dual LP \eqref{dual_lp} has  an exponential number of constraints but only $|V|+1$ variables. Such LPs can be solved efficiently via the \emph{ellipsoid method} \cite{ellipsoid}, 
given an efficient \emph{separation oracle}. The latter is 
an algorithm that given an  assignment of values to the variables of the LP, reports a violated constraint if the assignment is infeasible, or otherwise reports that it is feasible. For the particular case of \eqref{dual_lp}, the ellipsoid algorithm can be viewed as a binary search over the values of $Y$,  
such that in each stage (fixed $Y$), a black-box procedure does a polynomial number of queries to a given separation oracle, and either reports the first solution $\{X_u\}_{u\in V}$ it finds such that $\{X_u\}_{u\in V},Y$ is feasible according to the oracle, or reports that there is no such solution.

Let us see what a separation oracle looks like in our case. For given values $\{X_u\}_{u\in V},Y$, let $G'=(V',E',w')$ be the graph with weights $w'_{\{u,v\}}=(1-X_u-X_v)w_{\{u,v\}}$. With this notation,  constraint \eqref{dual_cut_c} becomes $C_{S,G'}\le Y$. In order to see if  a given assignment of variables is feasible, it thus suffices to find a maximum weight cut $S^*$ in $G'$ and test if $C_{S^*,G'}\le Y$. Since  \mc is hard to solve exactly, we use an \emph{approximate separation oracle}. Given $\{X_u\}_{u\in V},Y$, it immediately returns the constraint \eqref{dual_sum_c}, if it is violated, and otherwise computes  a cut $S_{ALG}$ in $G'$ using a derandomized variant of the Goemans-Williamson algorithm~\cite{DBLP:journals/jacm/GoemansW95, DBLP:journals/siamcomp/MahajanR99}, which we denote by \textsc{Derandomized-Goemans-Williamson}. If the size of the cut is larger than $Y$, it returns the violated constraint \eqref{dual_cut_c} corresponding to $S_{ALG}$, otherwise it reports that the solution is feasible. In Algorithm~\ref{alg:oracle} we give the pseudocode of our approximate separation oracle.

\begin{algorithm}
\DontPrintSemicolon
\caption{Approximate separation oracle}
\label{alg:oracle}

\textbf{Input:} $\set{X_u}_{u\in V}, Y,G$\; 
\If{$\sum\limits_{u \in V} X_u > 1$}{
    \Return violated constraint  $\sum\limits_{u \in V} X_u \leq 1$
    }

$S_{ALG}\gets$ \textsc{Derandomized-Goemans-Williamson}($G'$)\;

\eIf{$C_{S_{ALG},G'} > Y$}{

\Return violated constraint for subset $S_{ALG}$
}
{\Return \texttt{feasible}
}
\end{algorithm}

The following lemma shows that the oracle always answers correctly if the assignment is feasible, and even if it incorrectly outputs \texttt{feasible}, the assignment is nearly feasible.  

\begin{lemma}\label{lem:separation_oracle}
Given an assignment $\set{X_u}_{u\in V}, Y$ to the variables  in \eqref{dual_lp} as input to the separation oracle in Algorithm~\ref{alg:oracle}, it holds that:
\begin{enumerate}
    \item if the assignment is feasible, then the oracle returns \texttt{feasible}, 
    \item if the assignment is infeasible, then either the oracle outputs a violated constraint, or 
    reports \texttt{feasible}, in which case 
    $\set{X_u}_{u\in V}, Y/\agw$ is feasible.
\end{enumerate}
\end{lemma}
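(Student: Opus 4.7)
The plan is to verify the two cases of the lemma by tracing through the oracle's logic and invoking the guarantee of the derandomized Goemans--Williamson algorithm on the auxiliary graph $G'$. Before either case, I would record a small but important preliminary: whenever $\sum_u X_u \le 1$ and all $X_u \ge 0$, every pair satisfies $X_u + X_v \le 1$, so the edge weights $w'_{\{u,v\}}=(1-X_u-X_v)w_{\{u,v\}}$ in $G'$ are non-negative. This is needed in order to apply the $\agw$-approximation guarantee of \textsc{Derandomized-Goemans-Williamson}. I would also observe that the constraint~\eqref{dual_cut_c} for a cut $S$ is equivalent to $C_{S,G'}\le Y$, which gives the right language for the comparison done by the oracle.

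For Part~1, I would argue directly from feasibility. If $\{X_u\}_{u\in V},Y$ is feasible, then $\sum_u X_u\le 1$ (so the first \texttt{if} is not triggered) and $C_{S,G'}\le Y$ for every subset $S\subseteq V$; in particular $C_{S_{ALG},G'}\le Y$, so the oracle skips the second \texttt{return} of a violated constraint and outputs \texttt{feasible}.

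For Part~2, the interesting direction, I would assume that on an infeasible input the oracle does not return a violated constraint, and deduce that $\{X_u\}_{u\in V},\,Y/\agw$ is feasible. Since no violated constraint was reported, we have both $\sum_u X_u\le 1$ and $C_{S_{ALG},G'}\le Y$. Let $S^\star\in\arg\max_{S\subseteq V} C_{S,G'}$ be a maximum cut in $G'$. By the guarantee of \textsc{Derandomized-Goemans-Williamson} applied to the non-negative weighted graph $G'$, we have $C_{S_{ALG},G'}\ge \agw\cdot C_{S^\star,G'}$. Chaining these,
\[
C_{S,G'}\;\le\;C_{S^\star,G'}\;\le\;\frac{C_{S_{ALG},G'}}{\agw}\;\le\;\frac{Y}{\agw}
\quad\text{for every } S\subseteq V,
\]
which is exactly constraint~\eqref{dual_cut_c} for the value $Y/\agw$. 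Together with the already-verified $\sum_u X_u\le 1$, this yields feasibility of $\{X_u\}_{u\in V},Y/\agw$.

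The proof has no real technical obstacle; the only subtlety worth flagging is the non-negativity check on the weights of $G'$, which is ensured by the order in which the oracle performs its tests (the sum condition is verified before running the cut algorithm). Everything else is a direct consequence of the $\agw$-approximation guarantee and the equivalence between constraint~\eqref{dual_cut_c} and a \mc inequality on $G'$.
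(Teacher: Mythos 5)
Your proof is correct and follows essentially the same route as the paper's: verify feasibility directly for Part~1, and for Part~2 use the $\agw$-guarantee of the derandomized Goemans--Williamson algorithm on $G'$ to chain $C_{S,G'}\le C_{S^\star,G'}\le C_{S_{ALG},G'}/\agw\le Y/\agw$ (the paper phrases this as a case split on whether $\agw\cdot C_{S^\star,G'}>Y$, which is logically the same). Your preliminary remark that $\sum_u X_u\le 1$ forces the weights of $G'$ to be non-negative -- a precondition for applying the Goemans--Williamson guarantee that the paper leaves implicit -- is a worthwhile addition.
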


\begin{proof}
Let $\set{X_u}_{u\in V}, Y$ be an assignment to the variables of \eqref{dual_lp}. If it is feasible, then it holds that $\sum\limits_{u \in V} X_u \leq 1$, and in addition every $S\subseteq V$ satisfies $C_{S,G'}\le Y$, therefore the oracle returns \texttt{feasible}.

If the assignment is infeasible, there are two cases. If $\sum\limits_{u \in V} X_u > 1$, the oracle returns this violated constraint. Otherwise, there is a subset $S'\subseteq V$ such that $C_{S',G'}>Y$. Let $S^*$ be an optimal solution for \mc on $G'$, and note that $C_{S^*, G'} > Y$. If $\agw\cdot C_{S^*,G'} > Y$, then we also have that $C_{S_{ALG}, G'} > Y$ (since $S_{ALG}$ is an $\agw$-approximate \mc), 
and the oracle returns the violated constraint for $S_{ALG}$.

Otherwise, $C_{S^*, G'} \leq Y/\agw$. Since $S^*$ is an optimal solution for \mc on $G'$, it follows that for every $S\subseteq V$, it holds that $C_{S,G'}\le Y/\agw$, i.e., the solution $\set{X_u}_{u\in V}, Y/\agw$ is feasible.
\end{proof}

It is not hard to see that the application of the ellipsoid algorithm on \eqref{dual_lp} takes a polynomial time (i.e., at most as much time as it would take with an exact separation oracle), since our approximate oracle is (possibly) incorrect only on \emph{the last} call from the ellipsoid algorithm (for a given $Y$), when it incorrectly reports a solution as feasible.

The output of the ellipsoid algorithm/binary search is an assignment $\{X_u\}_{u\in V},Y$ to the variables of \eqref{dual_lp} such that $\{X_u\}_{u\in V},Y$ is feasible according to the oracle, while $Y-\epsilon$ is infeasible with every assignment to the $X$ variables, where $\epsilon$ is the precision of the binary search. As observed above, we have that  $\{X_u\}_{u\in V},Y/\agw$ is feasible, and 
it follows that if $Y^*$ is the optimal value of \eqref{dual_lp}, then  $Y-\epsilon\le Y^*\le Y/\agw$. Since the ellipsoid algorithm queries the oracle a polynomial number of times, there is a set $\mathcal{H}\subseteq 2^V$ of a polynomial number of cuts $S$, for which  constraint \eqref{dual_cut_c} is queried. Consider a modified variant 
of \eqref{dual_lp}, called \eqref{ndual_lp}, where only constraints of cuts in $\mathcal{H}$ are present:

\begin{align}
\min~~~ & Y \label{ndual_lp}\tag{$Dual_2$} & \\
\text{s.t.} ~~~~ & \mathbox{E}{\sum\limits_{\substack{\set{u,v}\in \delta(S)}}}w_{\set{u,v}} - \sum\limits_{u\in V} X_u \sum\limits_{\substack{v:\set{u,v}\in \delta(S)}}w_{\set{u,v}} \leq Y  &  \forall S\in\mathcal{H}\nonumber\\
&\mathbox{E}{\sum_{u \in V}} X_u \leq 1 \nonumber & \\
 &~~~~ 0 \leq X_u & \forall u \in V\nonumber
\end{align}

Let $Y^*_2$ be the optimal value of \eqref{ndual_lp}. Note that $Y^*_2\le Y^*$. Note also that the ellipsoid algorithm returns exactly the same solution $\{X_u\}_{u\in V},Y$, when executed on \eqref{dual_lp} and \eqref{ndual_lp} (since our algorithm is deterministic, and only constraints in $\mathcal{H}$ are queried); hence, we have $Y-\epsilon\le Y^*_2$.
Finally, let us consider the primal LP corresponding to \eqref{ndual_lp}:

\begin{align}
\max\quad  &\mathbox{B}{\sum\limits_{S\in \mathcal{H}}} P_S\cdot \sum\limits_{ e\in \delta(S)}w_e - Z&& \label{nprimal_lp}\tag{$Primal_2$}\\
\text{s.t. } \quad & \mathbox{B}{\sum\limits_{S\in\mathcal{H}}} P_S\cdot {\sum\limits_{v: \set{u,v}\in\delta(S)}}w_{\set{u,v}} \leq Z  \qquad &&\forall u \in V \nonumber\\
    &\mathbox{B}{\sum\limits_{S\in \mathcal{H}}} P_S \leq 1&& \nonumber\\
    &~0 \leq P_S \qquad&&\forall S\in \mathcal{H} \nonumber
\end{align}
Note that \eqref{nprimal_lp} is obtained from \eqref{primal_lp} by removing variables $P_S$ with $S\notin \mathcal{H}$ (i.e., setting $P_S=0$).

The new primal has polynomially many constraints and variables, so can be solved in polynomial time. From the arguments above, we have that its optimal value $Y^*_2$ satisfies $Y-\epsilon\le Y^*_2\le Y^*\le Y/\agw$. Recalling that $Y^*$ is the optimal value for the original LP, we see that $Y^*_2$ is a $\agw$-approximation (with any polynomial precision $\epsilon$).

\paragraph*{Extending the proof of Theorem~\ref{thm:obl_agw} for \texorpdfstring{$k$}{Lg} failures}
 In order to extend the algorithm to $k$ failures, for a constant $k\in \mathbb{N}$, all we need to do is to slightly generalize the primal and dual LPs, while the overall structure stays the same. The extension of \eqref{primal_lp} to the case of $k$ faults is as follows:

\begin{align} 
\max\quad  &\mathbox{B}{\sum\limits_{S\subseteq V}} P_S\cdot \sum\limits_{ e\in \delta(S)}w_e - Z&& \nonumber\\
\text{s.t. } \quad & \mathbox{B}{\sum\limits_{\mathclap{S\subseteq V}}} P_S\cdot {\sum\limits_{e\in\delta(S), e\cap F\neq\emptyset}}w_{e} \leq Z  \qquad &&\forall F\in\binom{V}{k} \nonumber\\
    &\mathbox{B}{\sum\limits_{S\subseteq V}} P_S \leq 1&& \nonumber\\
    &~0 \leq P_S  \qquad&&\forall S\subseteq V \nonumber
\end{align}

The corresponding dual problem is as follows.
\begin{align}
\min ~~~ & Y \nonumber & \\
\text{s.t.} ~~~~ & \mathbox{H}{\sum\limits_{\substack{e\in \delta(S)}}}w_{e} - \sum\limits_{F\in\binom{V}{k}} X_F \sum\limits_{\substack{e \in \delta(S), e\cap F\neq\emptyset}}w_{e} \leq Y  &  \forall S\subseteq V\nonumber\\
&\mathbox{H}{\sum_{F\in\binom{V}{k}}} X_F\leq 1 \nonumber & \\
 &0 \leq X_F & \forall F \in \binom{V}{k}\nonumber
\end{align}

The separation oracle is similar to Algorithm~\ref{alg:oracle}, but defines the weight function of $G'$ as $w'_{\set{u,v}} = \bigg(1- \displaystyle\sum\limits_{\set{u,v}\cap F \neq \emptyset}X_F\bigg)w_{\set{u,v}}$. The rest of the algorithm is the same, and the proof is similar.

\FloatBarrier

\section{Hardness of Approximation}\label{sec:hardness}
In this section we show that assuming the Unique Games Conjecture, one cannot approximate \ftc and \oblftc within a factor greater than $\agw$. Formally, we prove the following:

\hardnessobl*

In both cases, given an unweighted instance $G$ of \mc, we construct an unweighted graph $G'$, according to Algorithm~\ref{alg:mc_by_ft}: we take the disjoint union of $G$ with a star with $n=|V|$ leaves and a center $u^*$, and add an edge joining $u^*$ to an arbitrary vertex $v_1\in V$. 
This completes the construction of $G'$ (see Figure~\ref{figure:construct_G_prime}). Clearly, this is a polynomial construction. 

\begin{figure}
\centering
\includegraphics[width=8.5cm, scale=0.25]{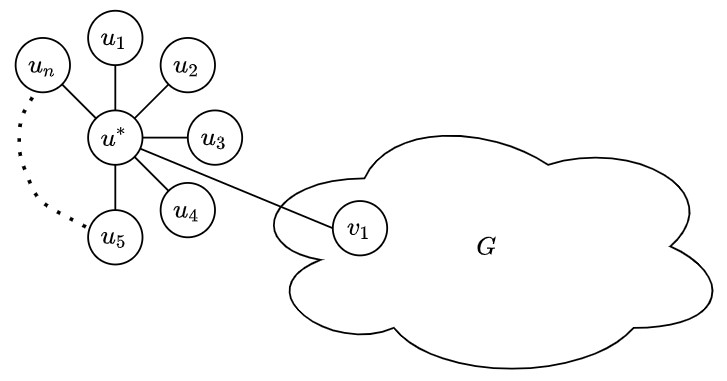}
\caption{The Construction of $G'$}
\label{figure:construct_G_prime}
\end{figure}

\begin{algorithm}
\DontPrintSemicolon
\caption{Approximate \mc Using \ftc}
\label{alg:mc_by_ft}

\textbf{Input:} $G=(V,E)$, $V=\{v_1,\dots,v_n\}$\;

Let $G' = (V',E')$ for $V' = V\cup \set{u^*, u_1, \dots, u_n}$, $E' = E \cup \set{\set{u^*, u_i}:\text{ }i\in [n]} \cup \set{\set{u^*, v_1}}$\;

\Return $G'$\;
\end{algorithm}

Below, we show for each kind of adversary how to translate a given (approximate) solution to \ftc or \oblftc in $G'$ into a solution to \mc in $G$, which would imply the corresponding inapproximability results, using the fact that \mc is hard to approximate within a factor better than $\agw$ \cite{DBLP:journals/siamcomp/KhotKMO07}. We use the following simple observation.\label{tryE1}

\begin{observation}\label{obs:hardnesssimple}
Let $S\subseteq V$ be a cut in $G$, and $S'=S\cup\{u^*\}$. It holds that in $G'$, $u^*$ is a critical vertex of $S'$, i.e., $\varphi(S')=C_{S'-u^*, G'-u^*}$. For every cut $S''\subseteq V'$, we have $C_{S''-u^*,G'}=C_{S''\cap V,G}$.
\end{observation}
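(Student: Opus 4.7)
The plan is to verify both parts of Observation~\ref{obs:hardnesssimple} by direct inspection of the graph $G'$ produced by Algorithm~\ref{alg:mc_by_ft}, using the single-fault identity $\varphi(S') = C_{S'} - \max_{v \in V'} d_{S'}(v)$ that follows from $C_{S-v,G-v} = C_S - d_S(v)$.

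For the claim that $u^*$ is critical for $S' = S \cup \{u^*\}$, I would show $u^* \in \argmax_{v \in V'} d_{S'}(v)$. Since $u^* \in S'$ and none of the leaves $u_1, \dots, u_n$ belongs to $S'$, every one of the $n$ leaf-edges $\{u^*, u_i\}$ crosses $S'$, so $d_{S'}(u^*) \geq n$. On the other hand, for any other vertex I would bound the degree from above: each leaf has degree $1$ in $G'$, and every $v \in V$ has at most $n-1$ neighbors inside $V$ (since $|V| = n$) plus at most one extra neighbor $u^*$ (only for $v = v_1$), so $d_{G'}(v) \leq n$. Consequently $d_{S'}(v) \leq d_{G'}(v) \leq n \leq d_{S'}(u^*)$ for every $v \neq u^*$, which makes $u^*$ critical, i.e., $\varphi(S') = C_{S'-u^*, G'-u^*}$.

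For the second claim I would first note the structural fact that every edge of $G'$ not belonging to $E$ is incident to $u^*$, so $G' - u^*$ has edge set exactly $E$, with the leaves $u_1, \dots, u_n$ left isolated. For any $S'' \subseteq V'$ and any edge $e = \{a,b\} \in E$, both endpoints lie in $V$, so $a \in S'' - u^*$ iff $a \in S'' \cap V$ and likewise for $b$; hence $e$ crosses $S'' - u^*$ in $G' - u^*$ exactly when it crosses $S'' \cap V$ in $G$. Summing the unit weights over such edges yields the desired equality.

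The only subtlety I see is notational: as written, the second equation uses $G'$ rather than $G' - u^*$ on the left-hand side, but in context this must be the same $G' - u^*$ that appears in the definition of $\varphi$ and in the first part of the observation; otherwise the crossing leaf-edges $\{u^*, u_i\}$ with $u_i \in S''$ would spuriously contribute to the left-hand side. With that natural reading the proof is an elementary edge-count and there is no real obstacle. Morally, the observation is recording that the star-and-bridge gadget forces $u^*$ to be the adversary's optimal target, while the surviving subgraph $G'-u^*$ is essentially $G$ plus a cloud of isolated leaves that contribute nothing to any cut.
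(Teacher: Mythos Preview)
Your argument is correct and follows exactly the paper's approach: the paper justifies the observation in one line by noting that $d_{S'}(v)\le n\le d_{S'}(u^*)$ for every $v\in V'$, and that all edges of $G'-u^*$ lie in $G$; you have simply spelled out these two facts in detail. Your remark about the notational slip is also on point: the second identity should read $C_{S''-u^*,\,G'-u^*}=C_{S''\cap V,\,G}$, which is indeed how the paper uses it in Lemma~\ref{lem:optft_optmc} and the subsequent proofs.
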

The proof follows from the fact that for every vertex $v\in V'$, $d_{S'}(v)\le n\le d_{S'}(u^*)$, and that all edges in $G'-u^*$ belong to $G$.

\subsection{Adaptive Adversary}

First, let us observe that the optimal values of \mc in $G$ and \ftc in $G'$ are equal.\label{tryE2}

\begin{lemma}\label{lem:optft_optmc}
Let $S^*_{ft}$ be an optimal  \ftc in $G'$ and $S^*_{mc}$ be an optimal \mc in $G$. It holds that $\varphi(S^*_{ft}, G') = C_{S^*_{mc}, G}$.

\end{lemma}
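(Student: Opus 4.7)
The plan is to prove both inequalities $\varphi(S^*_{ft}, G') \ge C_{S^*_{mc}, G}$ and $\varphi(S^*_{ft}, G') \le C_{S^*_{mc}, G}$ directly, using Observation~\ref{obs:hardnesssimple} as the main tool.

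For the lower bound, I would define the cut $S' := S^*_{mc} \cup \{u^*\}$ in $G'$ and apply Observation~\ref{obs:hardnesssimple}: since $u^*$ is critical for $S'$, we get $\varphi(S', G') = C_{S' - u^*, G' - u^*}$. But $G' - u^*$ consists of $G$ together with $n$ isolated vertices $u_1, \dots, u_n$, so $C_{S' - u^*, G' - u^*} = C_{S^*_{mc}, G}$. Since $S^*_{ft}$ is an optimal \ftc, $\varphi(S^*_{ft}, G') \ge \varphi(S', G') = C_{S^*_{mc}, G}$.

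For the upper bound, I would use the adversarial choice that fails $u^*$. For any cut $T \subseteq V'$, by definition $\varphi(T, G') \le C_{T - u^*, G' - u^*}$, and again, since $G' - u^*$ has the same edge set as $G$ (the leaves $u_i$ become isolated), $C_{T - u^*, G' - u^*} = C_{T \cap V, G}$. Applying this to $T = S^*_{ft}$ and recalling that $T \cap V$ is a cut of $G$ whose size is bounded by the \mc optimum, we obtain $\varphi(S^*_{ft}, G') \le C_{S^*_{ft} \cap V, G} \le C_{S^*_{mc}, G}$.

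Combining the two inequalities gives the claimed equality. I do not expect any real obstacle here: the gadget is specifically designed so that the star acts as a ``forced critical vertex,'' pinning the worst-case failure to $u^*$ and thereby collapsing the fault tolerant objective on $G'$ to the plain \mc objective on $G$. The only subtlety worth double-checking is that $d_{G'}(u^*) = n+1$ indeed dominates every other vertex's crossing degree in any cut that places $u^*$ opposite its $n$ leaves, which is exactly the content of Observation~\ref{obs:hardnesssimple}.
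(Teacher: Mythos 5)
Your proposal is correct and follows essentially the same route as the paper: both directions are obtained from Observation~\ref{obs:hardnesssimple}, with the lower bound via the cut $S^*_{mc}\cup\{u^*\}$ (whose critical vertex is $u^*$) and the upper bound via the adversary failing $u^*$, which reduces the cut value to $C_{S^*_{ft}\cap V, G}\le C_{S^*_{mc},G}$. No gaps.
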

\begin{proof}

First, we show that $\varphi(S^*_{ft}, G') \leq C_{S^*_{mc}, G}$. This follows since  $C_{S^*_{ft}-u^*,G'\sm u^*}\ge \varphi(S^*_{ft}, G')$, and  by Observation~\ref{obs:hardnesssimple},  $C_{S^*_{ft}-u^*,G'\sm u^*}=C_{S_{ft}\cap V,G}\le C_{S^*_{mc}, G}$.

Next, let us show that  $\varphi(S^*_{ft}, G') \geq C_{S^*_{mc}, G}$. Let $S = S^*_{mc} \cup \set {u^*}$. By Observation~\ref{obs:hardnesssimple}, we have that $\varphi(S^*_{ft}, G')\ge\varphi(S, G') = C_{S\sm u^*, G'\sm u^*}=C_{S\cap V,G}=C_{S^*_{mc},G}$. 
\end{proof}

\begin{proof}[Proof of Theorem~\ref{thm:hardness_obv} for an Adaptive Adversary] Assume, for contradiction, that we have an $\alpha$-approximation algorithm for \ftc, for $\alpha>\agw$. Let us construct an $\alpha$-approximation algorithm for \mc. For any input graph $G$, construct the graph $G'$, as per Algorithm~\ref{alg:mc_by_ft}. As assumed, we can compute an $\alpha$-approximate \ftc  $S_{ft}$ in $G'$. Let $S^*_{ft}$ and $S^*_{mc}$ be   optimal solutions for \ftc in $G'$ and \mc in $G$ (resp.).
By Observation~\ref{obs:hardnesssimple} and Lemma~\ref{lem:optft_optmc}, we have  $C_{S_{ft}\cap V,G}=C_{S_{ft}-u^*, G'-u^*}\ge \varphi(S_{ft}, G')\ge \alpha\cdot \varphi(S^*_{ft}, G')= \alpha\cdot C_{S^*_{mc}}$. 
Thus, we get an $\alpha$-approximation algorithm for \mc, which is impossible under the Unique Games Conjecture and $NP\neq P$. The proof extends to randomized algorithms as well (here we also assume $NP\neq BPP$, which together with the Unique Games Conjecture excludes better than $\agw$-approximation algorithms, including randomized, for \mc). 
\end{proof}

\subsection{Oblivious Adversary}

Here, we rely on the fact that under the assumption of the Unique Games Conjecture and $NP\neq BPP$, there is no randomized algorithm that outputs a better than $\agw$-approximation for \mc with constant probability.

Again, we begin by showing that the optimal values for \oblftc in $G'$ and \mc in $G$ are equal.

\begin{lemma}\label{lem:mc_by_oblftc}
Let $\mathcal{D}^*$ be the distribution of an optimal \oblftc in $G'$, 
and $S^*_{mc}$ be an optimal \mc in $G$. It holds that $\mu(\mathcal{D}^*, G')=C_{S^*_{mc}, G}$. 
\end{lemma}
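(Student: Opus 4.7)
The plan is to mirror the two-direction argument used in Lemma~\ref{lem:optft_optmc}, replacing $\varphi$ with its expected counterpart $\mu$. The key structural tool remains Observation~\ref{obs:hardnesssimple}: the high-degree vertex $u^*$ dominates the cut whenever it lies on the taken side, so the adversary's best single fault against such a cut always reduces the crossing edges to exactly those edges of $G$ that survive in $G'-u^*$. Since the oblivious adversary plays deterministically (as noted after the definition of \koblftc), I will handle the two inequalities by exhibiting a good distribution on one side and a good fault on the other.

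For the lower bound $\mu(\mathcal{D}^*, G') \ge C_{S^*_{mc}, G}$, I will exhibit a concrete competitor distribution. Let $\mathcal{D}_0$ be the point-mass distribution on the cut $S = S^*_{mc} \cup \{u^*\}$. Since $\mathcal{D}_0$ is deterministic,
\[
\mu(\mathcal{D}_0, G') \;=\; \min_{v \in V'} C_{S-v,\,G'-v} \;=\; \varphi(S, G'),
\]
which by Observation~\ref{obs:hardnesssimple} is achieved by the fault $v=u^*$, so it equals $C_{S-u^*,\,G'-u^*} = C_{S\cap V,\, G} = C_{S^*_{mc}, G}$. By optimality of $\mathcal{D}^*$ for \oblftc, we conclude $\mu(\mathcal{D}^*, G') \ge \mu(\mathcal{D}_0, G') = C_{S^*_{mc}, G}$.

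For the upper bound $\mu(\mathcal{D}^*, G') \le C_{S^*_{mc}, G}$, I will force the adversary to commit to the single fault $F = \{u^*\}$. For any cut $S'' \subseteq V'$, Observation~\ref{obs:hardnesssimple} gives $C_{S''-u^*,\,G'-u^*} = C_{S''\cap V,\, G} \le C_{S^*_{mc}, G}$, where the last inequality uses optimality of $S^*_{mc}$ as a \mc in $G$. Taking expectation over $S \sim \mathcal{D}^*$ and using linearity together with the fact that $\mu$ is a minimum over fault sets,
\[
\mu(\mathcal{D}^*, G') \;\le\; \mexpl{S \sim \mathcal{D}^*}{C_{S-u^*,\,G'-u^*}} \;\le\; C_{S^*_{mc}, G}.
\]
Combining the two bounds yields $\mu(\mathcal{D}^*, G') = C_{S^*_{mc}, G}$. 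I do not anticipate a real obstacle here: all structural work was done in Observation~\ref{obs:hardnesssimple}, and the oblivious case merely needs linearity of expectation to transport the argument of Lemma~\ref{lem:optft_optmc} from $\varphi$ to $\mu$.
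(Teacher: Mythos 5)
Your proposal is correct and follows essentially the same route as the paper's proof: both directions rest on Observation~\ref{obs:hardnesssimple}, the lower bound uses the point-mass distribution on $S^*_{mc}\cup\{u^*\}$ with $u^*$ as the critical (hence worst-case) fault, and the upper bound fixes the fault $u^*$ and bounds the expectation termwise by the optimal \mc value.
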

\begin{proof}
Let $\sol = S^*_{mc} \cup \set {u^*}$, and let $\mathcal{D}$ be the distribution that assigns   probability $1$ to $\sol$ and probability 0 to all other cuts. By Observation~\ref{obs:hardnesssimple}, $u^*$ is a critical vertex, hence for every vertex $v\in G'$, we have 
$\mexpl{{S \sim \mathcal{D}}}{C_{S\sm u^*, G'\sm u^*}} = C_{\sol\sm u^*, G'\sm u^*}\le C_{\sol\sm v, G'\sm v}=\mexpl{{S \sim \mathcal{D}}}{C_{S\sm v, G'\sm v}}$. 
Using Observation~\ref{obs:hardnesssimple} again, we have $\mu(\mathcal{D}^*, G')\ge \mu(\mathcal{D}, G')\ge C_{\sol-u^*,G'-u^*}=C_{S^*_{mc},G}$. 

Next, by Observation~\ref{obs:hardnesssimple}, we have $C_{S-u^*,G'-u^*}=C_{S\cap V,G}\le C_{S^*_{mc}, G}$, for every cut $S$ from the support of $\mathcal{D}^*$, which implies that  $\mu(\mathcal{D}^*, G')\le \mexpl{{S \sim \mathcal{D}^*}}{C_{S\sm u^*, G'\sm u^*}}\le C_{S^*_{mc}, G}$. This  completes the proof.
\end{proof}

\begin{proof}[Proof of Theorem~\ref{thm:hardness_obv} for an Oblivious Adversary]
Assume, for a contradiction, that we have an $\alpha$-approximation algorithm for \oblftc, for $\alpha=\agw+\epsilon>\agw$. We design a randomized approximation algorithm for \mc. Let $G$ be an input to \mc. Construct the graph $G'$ as per Algorithm~\ref{alg:mc_by_ft}. 
Let $\mathcal{D}$ be the distribution of an $\alpha$-approximate \oblftc in $G'$.
By Lemma~\ref{lem:mc_by_oblftc}, we have $\mexpl{{S \sim \mathcal{D}}}{C_{S\sm u^*, G'\sm u^*}}\ge \mu(\mathcal{D},G')\ge  \alpha \cdot C_{S^*_{mc}, G}$, where $S^*_{mc}$ is a \mc in $G$. By Observation~\ref{obs:hardnesssimple}, it holds that $C_{S_{ft}-u^*, G'-u^*}=C_{S_{ft}\cap V, G}\le C_{S^*_{mc}}$ for every cut $S_{ft}$ in the support of $\mathcal{D}$. 
Letting $p=\mpr{C_{S\sm u^*, G'\sm u^*}\ge (\alpha-\epsilon/2) C_{S^*_{mc}}}$, we have   \[
\alpha\cdot C_{S^*_{mc},G}\le \mexpl{{S \sim \mathcal{D}}}{C_{S\sm u^*, G'\sm u^*}}\le p\cdot C_{S^*_{mc}, G} + (1-p)\cdot (\alpha-\epsilon/2) C_{S^*_{mc},G}\ , 
\]
implying that $p\ge \epsilon/2$. Thus, for a random cut $S_{ft}$ sampled  from $\mathcal{D}$, it holds that $S_{ft}\cap V$ is an $(\alpha-\epsilon/2)$-approximation to \mc, with probability $\epsilon/2$, where $\alpha-\epsilon/2>\agw$. This contradicts to our assumption about the Unique Games Conjecture and $NP\neq BPP$.
\end{proof}

\section{The Approximation Factor of a Random Cut}\label{sec:randomcut} In this section, we study the approximation provided by a random cut for the fault tolerant \mc problem, where a random cut is obtained by including each vertex in the cut independently, with probability $1/2$.

In the case of an adaptive adversary, we show that a random cut cannot achieve an $\alpha$-approximation for weighted graphs, with $\alpha > 1/4$. Nonetheless, we show that if the input is a \emph{connected} unweighted graph with sufficiently many vertices, then a random cut gives a $(1/2-\epsilon)$-approximation. For an oblivious adversary, we show that the uniform distribution over all cuts, which can be seen as a randomized algorithm that outputs a random cut as described above, gives a $1/2$-approximation for weighted instances and many faults.

\subsection{A Negative Result for Adaptive Adversary and Weighted Instances
}
Here we show that a random cut cannot achieve an approximation factor that is greater than $1/4$ for weighted \ftc.\label{tryF1}

\begin{theorem}\label{thm:random_cut_weight_ftc}
For every $\epsilon > 0$ and $n\ge 3/4\epsilon$, there is an $n$-vertex graph $G$ such that $\mexp{\varphi(S)}\leq(1/4+\epsilon)\varphi(S^*)$, where $S$ is a uniformly random cut $S$,  and $S^*$ is an optimal \ftc in $G$.
\end{theorem}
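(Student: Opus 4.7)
The plan is to exhibit a simple $n$-vertex graph whose optimal \ftc and expected random-cut \aftv both admit clean closed-form values, with the ratio coming out to exactly $1/4$. Take $G$ on $n \ge 4$ vertices to have exactly two edges, both of unit weight, namely $\{u_1,u_2\}$ and $\{u_3,u_4\}$, with the remaining $n-4$ vertices isolated. The hypothesis $n \ge 3/(4\epsilon)$ serves only to guarantee the needed $n \ge 4$; the construction itself is independent of $\epsilon$.

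For the optimal side, consider $S^* = \{u_1,u_3\}$. Both edges cross, so $C_{S^*} = 2$. Any single-vertex failure removes at most one of the two crossing edges: failing any of $u_1, u_2, u_3, u_4$ deletes exactly one edge and leaves the other crossing, while failing an isolated vertex leaves both crossing. Hence $\varphi(S^*) = 1$. Optimality is trivial: any cut has at most two edges, and a single failure can kill any one edge's contribution, so $\varphi(S') \le 1$ for every cut $S'$.

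For a uniformly random cut $S$, let $X, Y \in \{0,1\}$ indicate whether $\{u_1,u_2\}$ and $\{u_3,u_4\}$ cross $S$, respectively. Since the four endpoints are independent fair coin flips, $X$ and $Y$ are independent Bernoulli$(1/2)$. The only vertices with nonzero crossing degree are the endpoints of crossing edges, each contributing exactly one unit, so $C_S = X + Y$ and $\max_v d_S(v) = \max(X, Y)$. Therefore $\varphi(S) = C_S - \max_v d_S(v) = \min(X, Y)$, which equals $1$ precisely when $X = Y = 1$ and is $0$ otherwise. Taking expectations, $\mexp{\varphi(S)} = \mpr{X = 1,\ Y = 1} = 1/4$.

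Combining the two quantities, $\mexp{\varphi(S)} / \varphi(S^*) = 1/4 \le 1/4 + \epsilon$, as required. The argument itself poses no real technical obstacle; the content lies entirely in the choice of instance. Two independent edges force a random cut to "succeed" (both edges crossing) with probability exactly $1/4$, and in every other case a single-vertex failure at an endpoint of the unique crossing edge erases the cut. Thus the $1/4$ factor is an intrinsic, matching feature of the instance rather than any slack in the analysis.
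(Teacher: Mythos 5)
Your proof is correct for the statement as literally written, and your instance is genuinely different from the paper's. The paper constructs a \emph{connected, weighted} $n$-cycle in which two non-adjacent edges carry weight $n(n-3)$ and the rest weight $1$: the two heavy edges cross a uniformly random cut independently with probability $1/2$ each, and whenever at least one fails to cross, the adversary fails an endpoint of the other, leaving only light edges of total weight at most $n-3$. This gives $\mexp{\varphi(S)}\le (1/4)\varphi(S^*)+(3/4)(n-3)\le (1/4+3/(4n))\varphi(S^*)$, which is where the hypothesis $n\ge 3/(4\epsilon)$ is actually used. Your construction---two disjoint unit edges plus $n-4$ isolated vertices---is the same underlying mechanism (two independent ``important'' edges, each crossing with probability $1/2$, with the adversary able to erase whatever survives when they do not both cross) stripped of the light edges, so the ratio comes out to exactly $1/4$ and the lower bound on $n$ is only needed to ensure $n\ge 4$. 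Your computations of $\varphi(S^*)=1$, of the optimality of $S^*$, and of $\mexp{\varphi(S)}=\mexp{\min(X,Y)}=1/4$ are all correct.

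One caveat is worth flagging. This theorem sits in a subsection explicitly about \emph{weighted} instances and is contrasted with Theorem~\ref{thm:random_cut_unweight_ftc}, which shows that for \emph{connected unweighted} graphs with large $n$ a random cut is a $(1/2-\epsilon)$-approximation. The paper's instance is connected and genuinely weighted, so it witnesses the necessity of the unweightedness hypothesis there; your instance is unweighted but disconnected (with isolated vertices), so it instead witnesses the necessity of the connectivity hypothesis---something the paper essentially already observes in its remark about the unweighted $4$-cycle. If the theorem is read with the intended implicit qualifier ``weighted'' (or ``connected''), your example would not serve the paper's purpose; as the statement is literally phrased, it does.
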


\begin{proof}
Let $n \geq 4$, and consider the weighted graph $G=(V,E,w)$ with $V = \set{v_0, v_1,\dots, v_{n-1}}$, $E = \{e_i=\{v_{i}, v_{i+1 \mod{n}}\} : i=0,1,\dots,n-1\}$, and the following weight function: $w_{e_0} = w_{e_2}=n(n-3)$ (we call these edges \emph{heavy}), and $w_{e_i} = 1$, for $i\notin \{0,2\}$ (we call these edges \emph{light}). Thus, $G$ is a weighted cycle with edges of weight $1$, except for two non-adjacent edges of weight $n(n-3)$ (See Figure~\ref{figure:construct_G_weighted_random_cut}).
\begin{figure}
\centering
\includegraphics[width=6cm, scale=0.25]{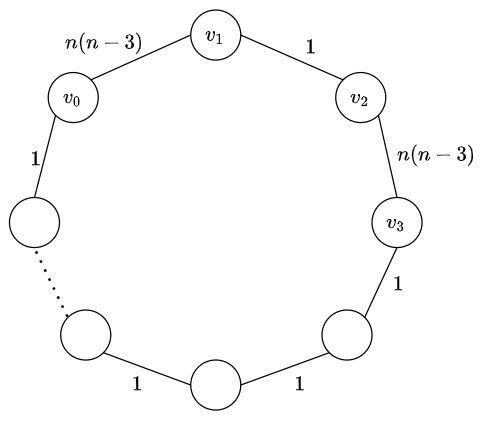}
\caption{The Construction for Theorem~\ref{thm:random_cut_weight_ftc}.}
\label{figure:construct_G_weighted_random_cut}
\end{figure}

Let $S^*$ be an optimal  \ftc in $G$. Note that $\varphi(S^*) \ge n(n-3)$, since, e.g., $\varphi(\{v_1,v_2\})=n(n-3)$. Let $S$ be a random cut, and let $\mathcal{E}$ be the event that both edges $e_0$ and $e_2$ cross $S$. We have $\mpr{\mathcal{E}}=1/4$, since the probability for each edge to cross is $1/2$, and $e_0,e_2$ are  disjoint, hence independent. 
 Given $\neg\mathcal{E}$, i.e., at least one of $e_0,e_2$, say, $e_0$,  does not cross $S$, we have that $\varphi(S)\le n-3$, since, when we fail an endpoint of $e_2$, say, $v_2$, there are no heavy crossing edges left. Thus, we have
 \begin{align*}
 \mexp{\varphi(S)} &= \mexp{\varphi(S) \mid \mathcal{E}}\cdot \mpr{\mathcal{E}} + \mexp{\varphi(S) \mid \neg\mathcal{E}}\cdot \mpr{\neg\mathcal{E}}\\ 
 &\leq (1/4)\varphi(S^*) + (3/4)(n-3)\\ 
 &= \varphi(S^*)\cdot (1/4+3(n-3)/4\varphi(S^*))\\
 &\le \varphi(S^*)\cdot (1/4+3/4n)\ .
 \end{align*}
We get the desired approximation, since $n\ge 3/4\epsilon$.
\end{proof}

\subsection{Adaptive Adversary and Unweighted Instances}
Here we show that for every $0 < \epsilon < 1/16$, a random cut can get a $(1/2-\epsilon)$-approximation for unweighted \ftc for every graph $G$ with large enough $n$.

\begin{theorem}\label{thm:random_cut_unweight_ftc}
Let $0 < \epsilon < 1/16$, and let $S$ be a random cut in an unweighted graph $G=(V,E)$. If $G$ is connected and $n=|V|$ is large enough (w.r.t. $\epsilon$) then $S$ is a $(1/2-\epsilon)$-approximation for \ftc with a single fault in $G$.
\end{theorem}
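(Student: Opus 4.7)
My plan is to lower-bound $\mexp{\varphi(S)}$ via concentration and combine with the structural upper bound $\varphi(S^*) \le m - \Delta$ from Observation~\ref{obs:greedy:max_ft_size}. Writing $\varphi(S) = C_S - \max_v d_S(v)$ and using linearity, $\mexp{\varphi(S)} = m/2 - \mexp{\max_v d_S(v)}$, so the task reduces to upper bounding $\mexp{\max_v d_S(v)}$. For each $v$, the crossing degree $d_S(v)$ is distributed as $\mathrm{Bin}(d(v),1/2)$ (each neighbor of $v$ lies on the opposite side independently with probability $1/2$); Hoeffding therefore gives $\mpr{d_S(v) \ge \Delta/2 + t} \le \exp(-2t^2/\Delta)$ for every $t \ge 0$, and a union bound yields $\mpr{\max_v d_S(v) \ge \Delta/2 + t} \le n\exp(-2t^2/\Delta)$. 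Integrating this tail inequality shows $\mexp{\max_v d_S(v)} \le \Delta/2 + c\sqrt{\Delta \ln n}$ for a universal constant $c$, hence the master inequality
\[
\mexp{\varphi(S)} \ge (m-\Delta)/2 - c\sqrt{\Delta \ln n}.
\]
Setting $K := m - \Delta$ and invoking $\varphi(S^*) \le K$, it suffices to prove $\mexp{\varphi(S)} \ge (1/2 - \epsilon) K$.

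I would then split on the value of $K$. In the ``large-$K$'' case $K \ge (2c/\epsilon)\sqrt{\Delta \ln n}$, the error term in the master inequality is at most $(\epsilon/2) K$, and we conclude $\mexp{\varphi(S)} \ge (1/2-\epsilon)K$ directly.

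In the ``small-$K$'' case $K < (2c/\epsilon)\sqrt{\Delta \ln n}$, the graph is essentially a star centered at a maximum-degree vertex $v^*$, and the master inequality is too weak. The key structural observation is that every $u \ne v^*$ satisfies $d(u) \le K + 1$, because the only edges incident to $u$ are (i) the possible edge to $v^*$ and (ii) edges inside $H := G - v^*$, of which there are at most $K$. Thus $\max_{u \ne v^*} d_S(u) \le K + 1$ deterministically, so whenever $d_S(v^*) \ge K + 1$ the maximum of $d_S(\cdot)$ is attained at $v^*$ and $\varphi(S) = C_S - d_S(v^*) = C_{S \cap V(H), H}$, the cut in $H$ induced by $S$. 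Using $0 \le C_{S \cap V(H), H} \le K$, this yields
\[
\mexp{\varphi(S)} \ge \mexp{C_{S \cap V(H), H}} - K \cdot \mpr{d_S(v^*) < K+1} = K/2 - K \cdot \mpr{d_S(v^*) < K+1}.
\]
Connectivity of $G$ forces $\Delta \ge n - 1 - K$, and since $K = o(n)$ here, $\Delta = \Omega(n)$; the lower-tail Hoeffding bound therefore gives $\mpr{d_S(v^*) < K+1} \le \exp(-\Omega(n))$, so $\mexp{\varphi(S)} \ge K/2 - o(1) \ge (1/2 - \epsilon) K$ for $n$ large enough (if $K = 0$ then $G$ is a star, $\varphi(S^*) = 0$, and the inequality is vacuous).

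The main obstacle is precisely this small-$K$ regime: when $G$ is nearly a star, the generic Hoeffding-plus-union-bound argument is too loose, and one must use the ``almost all edges go through $v^*$'' structure to argue that $v^*$ is almost surely the unique critical vertex, so that $\varphi(S)$ collapses to the value of a random cut on the tiny residual graph $H$. The ``large $n$'' hypothesis is needed on two fronts: to make the slack $c\sqrt{\Delta\ln n}/K$ drop below $\epsilon$ in the large-$K$ case, and to make the exponential failure probability $\exp(-\Omega(n))$ negligible in the small-$K$ case.
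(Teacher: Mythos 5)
Your proof is correct, and although it follows the same broad strategy as the paper---benchmark against $\varphi(S^*)\le m-\Delta$, use concentration of crossing degrees in the generic regime, and treat the near-star regime by showing the maximum-degree vertex is almost surely the unique critical vertex---the decomposition and the key estimates are genuinely different. The paper splits into three cases ($\Delta<c\log n$; $\Delta\ge c\log n$ with $m>(1+2\epsilon)\Delta$; $\Delta\ge c\log n$ with $m\le(1+2\epsilon)\Delta$), and its middle case needs a somewhat delicate conditioning argument to push the multiplicative concentration of $\max_v d_S(v)$ through the expectation. Your master inequality $\mexp{\max_v d_S(v)}\le \Delta/2+O(\sqrt{\Delta\ln n})$, obtained by integrating the union-bounded Hoeffding tail, absorbs the paper's first two cases into a single additive bound, and the dichotomy on $K=m-\Delta$ versus $\sqrt{\Delta\ln n}$ is exactly the threshold at which that bound loses an $\epsilon$-fraction; note this pushes the ``generic'' case much further (down to $K=\Theta(\sqrt{\Delta\log n})$ rather than $K>2\epsilon\Delta$), so your near-star case is correspondingly narrower. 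In that regime, your deterministic observation that $d(u)\le K+1$ for every $u\ne v^*$ (so $v^*$ is critical whenever $d_S(v^*)\ge K+1$) replaces the paper's bookkeeping with the set of vertices touching non-star edges and its independence argument, and the crude subtraction $\mexp{\varphi(S)}\ge K/2-K\cdot\mpr{d_S(v^*)<K+1}$ suffices because connectivity plus $K=o(n)$ forces $\Delta=\Omega(n)$ and hence an exponentially small failure probability. In a full write-up you would only need to spell out the tail-integration constant and the quantitative dependence of ``$n$ large enough'' on $\epsilon$ in the small-$K$ case; both are routine.
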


\begin{proof}
\def \eps {\epsilon}
Fix a constant $0 < \epsilon < 1/16$. We assume that $n$ is sufficiently large w.r.t. $\eps$. Let $c=48/\eps^4$. We also assume that $G$ is connected. In particular, $m\ge n-1$.

Let $S$ be a random cut obtained by sampling every vertex $v\in V$ independently with probability $1/2$. Every edge $e\in E$ crosses $S$ with probability exactly $1/2$, so $\mexp{C_{S,G}}=m/2$. 

\textbf{Case 1: $\Delta<c\log n$.} In this case we have $\Delta < \eps m/2$, since $m\ge n-1\ge (2c/\eps)\log n$, for large enough $n$. Thus, we have 
$\mexp{\varphi(S)}=\mexp{C_{S,G}-\max_{v\in V}d_S(v)}
= \mexp{C_{S,G}}-\mexp{\max_{v\in V}d_S(v)}\ge m/2 - \Delta\ge (1-\eps)m/2$, hence we get a $(1-\eps)/2$-approximation.

\textbf{Case 2: $\Delta\ge c\log n$.} Note that for every vertex $v\in V$, $\mexp{d_S(v)}=d(v)/2$. Thus, for $v$ with $d(v)>\Delta/2$, we have $\mexp{d_S(v)}>\Delta/4\ge (c/4)\log n$. By a Chernoff bound (note that $d_S(v)$ is a sum of independent Bernoulli random variables, one for each adjacent edge of $v$) and the choice of $c$, we have:
\begin{align*}\mpr{|d_S(v)-d(v)/2|\ge \eps^2\cdot (d(v)/2)} &\le 2\cdot \exp(-(\eps^4/3) \mexp{d_S(v)})\\
&\le 2\cdot\exp(-(\eps^4 c/12)\log n)\le 2\cdot n^{-4}\ .
\end{align*}
 Let $\mathcal{E}$ be the event that  $d_S(v)=(1\pm\eps^2)d(v)/2$ holds for all $v\in V$ with $d(v)\ge \Delta/2$. By the union bound, we have that $\mpr{\mathcal{E}}\ge 1-n^{-2}$. We consider two cases.

\textbf{Case 2.1: $m>(1+2\eps)\Delta$}.
Recall that $\mexp{C_{S,G}}=m/2$. Thus,
\[
m/2=\mexp{C_{S,G}}=\mexp{C_{S,G} \mid \mathcal{E}}\mpr{\mathcal{E}} + \mexp{C_{S,G} \mid \neg \mathcal{E}}\mpr{\neg\mathcal{E}}\le \mexp{C_{S,G} \mid \mathcal{E}}\mpr{\mathcal{E}} + m/n^2\ ,
\]
where the last inequality follows because $\mpr{\neg \mathcal{E}} \leq 1/n^2$, and $C_{S,G}$ is bounded by $m$. Hence, $\mexp{C_{S,G} \mid \mathcal{E}}\mpr{\mathcal{E}}\ge m/2-m/n^2\ge m/2 -1$. It then follows that
\begin{align*}
\mexp{\varphi(S)}&\ge \mexp{C_{S,G}-\max_v{d_S(v)}}\\
&\ge \mexp{C_{S,G}-\max_v{d_S(v)}\mid \mathcal{E}}\mpr{\mathcal{E}}\\
&=\mexp{C_{S,G}\mid \mathcal{E}}\mpr{\mathcal{E}} - \mexp{\max_v{d_S(v)}\mid \mathcal{E}}\mpr{\mathcal{E}}\\
&\ge m/2-1 - (1+\eps^2)\Delta/2\\
&=(m-\Delta)/2 - 1 - \eps^2\Delta/2\ ,
\end{align*}
where the last inequality follows because given $\mathcal{E}$ the maximal degree is $(1+\epsilon^2)\Delta/2$. Using $2\eps\Delta \le m-\Delta$ and $\Delta \geq c\log n$, we get that $1<\eps^2\Delta/2\le (\eps/2)\cdot (m-\Delta)/2$; hence, $1 + \eps^2\Delta/2 \leq \eps(m-\Delta)/2$, and we get the claimed $(1-\epsilon)/2$-approximation:
\[
\mexp{\varphi(S)}\ge (m-\Delta)/2 - \eps(m-\Delta)/2 \ge (1-\eps)(m-\Delta)/2\ .
\]

\textbf{Case 2.2:
$m\le (1+2\epsilon)\Delta$.} It follows that $m < 2\Delta - 1$, therefore there is a unique vertex $v$ with $d(v)=\Delta$. Let $E'$ be the set of edges not adjacent to $v$. By the assumption, $|E'|\le 2\epsilon\Delta$; hence, the set $V'$ of vertices adjacent to an edge in $E'$ has size  $|V'|\le 2|E'|\le 4\eps\Delta$. Note that for every $w\notin V'$, $d(w) = 1$. Let $\mathcal{E}'$ be the event that $X=d_{S-V',G-V'}(v)>2\epsilon\Delta$ (note that $X$ is a sum of independent Bernoulli random variables, one for each adjacent edge of $v$ in $G-V'$).  Since $d_{G-V'}(v)>(1-4\epsilon)\Delta$, $\mexp{X} > (1-4\epsilon)\Delta/2$. First we show that $\mpr{\mathcal{E}'}\ge 1-n^{-2}$. Using $\epsilon<1/16$ and a Chernoff bound, we have that 
\begin{align*}
    \mpr{X \le 2\epsilon\Delta}\le &\mpr{X \le \frac{4\epsilon}{1-4\epsilon}\mexp{X}} \\
    \le &\mpr{X \le (1/3)\cdot \mexp{X}} \\
    \le& \exp\left(-(2/3)^2\mexp{X}/2\right)\\ 
    \le & \exp\left(-(1-4\epsilon)\Delta/9)\right)\le \exp(-\Delta/12)\ .
\end{align*}
Using $\Delta\ge c\log n$ with $c\ge 24$, we get that $\mpr{X \le 2\epsilon\Delta} \leq n^{-2}$, thus $\mpr{\mathcal{E}'} \ge 1-n^{-2}$.
Note that given $\mathcal{E'}$, $d_{S,G}(v)>2\epsilon\Delta$, while $d_{S,G}(u)\le 2\epsilon\Delta$, for every $u\in V-v$; hence $v$ is a critical vertex for $S$, i.e., 
$\varphi(S)=C_{S-v,G-v}$. Hence, we have:
\begin{align*}
\mexp{\varphi(S)}\ge \mexp{\varphi(S)\mid \mathcal{E}'}\cdot \mpr{\mathcal{E}'}&=\mexp{C_{S-v,G-v}\mid \mathcal{E}'}\cdot \mpr{\mathcal{E}'}\\
&=\mexp{C_{S-v,G-v}}\cdot \mpr{\mathcal{E}'}\\
&\ge(1-n^{-2})\cdot (m-\Delta)/2\ge (1-\epsilon)(m-\Delta)/2\ ,
\end{align*}
where to get the second row, we use the fact that $C_{S-v,G-v}$ is independent of $\mathcal{E}'$, since $\mathcal{E}'$ only conditions on vertices that can never contribute to the cut $S-v$ in $G-v$, because all of their edges are adjacent to $v$. This completes the proof.
\end{proof}

\begin{remark}Note that as opposed to \mc, a random cut gives a $(1/2-\epsilon)$-approximation for large $n$ only. For example, we show that for an unweighted $4$-cycle, it holds that $\mexp{\varphi(S)} = 1/4\cdot \varphi(S^*)$, where  $S^*$ is an optimal \ftc.

Let $v_1,v_2,v_3,v_4,v_1$ be the 4-cycle. Overall, it has 16 cuts, so the probability of each cut being output by the algorithm is $1/16$. For every cut $S$ such that $|S|\in \set{0,1,3,4}$ (there are 10 such cuts), it holds that $\varphi(S) = 0$, since when $|S|\in \set{0,4}$ it holds that $C_S = 0$, and when $|S|\in \set{1, 3}$, all the crossing edges are adjacent to one vertex, therefore when that vertex fails, no crossing edges remain. For $|S|=2$, we have $6$ options. When $S$ consists of two adjacent vertices (there are four options), it holds that $\varphi(S) = 1$. When $S$ is $\set{v_1, v_3}$ or $\set{v_2, v_4}$, it holds that $\varphi(S) = 2$.
Altogether, we get that\[\mexp{\varphi(S)} = 0\cdot 10/16 + 1\cdot 4/16 + 2\cdot 2/16 = 1/2.\]
However, an optimal \ftc is for example $S^* = \set{v_1, v_3}$, for which $\varphi(S^*) = 2$; hence, $\mexp{S} = 1/4\cdot \varphi(S^*)$.
\end{remark}

\subsection{Oblivious Adversary, Weighted Instances, and Many Faults}

Here we show that the uniform distribution (output every cut with probability $1/2^n$) is a $1/2$-approximation for \koblftc, for every $k$ (not necessarily constant). 

\begin{theorem}\label{thm:obl_random_cut_k}
Let $G=(V,E,w)$ be a weighted graph and $k\in \mathbb{N}$ be a number. Let $\mathcal{U}$ be the uniform distribution over all cuts in $G$. Then, $\mathcal{U}$ is a $1/2$-approximation for \kftc against an oblivious adversary. 
\end{theorem}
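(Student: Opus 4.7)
The plan is to show that the uniform distribution's $k$-FT value is exactly half the total remaining edge weight against any fault set, while any distribution's $k$-FT value is upper-bounded by the total remaining edge weight. For a subset $F \subseteq V$ with $|F|=k$, let $W_F = \sum_{e \in E,\, e \cap F = \emptyset} w_e$ denote the total weight of edges surviving in $G - F$. The trivial upper bound $C_{S-F,G-F} \le W_F$ holds for every cut $S$, so for \emph{any} distribution $\mathcal{D}$ over cuts, $\mathbb{E}_{S \sim \mathcal{D}}[C_{S-F,G-F}] \le W_F$. In particular, $\mu(\mathcal{D}^*, k, G) \le \min_{F \in \binom{V}{k}} W_F$ for an optimal \koblftc $\mathcal{D}^*$.

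Next I would analyze the uniform distribution $\mathcal{U}$. The key observation is that for a random cut $S \sim \mathcal{U}$ (each vertex placed in $S$ independently with probability $1/2$), every fixed edge $e = \{u,v\}$ satisfies $\Pr[e \in \delta(S)] = 1/2$, since this only depends on whether $u,v$ are assigned to the same side. Applied to the edges of $G - F$ (which are exactly the edges of $E$ disjoint from $F$), linearity of expectation gives
\[
\mathbb{E}_{S \sim \mathcal{U}}[C_{S-F, G-F}] = \sum_{\substack{e \in E \\ e \cap F = \emptyset}} w_e \cdot \Pr[e \in \delta(S)] = W_F/2.
\]
Here I am using that $C_{S-F,G-F}$ counts precisely those edges of $E$ that both avoid $F$ and cross $S$ (equivalently, cross $S - F$), and that the crossing event for each such edge depends only on the coin flips at its two endpoints, which lie outside $F$.

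Combining the two bounds, for every $F \in \binom{V}{k}$, $\mathbb{E}_{S \sim \mathcal{U}}[C_{S-F,G-F}] = W_F/2 \ge (1/2)\, \mathbb{E}_{S \sim \mathcal{D}^*}[C_{S-F,G-F}]$, and taking the minimum over $F$ yields $\mu(\mathcal{U}, k, G) \ge (1/2)\, \mu(\mathcal{D}^*, k, G)$, which is the claimed approximation. There is essentially no obstacle here; the only thing to be slightly careful about is the exact interpretation of $C_{S-F,G-F}$ (edges of $G-F$ crossing the restricted cut), so that the identification with ``edges disjoint from $F$ that cross $S$'' is unambiguous, and that the trivial upper bound $W_F$ is indeed attained by summing weights of all such surviving edges.
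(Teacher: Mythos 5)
Your proposal is correct and matches the paper's argument: both rest on the two facts that $\mathbb{E}_{S\sim\mathcal{U}}[C_{S-F,G-F}]$ equals half the surviving edge weight $W_F = m - d(F)$ (each surviving edge crosses with probability $1/2$), and that $W_F$ trivially upper-bounds $\mathbb{E}_{S\sim\mathcal{D}}[C_{S-F,G-F}]$ for any distribution $\mathcal{D}$. The only cosmetic difference is that you compare the two bounds pointwise in $F$ before taking the minimum, while the paper phrases the upper bound via the single worst-case set $F$ with $d(F)=\Delta_k$; these are equivalent.
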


Let $\Delta_k$ denote the maximum degree (total weight of adjacent edges) of a subset of size $k$, that is, $\Delta_k = \max_{F\in \binom{V}{k}}d(F)$.

Before proving the theorem, we make two observations: 1. failing a subset $F$ removes $d_S(F)$ edges from a cut, and 2. there is a subset $F$ whose removal always limits the remaining cut size by $m-\Delta_k$. 
 
\begin{observation}\label{obs:obl_adversary_removal_k}
For every cut $S\subseteq V$ and subset $F\in \binom{V}{k}$, it holds that $C_{S-F,G-F}=C_S-d_S(F)$.
\end{observation}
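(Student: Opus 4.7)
The plan is to unroll the definitions and verify the identity by partitioning the set of crossing edges according to whether they survive the deletion of $F$. First, I would carefully describe which edges of the remaining graph $G-F$ cross the cut $S-F$: an edge $e=\{u,v\}$ lies in $\delta(S-F, G-F)$ exactly when both endpoints survive (i.e.\ $e\cap F=\emptyset$) and exactly one endpoint of $e$ lies in $S\setminus F$. Since under $e\cap F=\emptyset$ we have $u\in S\setminus F \Leftrightarrow u\in S$ (and likewise for $v$), the condition $|e\cap(S-F)|=1$ is equivalent to $|e\cap S|=1$. Hence $\delta(S-F,G-F)=\{e\in\delta(S):e\cap F=\emptyset\}$.

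Next, I would split the sum defining $C_S$ according to whether an edge of $\delta(S)$ meets $F$ or not. This gives the disjoint decomposition
\[
C_S=\sum_{e\in\delta(S)}w_e=\sum_{\substack{e\in\delta(S)\\ e\cap F=\emptyset}}w_e+\sum_{\substack{e\in\delta(S)\\ e\cap F\neq\emptyset}}w_e.
\]
By the characterization from the previous paragraph, the first sum equals $C_{S-F,G-F}$, while by definition the second sum equals $d_S(F)$. Rearranging yields $C_{S-F,G-F}=C_S-d_S(F)$, which is the claim.

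The statement is essentially a definition-chase, so there is no substantive obstacle. The only point that deserves care is the equivalence $|e\cap(S\setminus F)|=1\Leftrightarrow|e\cap S|=1$; this equivalence requires the assumption that neither endpoint of $e$ lies in $F$, which is exactly the hypothesis that $e$ survives in $G-F$. Once that equivalence is established, the rest is a single application of additivity of edge weights over a disjoint union.
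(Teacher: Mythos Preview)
Your proof is correct. The paper does not supply a proof for this observation at all---it treats the identity as immediate from the definitions---so your careful definition-chase (characterizing $\delta(S-F,G-F)$ as the crossing edges of $S$ disjoint from $F$, then splitting $C_S$ accordingly) is exactly the natural verification that the paper leaves to the reader.
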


\begin{observation}\label{obs:best_dist_obl_k}
There is a subset $F\in \binom{V}{k}$ such that for every distribution $\mathcal{D}$, it holds that $\mexpl{{S \sim \mathcal{D}}}{C_{S\sm F, G\sm F}} \leq m-\Delta_k.$
\end{observation}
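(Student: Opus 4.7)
The plan is to exhibit an explicit choice of $F$ that works uniformly against every distribution. A natural candidate is the maximizer in the definition of $\Delta_k$: let $F \in \binom{V}{k}$ be a $k$-subset achieving $d(F) = \Delta_k$ (one exists by definition, since the max is over a finite collection). The intuition is that the adversary, being oblivious, is always free to commit in advance to the heaviest possible $k$-subset, and once those edges are deleted there is simply not enough total edge-weight left for any cut, random or otherwise, to exceed $m - \Delta_k$.

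With $F$ chosen, I would argue as follows. The total weight of edges remaining in $G - F$ is exactly $\sum_{e\in E}w_e - d(F) = m - \Delta_k$, because $d(F)$ is by definition the total weight of edges incident to $F$, and these are precisely the edges removed in going from $G$ to $G-F$. For any cut $S \subseteq V$, the cut-weight $C_{S-F, G-F}$ is at most the total edge weight of $G-F$, so $C_{S-F, G-F} \leq m - \Delta_k$ deterministically (this is also consistent with Observation~\ref{obs:obl_adversary_removal_k}, which gives $C_{S-F,G-F}=C_S-d_S(F)\le m - d_S(F)$, but the bound via the residual edge-set is cleaner here since it avoids having to reason about $d_S(F)$).

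Finally, take expectation over $S \sim \mathcal{D}$: since the pointwise bound $C_{S-F,G-F}\leq m-\Delta_k$ holds for every $S$ in the support of $\mathcal{D}$, linearity (or monotonicity) of expectation yields $\mexpl{S\sim\mathcal{D}}{C_{S-F,G-F}}\leq m-\Delta_k$, as required.

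There is essentially no technical obstacle; the only subtlety worth emphasizing is the order of quantifiers. The subset $F$ is chosen once, before seeing $\mathcal{D}$, and works against \emph{all} distributions simultaneously. This is precisely what makes the statement compatible with the oblivious-adversary model: the bound on the right-hand side of the obliviousness LP follows from a single deterministic choice, not a distribution-dependent one, which is exactly what will be used in the subsequent $1/2$-approximation argument for Theorem~\ref{thm:obl_random_cut_k}.
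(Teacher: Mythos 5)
Your proposal is correct and matches the paper's own proof essentially verbatim: both choose $F$ to be a maximizer of $d(F)=\Delta_k$, observe that the total remaining edge-weight in $G-F$ is $m-\Delta_k$ so that $C_{S-F,G-F}\le m-\Delta_k$ pointwise for every cut $S$, and then pass to the expectation. No issues.
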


\begin{proof}
For every cut $S\subseteq V$ and a subset $F\in \binom{V}{k}$ with $d(F)=\Delta_k$, it holds that $C_{S-F,G-F}\le m-\Delta_k$, since the right side is the total edge-weight in $G-F$. Thus, $\mexpl{{S \sim \mathcal{D}}}{C_{S\sm F, G\sm F}} \leq\max_S C_{S\sm F, G\sm F}\le m-\Delta_k$, as claimed.
\end{proof}

\begin{proof}[Proof of Theorem~\ref{thm:obl_random_cut_k}] Let $F\in \binom{V}{k}$ be  arbitrary. Note that for $S\sim\mathcal{U}$, every edge $e\in E$ crosses $S$ with probability $1/2$; hence, by linearity of expectation, it holds that $\mexpl{{S \sim \mathcal{U}}}{d_S(F)} = d(F)/2$, and $\mexpl{{S \sim \mathcal{U}}}{C_S} = m/2$. 
By Observation~\ref{obs:obl_adversary_removal_k}, we have 
\[
\mexpl{{S \sim \mathcal{U}}}{C_{S\sm F, G\sm F}} =\mexpl{{S \sim \mathcal{U}}}{C_{S}}-\mexpl{{S \sim \mathcal{U}}}{d_{S}(F)}=(m-d(F))/2\ge  (m-\Delta_k)/2\ ,
\]
since, by definition, $d(F)\le \Delta_k$. Since this holds for every $F\in \binom{V}{k}$, Observation~\ref{obs:best_dist_obl_k} implies that $\mathcal{U}$ is a $1/2$-approximation, as claimed.
\end{proof}

\section{Discussion}
Our work leaves several open questions regarding fault tolerant \mc.
An immediate question is to bridge the (rather small) gap between our approximation of $(0.8780-\epsilon)$ and our hardness of $\agw$ for \kftc.

The central bottleneck is that \smc, a main ingredient in our algorithm, has  hardness of approximation that is slightly below $\agw$ and equals $ (\agw - \delta)$ (where $ \delta \geq 10^{-5}$) \cite{DBLP:conf/coco/BhangaleK20}.
Thus, either one finds a different algorithm for \kftc that does not rely on \smc and achieves an approximation of $\agw$, or one can extend the hardness result of \cite{DBLP:conf/coco/BhangaleK20} to \kftc and thus rule out an approximation of $\agw$ for \kftc. Another question is what approximation factors can be obtained for  \ftc on general weighted graphs. 

Another interesting question is how to deal with a non-constant number of faults, for both of the adversaries. Since the number of all possible cases of failure is not polynomial, a new approach may be needed. There are techniques that are used to deal with a non-constant number of faults, e.g., failure sampling, that is presented in \cite{DBLP:conf/stoc/DinitzK11}. It would be interesting to see whether these techniques can be used for fault tolerant \mc as well.

One more important and intriguing open question is what happens in other fault tolerant problems when an oblivious adversary is considered. We are unaware of previous algorithms for an oblivious adversary in the fault-tolerance literature. Since an oblivious adversary is arguably more realistic in its nature, and since it is likely that one can get improved algorithms for this case, pursuing this line of research could be crucial for many additional fundamental  problems involving fault tolerance.

\end{document}